\title{Weak Strategyproofness in Randomized Social Choice}
\author {
    Felix Brandt\textsuperscript{\rm 1},
    Patrick Lederer\textsuperscript{\rm 2}
}
\newcounter{remarkcount}
\newcommand{\remark}{\stepcounter{remarkcount}\paragraph{Remark \theremarkcount.}}
    \def\multiset#1#2{\ensuremath{\left(\kern-.3em\left(\genfrac{}{}{0pt}{}{#1}{#2}\right)\kern-.3em\right)}}
	\theoremstyle{definition}
	\theoremstyle{plain}
	\newtheorem{theorem}{Theorem}
	\newtheorem{lemma}{Lemma}
	\newtheorem{proposition}{Proposition}
\renewcommand{\todo}[1]{} 
\newcolumntype{L}[1]{>{\raggedright\let\newline\\\arraybackslash\hspace{0pt}}m{#1}}
\newcolumntype{K}[1]{>{\raggedright\let\newline\\\arraybackslash\hspace{0pt}}p{#1}}
\newcolumntype{C}[1]{>{\centering\let\newline\\\arraybackslash\hspace{0pt}}m{#1}}
\newcolumntype{R}[1]{>{\raggedleft\let\newline\\\arraybackslash\hspace{0pt}}m{#1}}
\renewcommand{\arraystretch}{1.2}
\newenvironment{profile}{\medmuskip=0mu\relax
	\thickmuskip=1mu\relax
	\tabular}{\endtabular\smallskip}
\newcommand{\profilewidth}{\columnwidth}
\begin{document}
\maketitle

\begin{abstract}
    An important---but very demanding---property in collective decision-making is strategyproofness, which requires that voters cannot benefit from submitting insincere preferences. \citet{Gibb77a} has shown that only rather unattractive rules are strategyproof, even when allowing for randomization. However, Gibbard's theorem is based on a rather strong interpretation of strategyproofness, which deems a manipulation successful if it increases the voter's expected utility for \emph{at least one} utility function consistent with his ordinal preferences. In this paper, we study weak strategyproofness, which deems a manipulation successful if it increases the voter's expected utility for \emph{all} utility functions consistent with his ordinal preferences. We show how to systematically design attractive, weakly strategyproof social decision schemes (SDSs) and explore their limitations for both strict and weak preferences. In particular, for strict preferences, we show that there are weakly strategyproof SDSs that are either \emph{ex post} efficient or Condorcet-consistent, while neither even-chance SDSs nor pairwise SDSs satisfy both properties and weak strategyproofness at the same time. By contrast, for the case of weak preferences, we discuss two sweeping impossibility results that preclude the existence of appealing weakly strategyproof SDSs. 
\end{abstract}

\section{Introduction}

Any mechanism that relies on the private information of agents should incentivize agents to report their private information truthfully. However, designing mechanisms that satisfy this property---known as \emph{strategyproofness}---is a challenging task in many domains of economic interest. This is particularly true for the field of social choice, which studies voting rules that aggregate the voters' preferences into a collective decision: a seminal result by \citet{Gibb73a} and \citet{Satt75a} shows that voters can benefit by lying about their true preferences in \emph{any} reasonable deterministic voting rule. 
Early hopes that more positive results can be achieved for randomized voting rules were shattered by \citet{Gibb77a}. \citeauthor{Gibb77a} considered social decision schemes (SDSs), which return a probability distribution for each profile of individual preferences, and the final winner will be chosen by chance according to this distribution.

In particular, \citet{Gibb77a} has shown that the only strategyproof and \emph{ex post} efficient SDSs are \emph{random dictatorships}, which choose each voter with a fixed probability and implement this voter's favorite alternative as the winner of the election. While this result allows for more rules than the Gibbard-Satterthwaite theorem, random dictatorships suffer from a large variance and fail to identify good compromise alternatives.
The latter observation is related to the fact that random dictatorships violate \emph{Condorcet-consistency}, i.e., they may fail to select an alternative that beats all other alternatives in a pairwise majority {comparison.}

Like all results on strategyproof SDSs, Gibbard's random dictatorship theorem crucially hinges on the exact definition of strategyproofness, which, in turn, depends on the assumptions of the voters' preferences over lotteries. \citet{Gibb77a} postulates that a voter prefers one lottery to another if the former yields at least as much expected utility as the latter for every utility function that is consistent with his true preferences. Then, an SDS is called strategyproof if voters always prefer the outcome when voting truthfully to every outcome they could obtain by lying about their true preferences. This strategyproofness notion, which we will call \emph{strong strategyproofness}, is predominant in the literature \citep[e.g.,][]{Barb79a,Proc10a,BLR23a} because it guarantees that voters cannot manipulate regardless of their exact utility functions. However, as demonstrated by the random dictatorship theorem, strong strategyproofness mainly leads to negative results.

In this paper, we will thus study a weaker notion of strategyproofness, first considered by \citet{PoSc86a} and later popularized by \citet{BoMo01a} in the context of random assignment. To this end, we observe that the preferences over lotteries defined by \citet{Gibb77a} are incomplete because there are lotteries such that a voter prefers neither of them to the other. This view results in two ways to define strategyproofness, depending on how we interpret incomparable lotteries. Strong strategyproofness, as defined by \citet{Gibb77a}, views a deviation to an incomparable lottery as a successful manipulation. By contrast, we only consider deviations to comparable lotteries as successful and deem all others as unsuccessful. 
The resulting strategyproofness notion is called \emph{weak strategyproofness} and requires that voters cannot obtain a strictly preferred lottery by lying about their true preferences.

\paragraph{Our contribution.} We improve our understanding of weak strategyproofness by contributing various positive and negative results. A summary of these results is given in \Cref{tab:resultssummary}.

In our first theorem, we introduce the large class of \emph{score-based} SDSs and show that all SDSs within this class are weakly strategyproof. This result allows, e.g., to construct appealing weakly strategyproof SDSs that satisfy Condorcet-consistency or \emph{ex post} efficiency, or that approximate deterministic voting rules arbitrarily close. Both of these objectives are impossible to achieve for strong strategyproofness \citep{Proc10a,BLR23a}. Secondly, we present characterizations of weakly strategyproof tops-only SDSs, which only have access to the voters' favorite alternatives. In this context, we also cover the design of weakly strategyproof \emph{even-chance} SDSs, which always return a uniform lottery over some subset of alternatives. It has often been argued that such SDSs are more acceptable because uniform lotteries are easier to grasp cognitively and to implement in practice \citep[e.g.,][]{Fish72a,Gard79a,BSS19a}. However, no attractive even-chance SDS satisfies strong strategyproofness.

We also analyze the limitations of weakly strategyproof SDSs. In particular, for strict preferences, we show that no weakly strategyproof, even-chance SDS simultaneously satisfies \emph{ex post} efficiency and Condorcet-consistency, and that no weakly strategyproof, pairwise, and neutral SDS (which can only access the pairwise majority comparisons between alternatives) satisfies \emph{ex post} efficiency. These results indicate that Condorcet-consistency and \emph{ex post} efficiency may be incompatible for \emph{all} weakly strategyproof SDSs.
However, such a result seems very difficult to obtain as, for every $\varepsilon>0$, there are weakly strategyproof and Condorcet-consistent SDSs that assign at most $\varepsilon$ probability to Pareto-dominated alternatives.

Finally, we also consider weakly strategyproof SDSs for weak preferences. In this setting, we provide the first easily verifiable proof of an important impossibility theorem by \citet{BBEG16a}, showing that no weakly strategyproof SDS simultaneously satisfies anonymity, neutrality, and \emph{ex ante} efficiency. When restricting attention to even-chance SDSs, we strengthen this result by proving that all weakly strategyproof and \emph{ex post} efficient SDSs always randomize over the favorite alternatives of at most two fixed voters. These results show that, if we allow weak preferences, even mild forms of strategyproofness preclude attractive SDSs.

\paragraph{Related work.}

Studying weaker forms of strategyproofness is an active area in social choice theory \citep[e.g.,][]{BoMo01a,Balb16a,ABBB15a,BBEG16a,Lede21c,MeSe21a}. 
Unfortunately, in the realm of voting, this approach mainly led to strengthened impossibility results: for instance, \citet{Lede21c} studies a strategyproofness notion that lies logically between weak and strong strategyproofness and shows that it is still incompatible with Condorcet-consistency. 
Despite these negative results, weak strategyproofness has received little attention in social choice theory. In particular, only few SDSs, such as the Condorcet rule \citep{PoSc86a} or the egalitarian simultaneous reservation rule by \citet{AzSt14a}, are known to be weakly strategyproof. In more recent works, weak strategyproofness was used to prove impossibility theorems for the case of weak preferences \citep{ABBB15a,BBS15b,BBEG16a}. This approach culminated in a sweeping impossibility theorem for weak preferences: no weakly strategyproof SDS satisfies anonymity, neutrality, and \emph{ex ante} efficiency \citep{BBEG16a}.

Some of our results can also be compared to results for set-valued voting rules (which return sets of winning alternatives instead of lotteries). In particular, even-chance SDSs can be interpreted as set-valued voting rules and weak strategyproofness then translates to a strategyproofness notion called even-chance strategyproofness \citep[e.g.,][]{Gard79a,BSS19a}. This strategyproofness notion is slightly stronger than commonly considered set-valued strategyproofness notions such as Fishburn-strategyproofness \citep{Fish72a} or Kelly-strategyproofness \citep{Kell77a}, and our paper is thus related to recent work on set-valued voting rules \citep[e.g.,][]{BoEn21a,BSS19a,BrLe21a}. 

\section{Preliminaries}

Let $N=\{1,\dots, n\}$ denote a set of $n$ voters and let $A=\{a_1,\dots, a_m\}$ denote a set of $m\geq 2$ alternatives. Every voter $i\in N$ reports a (weak) \emph{preference relation} $\succsim_i$, which is a complete and transitive binary relation on $A$. The strict part of $\succsim_i$ is denoted by $\succ_i$ (i.e., $x\succ_i y$ iff $x\succsim_i y$ and not $y\succsim_i x$) and the indifference part by $\sim_i$ (i.e., $x\sim_i y$ iff $x\succsim_i y$ and $y\succsim_i x$). A preference relation $\succsim_i$ is called \emph{strict} if its irreflexive part coincides with ${\succ_i}$. We represent preference relations by comma-separated lists, where brackets indicate that a voter is indifferent between some alternatives. For instance, $a, \{b,c\}$ denotes that the considered voter prefers $a$ to both $b$ and $c$ and is indifferent between the latter two alternatives. We denote the set of all strict preference relations by $\mathcal{L}$ and the set of all weak preference relations by~$\mathcal{R}$.

A (weak) \emph{preference profile $R=(\succsim_1,\dots, \succsim_n)$} is a vector that specifies the preference relations $\succsim_i$ of all voters $i\in N$. A preference profile is \emph{strict} if the preference relations of all voters are strict. 
The set of all strict preference profiles is given by $\mathcal{L}^N$, and the set of all weak preference profiles is $\mathcal{R}^N$. 
We represent preference profiles as collections of preference relations, where the set of voters that report a preference relation is stated directly before the preference relation. For instance, $\{1,2,3\}: a,b,c$ means that voters $1$, $2$, and $3$ prefer $a$ to $b$ to $c$. To improve readability, we omit curly brackets for singleton sets. 

The main objects of study in this paper are social decision schemes, which intuitively are voting rules that may use chance to determine the winner of an election. To formalize this, we define \emph{lotteries} as probability distributions over the alternatives, i.e., a lottery $p$ is a function of the type $A\rightarrow [0,1]$ such that $\sum_{x\in A} p(x)=1$. Moreover, by $\Delta(A)$ we denote the set of all lotteries over $A$. Then, a \emph{social decision scheme (SDS)} on a domain $\mathcal{D}\in\{\mathcal{L}^N, \mathcal{R}^N\}$ is a function that maps every preference profile $R\in\mathcal{D}$ to a lottery $p\in \Delta(A)$. We denote by $f(R,x)$ the probability that the SDS $f$ assigns to alternative $x$ in the profile $R$ and extend this notion to sets of alternatives $X$ by defining $f(R,X)=\sum_{x\in X} f(R,x)$.

We will sometimes restrict our attention to even-chance SDSs. These SDSs pick a set of alternatives and randomize uniformly over these alternatives. Formally, an SDS $f$ is \emph{even-chance} if it chooses for every profile $R$ a non-empty set of alternatives $X$ such that $f(R,x)=\frac{1}{|X|}$ if $x\in X$ and $f(R,x)=0$ otherwise. Even-chance lotteries are appealing because of their simplicity and because non-uniform randomization may be difficult to implement in the real world. On top of that, even-chance SDSs can naturally be interpreted as set-valued voting rules,
which have been studied in detail in the social choice community.

\subsection{Strategyproofness}

The central axiom for our analysis is strategyproofness which demands that voters cannot benefit by lying about their true preferences. To define this axiom for SDSs, we need to specify how voters compare lotteries over the alternatives. We assume for this that voters have utility functions $u_i:A\rightarrow\mathbb{R}$ and aim to maximize their expected utility. However, the exact utility functions are not known as voters only reveal their ordinal preferences over alternatives.
We will thus quantify over all utility functions $u_i$ that are consistent with the voter's preference relation $\succsim_i$, i.e., that satisfy that $u_i(x)\geq u_i(y)$ if and only if $x\succsim_i y$ for all $x,y\in A$. A voter then prefers lottery $p$ to lottery $q$, denoted by $p\succsim_i q$, if $p$ guarantees him at least as much expected utility as $q$ for \emph{every} utility function $u_i$ that is consistent with his preference relation, i.e., if $\mathbb{E}[u_i(p)]\geq \mathbb{E}[u_i(q)]$ for all consistent utility functions $u_i$. 
Alternatively, this lottery extension can also be defined based on stochastic dominance. To state this definition, we let the \emph{upper contour set}  $U({\succsim_i}, x)=\{y\in A\colon y\succsim_i x\}$ of $x$ denote the set of alternatives that voter $i$ weakly prefers to $x$. It then holds that $p\succsim_i q$ if and only if $p(U({\succsim_i}, x))\geq q(U({\succsim_i}, x))$ for all $x\in A$ 
\citep[see, e.g.,][]{Sen11a,BBEG16a}.

Importantly, the voters' preferences over lotteries, as defined above, are incomplete, i.e., there are lotteries $p$ and $q$ and a preference relation $\succsim_i$ such that neither $p\succsim_i q$ nor $q\succsim_i p$. For example, for the preference relation $a,b,c$, the lotteries $p$ and $q$ defined by $p(a)=p(b)=p(c)=\nicefrac{1}{3}$ and $q(b)=1$ are incomparable as neither of them stochastically dominates the other. Consequently, there are two ways to define strategyproofness depending on how we handle incomparable lotteries. The approach suggested by \citet{Gibb77a} counts a deviation to an incomparable lottery as a successful manipulation and strategyproofness hence prohibits such deviations. This results in \emph{strong strategyproofness}, which requires of an SDS $f$ that $f(R)\succsim_i f(R')$ for all profiles $R,R'$ and voters $i\in N$ such that ${\succsim_j}={\succsim_j'}$ for all $j\in N\setminus \{i\}$. By contrast, we will not count the deviation to an incomparable lottery as a successful manipulation. This leads to a weaker form of strategyproofness: an SDS $f$ is \emph{weakly strategyproof} if $f(R')\not\succ_i f(R)$ for all profiles $R$, $R'$ and voters $i\in N$ such that ${\succsim_j}={\succsim_j}'$ for all $j\in N\setminus \{i\}$. 

\subsection{Further Axioms}

We conclude this section by stating four standard axioms. 

\paragraph{Anonymity.} Anonymity is a basic fairness property that states that the identities of the voters should not matter. Formally, an SDS $f$ is \emph{anonymous} if $f(R)=f(\pi(R))$ for all permutations $\pi:N\rightarrow N$ and profiles $R$, where the profile $R'=\pi(R)$ is  given by ${\succsim'_{\pi(i)}}={\succsim_i}$ for all voters $i\in N$.

\paragraph{Neutrality.} Similar to anonymity, neutrality is a fairness property that requires that alternatives are treated equally. In more detail, an SDS $f$ is \emph{neutral} if $f(\tau(R), \tau(x))=f(R,x)$ for all permutations $\tau:A\rightarrow A$ and profiles $R$. This time, the profile $R'=\tau(R)$ is defined by $\tau(x)\succsim_i' \tau(y)$ if and only if $x\succsim_i y$ for all $x,y\in A$ and $i\in N$.

\paragraph{\emph{Ex post} efficiency.} \emph{Ex post} efficiency postulates that alternatives should have no chance of being selected if there is another alternative that makes at least one voter better off without making any other voter worse off. To this end, we say an alternative $x$ \emph{Pareto-dominates} another alternative $y$ in a profile $R$ if $x\succsim_i y$ for all voters $i\in N$ and $x\succ_i y$ for some $i\in N$. 
Conversely, an alternative is \emph{Pareto-optimal} in a profile $R$ if it is not Pareto-dominated by any other alternative. Finally, an SDS $f$ is \emph{ex post efficient} if $f(R,x)=0$ for all profiles $R$ and Pareto-dominated alternative $x$.

\paragraph{Condorcet-consistency.} Condorcet-consistency demands that a Condorcet winner, an alternative that beats every other alternative in a pairwise majority comparison, should be selected with probability $1$ whenever it exists. To formalize this, we define the \emph{majority relation $\succsim_M$} of a profile $R$ by $x\succsim_M y$ if and only if $|\{i\in N\colon x\succ_i y\}|\geq |\{i\in N\colon y\succ_i x\}|$ for all $x,y\in A$. Moreover, $\succ_M$ denotes the strict part of $\succsim_M$ and $\sim_M$ its indifferent part. Then, an alternative $x$ is a \emph{Condorcet winner} in a profile $R$ if $x\succ_M y$ for all $y\in A\setminus \{x\}$, and an SDS $f$ is \emph{Condorcet-consistent} if $f(R,x)=1$ whenever $x$ is the Condorcet winner in $R$.

\section{Results}

We are now ready to state our results. We first present theorems that allow the design attractive weakly strategyproof SDSs (\Cref{subsec:strictprefpos}), and then discuss the limitations of weakly strategyproof SDSs (\Cref{subsec:strictprefneg,subsec:weakpref}). Due to space constraints, we defer most of our proofs to the appendix and present proof sketches instead.

\subsection{Possibility Theorems for Strict Preferences}\label{subsec:strictprefpos}

In this section, we will show how to design weakly strategyproof SDSs when voters have strict preferences. In more detail, we will first present a large class of weakly strategyproof SDSs (cf. \Cref{thm:profscoring}) and then give two characterizations of weakly strategyproof SDSs that only depend on the voters' favorite alternatives (cf. \Cref{thm:topsonly}). 

We first introduce the class of score-based SDSs and show that all these SDSs are weakly strategyproof. For this, let $R^{i:yx}$ be the profile derived from another profile $R$ by only changing voter $i$'s preferences from $x\succ_iy$ to~${y\succ_i^{i:yx} x}$. In particular, this requires that there is no $z\in A$ with $x\succ_i z\succ_i y$. Next, a function $s:\mathcal{L}^N\times A\rightarrow \mathbb{R}_{\geq 0}\cup \{\infty\}$ is a \emph{score function} if it satisfies for all profiles $R\in\mathcal{L}^N$, distinct alternatives $x,y,z\in A$, and voters $i\in N$ that 

\begin{itemize}[leftmargin=*,topsep=3pt,itemsep=0pt]
\item $s(R,x)=\infty$ implies $s(R,y)\neq\infty$ (at most one infinity),
\item $s(R,z)=s(R^{i:yx},z)$ (localizedness),
\item $s(R,y)\leq s(R^{i:yx},y)$ (monotonicity), and
\item $s(R,y)=s(R^{i:yx},y)$ implies $s(R,x)=s(R^{i:yx},x)$ unless $s(R,y)=\infty$, or $s(R,x)=\infty$ and $s(R,y)>0$ (balancedness).
\end{itemize}

We note that a score function can assign a score of infinity to at most one alternative. We thus assume the usual arithmetic rules for infinity: for all $x\in\mathbb{R}$, it holds that $\infty>x$, $\infty+x=\infty$, $\frac{x}{\infty}=0$, and $\frac{\infty}{\infty}=1$. 
Finally, an SDS $f$ on $\mathcal{L}^N$ is \emph{score-based} if there is a score function $s$ such that $\sum_{y\in A} s(R,y)>0$ and $f(R,x)=\frac{s(R,x)}{\sum_{y\in A} s(R,y)}$ for all alternatives $x\in A$ and profiles $R\in\mathcal{L}^N$.

We will next discuss several examples of score-based SDSs to illustrate this class and its versatility. To this end, we first consider two classical score functions, namely the Copeland score function $s_{\mathit{C}}(R,x)={|\{y\in A\setminus \{x\}\colon x\succ_M y\}|}+\frac{1}{2}|\{y\in A\setminus \{x\}\colon x\sim_M y\}|$ and the plurality score function $s_{\mathit{P}}(R,x)=|\{i\in N\colon \forall y\in A\setminus \{x\} \colon x\succ_i y\}|$. Both of these functions are indeed score functions according to our definition and the corresponding SDSs are thus score-based. Moreover, for every strictly monotonically increasing function $g:\mathbb{R}_{\geq 0}\rightarrow\mathbb{R}_{\geq 0}$, it holds that $s_{\mathit{C}}^g(R,x)=g(s_{\mathit{C}}(R,x))$ and $s_{\mathit{P}}^g(R,x)=g(s_{\mathit{P}}(R,x))$ are also score functions. For example, this means that the SDSs defined by the functions $s^k_{\mathit{C}}$ and $s^k_{\mathit{P}}$, which take the $k$-th power of $s_{\mathit{C}}(R,x)$ and $s_{\mathit{P}}(R,x)$, are score-based. Even SDSs that seem rather unrelated to score functions belong to our class. For instance, the Condorcet rule (which chooses the Condorcet winner with probability $1$ whenever it exists and randomizes uniformly over all alternatives otherwise) is the score-based SDS defined by the score function $s$ with $s(R,x)=\infty$ if $x$ is the Condorcet winner in $R$ and $s(R,x)=1$ otherwise. Similarly, the function $s_{\mathit{C}}^{k,\mathit{CW}}$, which assigns a score of infinity to the Condorcet winner and otherwise coincides with $s_{\mathit{C}}^{k}$, satisfies all our conditions and is thus a score function.

We will now prove that all score-based SDSs are weakly strategyproof.

\begin{restatable}{theorem}{profscoring}
\label{thm:profscoring}
    Every score-based SDS on $\mathcal{L}^N$ satisfies weak strategyproofness.
\end{restatable}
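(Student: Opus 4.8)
The plan is to fix voter $i$'s (strict) true preference $\succsim_i$ and rule out that any misreport yields a strictly stochastically dominant lottery, i.e.\ that $f(R')\succ_i f(R)$, where $R$ is the truthful profile and $R'$ differs only in $i$'s report. Throughout I would use the upper-contour-set characterization of $\succsim_i$ stated in the preliminaries: writing $a_{(1)}\succ_i\dots\succ_i a_{(m)}$ for $i$'s true ranking, the upper contour sets are exactly the prefixes $U_j=\{a_{(1)},\dots,a_{(j)}\}$, so $p\succsim_i q$ iff $p(U_j)\ge q(U_j)$ for all $j$. Any report can be reached from $\succsim_i$ by adjacent transpositions, and I would orient the decomposition as a reverse bubble sort, so that every step promotes a truly less-preferred alternative over a truly more-preferred one. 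For each step $R\to R^{i:yx}$, localizedness fixes $s(\cdot,z)$ for $z\neq x,y$, monotonicity gives $s(R,y)\le s(R^{i:yx},y)$ and, applied to the reverse swap, $s(R^{i:yx},x)\le s(R,x)$, while applying balancedness to the swap and its reverse yields the key \emph{both-or-neither} property: the two swapped scores either both change strictly or both stay fixed.

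The obvious first step is a single-swap lemma. Computing $p(U_j)-q(U_j)$ for $p=f(R^{i:yx})$ and $q=f(R)$ in true order, one finds that the deviated lottery never \emph{strictly} dominates; the obstruction is that the normalizer $\sum_y s(\cdot,y)$ need not be preserved, so that, depending on the sign of its change, the topmost and bottommost upper sets $U_j$ move in opposite directions, leaving the two lotteries incomparable (and when the normalizer happens to be preserved the truthful side weakly dominates). The difficulty is that this does \emph{not} compose: stochastic-dominance incomparability is not transitive, so a chain of incomparable single steps could in principle accumulate into a net strict improvement, and one cannot conclude weak strategyproofness by chaining the single-swap lemma.

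To overcome this I would argue globally by contradiction, assuming $f(R')\succ_i f(R)$. First, extremal monotonicity gives clean inequalities for the true favorite $c=a_{(1)}$ and the true worst $d=a_{(m)}$: moving $c$ to the top only increases its score and moving $d$ to the bottom only decreases its, whence $s(R',c)\le s(R,c)$ and $s(R',d)\ge s(R,d)$. Combining these with the domination inequalities on the extreme upper sets $U_1=\{c\}$ and $A\setminus\{d\}$ forces the normalizers to coincide, $\sum_y s(R',y)=\sum_y s(R,y)$, and forces equality of the extreme probabilities. With a common normalizer the whole comparison collapses to the partial score-sums $\sum_{l\le j}s(\cdot,a_{(l)})$ in true order: domination now means these are pointwise at least as large under $R'$, with one strict inequality. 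I would then run a top-down (and symmetrically bottom-up) induction, using that $s(R',c)\le s(R,c)$ already forces equality on $U_1$ and that the both-or-neither property lets me propagate the equalities inward, peeling off $c$ and $d$ and reducing to the interior alternatives, until all partial sums are equal, i.e.\ $f(R')=f(R)$, contradicting strictness (with the degenerate cases $s(R,c)=0$ or $s(R,d)=0$ folded into the same peeling).

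The main obstacle is exactly this inward propagation: neutralizing the normalizer via the identity $\sum_y s(R',y)=\sum_y s(R,y)$ removes the denominator nuisance, but turning the equalities at the extremes into equality of \emph{all} partial score-sums is where balancedness does the real work, ruling out the score redistributions that a single monotone swap would otherwise permit. Finally, I would treat infinite scores separately using the stated arithmetic and the at-most-one-infinity condition: if the truthful score of $c$ is $\infty$ then $f(R,c)=1$, and no misreport can even weakly dominate on $U_1$ unless it too assigns $c$ probability one, so the infinite case reduces to the finite analysis on the remaining alternatives.
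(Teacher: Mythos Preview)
Your approach mirrors the paper's: swap decomposition, monotonicity at the extremes to pin down the normalizer comparison, and balancedness (your ``both-or-neither'') to propagate score equalities inward. The paper frames this as an explicit case split on whether $s_{\mathit{total}}(R)$ is less than, greater than, equal-and-finite, or equal-and-infinite compared to $s_{\mathit{total}}(R')$; your contradiction argument is essentially the contrapositive of the first two cases followed by the third.

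There is one genuine gap. Your treatment of infinite scores only covers $s(R,c)=\infty$, where $c=a_{(1)}$ is voter $i$'s true favorite. But the score function may assign $\infty$ to some \emph{non-top} alternative $x^*$ (think of the Condorcet rule when $i$'s favorite is not the Condorcet winner). Then $f(R,c)=0$ even though $s(R,c)>0$, so the domination inequality $f(R',c)\ge f(R,c)$ is vacuous and you cannot squeeze the normalizers together; nor does your peeling trigger, since you peel on $s(R,c)=0$, not on $f(R,c)=0$. The paper handles this with a separate argument (its Case~4): when both totals are infinite, one builds a specific swap path that first rearranges the upper and lower contour sets of $x^*$, then moves $x^*$ monotonically into position, and uses localizedness and monotonicity to show that the alternative receiving $\infty$ under $R'$ lies in $\{y:x^*\succ_i y\}\cup\{x^*\}$, so the deviation cannot strictly dominate. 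A smaller technical point in your equal-finite case: the both-or-neither property inherits the exceptions of balancedness, so it can fail at an intermediate step where a swapped score jumps to $\infty$; the paper circumvents this by fixing the swap sequence to be a bubble sort ordered by $\succsim_i'$ and arguing that once an intermediate score hits $\infty$ it persists to $R'$, contradicting $s_{\mathit{total}}(R')<\infty$. Your ``reverse bubble sort'' is not specified tightly enough to guarantee this.
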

\begin{proof}[Proof sketch]
    Let $f$ be a score-based SDS and let $s$ be its score function. Moreover, we consider two profiles $R,R'\in\mathcal{L}^N$ and a voter $i\in N$ such that ${\succsim_j}={\succsim_j'}$ for all $j \in N\setminus \{i\}$ and $f(R)\neq f(R')$. To simplify this proof sketch, we additionally assume that $0<s(R,x)<\infty$ and $s(R',x)<\infty$ for all $x\in A$, and that ${\succsim_i}=x_1,x_2,\dots,x_m$. Next, we define $s_\mathit{total}(\hat R)=\sum_{x\in A} s(\hat R,x)$ and consider three cases. First, if $s_\mathit{total}(R)<s_\mathit{total}(R')$, we use the monotonicity and localizedness of $s$ to show that $s(R,x_1)\geq s(R',x_1)$ by transforming $R$ to $R'$ with a swap sequence that never reinforces~$x_1$. Since $s_\mathit{total}(R)<s_\mathit{total}(R')$, it follows that $f(R,x_1)=\frac{s(R,x_1)}{s_\mathit{total}(R)}>\frac{s(R',x_1)}{s_\mathit{total}(R')}=f(R',x_1)$ and thus $f(R')\not\succ_i f(R)$. If $s_\mathit{total}(R)>s_\mathit{total}(R')$, we can use a similar argument by showing that the score of voter $i$'s least preferred alternative $x_m$ weakly increases when going from $R$ to $R'$. Finally, if $s_\mathit{total}(R)=s_\mathit{total}(R')$, we let $x_h$ denote the alternative such that  $s(R,x_\ell)=s(R',x_\ell)$ for all $\ell<h$ and $s(R,x_h)\neq s(R',x_h)$. Then, we prove that $s(R,x_h)> s(R',x_h)$, which shows that $f(R, U(\succsim_i, x_h))>f(R', U(\succsim_i, x_h))$ and thus $f(R')\not\succ_i f(R)$. Finally, slightly more involved arguments extend this analysis to the case that $s_\mathit{total}(R)=\infty$ or $s_\mathit{total}(R')=\infty$.
\end{proof}

\Cref{thm:profscoring} has a number of important consequences. Firstly, this result implies that the score-based SDSs defined by $s_{\mathit{P}}^k$ and $s_{\mathit{C}}^k$ are weakly strategyproof. Since all these rules fail strong strategyproofness when $k\neq 1$, this demonstrates that the space of weakly strategyproof SDSs is significantly richer than the one of strongly strategyproof SDSs. Secondly, we note that the score-based SDSs defined by $s_{\mathit{P}}^k$ and $s_{\mathit{C}}^k$ approximate the Plurality rule and Copeland rule (which simply choose the alternatives with maximal Plurality and Copeland score, respectively) arbitrarily closely by increasing the exponent $k$. This stands in sharp contrast to a result by \citet{Proc10a} who has shown that strongly strategyproof SDSs are poor approximations of common deterministic voting rules. Thirdly, \Cref{thm:profscoring} implies that there are interesting weakly strategyproof SDSs that are \emph{ex post} efficient or Condorcet-consistent. For instance, all score-based SDSs defined by a score-function $s_{\mathit{P}}^g(R,x)=g(s_{\mathit{P}}(R,x))$ are \emph{ex post} efficient when $g$ satisfies that $g(0)=0$ and $g(x)>g(y)$ for all $x,y\in\mathbb{N}_0$ with $x>y$, and the Condorcet rule as well as the score-based SDS defined by $s_{\mathit{C}}^{k,\mathit{CW}}$ are Condorcet-consistent. This stands again in contrast to results for strong strategyproofness because \citet{BLR23a} have shown that all strongly strategyproof SDSs can put at most probability $2/m$ on Condorcet winners, and that all strongly strategyproof SDSs that assign at most a probability of less than $1/m$ to Pareto-dominated alternatives have a random dictatorship component.

A natural follow-up question for \Cref{thm:profscoring} is whether the class of score-based SDSs is equivalent to the set of weakly strategyproof SDSs. This is not the case since the omninomination rule $f^O$, which randomizes uniformly over the set of top-ranked alternatives $\textit{OMNI}(R)\!=\!\{{x\!\in\! A}\colon s_{\mathit{P}}(R,x)>0\}$, is weakly strategyproof but not score-based. In particular, every score function that induces this SDS fails balancedness or localizedness, so it cannot be score-based.

To give some characterizations for weakly strategyproof SDSs, we will next focus on the class of tops-only SDSs, which only depend on the voters' favorite alternatives. More formally, let $T_i(R)=\{x\in A\colon x\succsim_i y \text{ for all }y\in A\}$ denote the set of voter $i$'s favorite alternatives and note that $|T_i(R)|=1$ if $R$ is strict. Then, an SDS $f$ is \emph{tops-only} if $f(R)=f(R')$ for all preference profiles $R$ and $R'$ such that $T_i(R)=T_i(R')$ for all $i\in N$. We will now provide two characterizations of weak strategyproofness for tops-only SDSs on $\mathcal{L}^N$: firstly, we will show that, for tops-only SDSs, weak strategyproofness is equivalent to a monotonicity property. Furthermore, we will characterize the class of weakly strategyproof SDSs that are tops-only, even-chance, anonymous, and neutral as parameterized omninomination rules. These SDSs are defined by two parameters $\theta_1>\frac{n}{2}$ and $\theta_2$ and coincide with $f^O$ except for two special cases: \emph{(i)} if a single alternative is top-ranked by at least $\theta_1$ voters, then this alternative is assigned probability $1$, and \emph{(ii)} if no such alternative exists and more that $\theta_2$ alternatives are top-ranked in total, then we randomize uniformly over all alternatives. More formally, an SDS $f$ is a \emph{parameterized omninomination rule} if there are two parameters $\theta_1\in \{\lceil{\frac{n+1}{2}}\rceil,\dots, n+1\}$ and $\theta_2\in \{0,\dots, m-1\}$ such that

\begin{itemize}
    \item $f(R,x)=1$ for all profiles $R$ and alternatives $x\in A$ such that $s_{\mathit{P}}(R,x)\geq \theta_1$,
    \item $f(R)=f^O(R)$ for all profiles $R$ such that $\max_{x\in A} s_{\mathit{P}}(R,x)< \theta_1$ and $|\textit{OMNI}(R)|\leq \theta_2$,
    \item $f(R,x)=\frac{1}{m}$ for all profiles $R$ and alternatives $x\in A$ such that $\max_{x\in A} s_{\mathit{P}}(R,x)< \theta_1$ and $|\textit{OMNI}(R)|> \theta_2$.
\end{itemize}

\begin{restatable}{theorem}{topsonly}\label{thm:topsonly}
     Let $f$ denote a tops-only SDS on $\mathcal{L}^N$.
    \begin{enumerate}[leftmargin=*,label=\arabic*)]
        \item $f$ is weakly strategyproof if and only if $f(R)=f(R')$ or $f(R,T_i(R))>f(R',T_i(R))$ for all profiles $R,R'\in\mathcal{L}^N$ and voters $i\in N$ such that ${\succsim_j}={\succsim_j'}\,$  for all $j\in N\setminus \{i\}$.
        \item $f$ is weakly strategyproof, even-chance, anonymous, and neutral if and only if it is a parameterized omninomination rule.
    \end{enumerate}
\end{restatable}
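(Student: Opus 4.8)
My plan is to treat the two items separately, exploiting that for strict preferences $T_i(R)$ is a singleton $\{t\}$ and that voter $i$'s smallest upper contour set is $U(\succsim_i,t)=\{t\}$, so $f(R,T_i(R))=f(R,t)$ is exactly the probability he assigns to his most-preferred alternative.

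For item 1) the $(\Leftarrow)$ direction is immediate: if $f(R)=f(R')$ voter $i$ is indifferent, and if $f(R,t)>f(R',t)$ then $f(R')$ puts strictly less mass on $U(\succsim_i,t)=\{t\}$ than $f(R)$, so $f(R')$ cannot stochastically dominate $f(R)$ and hence $f(R')\not\succ_i f(R)$. For $(\Rightarrow)$ I argue by contraposition: assume $f(R)\neq f(R')$ and $f(R',t)\geq f(R,t)$ and build a strict preference with top $t$ witnessing a profitable manipulation. Writing $D=f(R')-f(R)$, I have $\sum_{x\in A}D(x)=0$ and $D(t)\geq 0$; ordering $A$ as $t$ first, then the remaining alternatives with $D\geq 0$, then those with $D<0$, every prefix sum of $D$ along this order is nonnegative, so $f(R')$ stochastically dominates $f(R)$ for the associated preference $\succsim$, strictly because $f(R)\neq f(R')$. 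Since $f$ is tops-only, replacing voter $i$'s preference in $R$ by $\succsim$ (still top $t$) leaves both $f(R)$ and $f(R')$ unchanged, so this is a successful manipulation, contradicting weak strategyproofness.

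For item 2), the $(\Leftarrow)$ direction only needs routine checks that a parameterized omninomination rule is tops-only, even-chance, anonymous, and neutral, plus weak strategyproofness via item 1): when voter $i$ shifts his top from $t$ to $t'$, a short case distinction over the three regimes gives $f(R,t)>f(R',t)$ unless $f(R)=f(R')$, the point being that withdrawing $i$'s vote from $t$ can only weakly lower the probability of $t$. The substantial direction is $(\Rightarrow)$. Using tops-only-ness and anonymity I reduce $f$ to a map $\mathbf{s}\mapsto W(\mathbf{s})$ from plurality-score vectors to the even-chance winning set, and neutrality makes $W$ equivariant, hence a union of score-level-sets. Translating item 1) into this language yields three local constraints on single-vote transfers $\mathbf{s}\to\mathbf{s}'$ that move a top from $t$ to $t'$: (i) a non-winner $t$ stays a non-winner; (ii) if $t$ is a non-winner before and after, then $W(\mathbf{s})=W(\mathbf{s}')$; and symmetrically (iii) a winner stays a winner when it gains a vote. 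From (i) and (ii) I obtain the key invariance that moving a vote out of a positive-score non-winner, or into an alternative that remains a non-winner, leaves $W$ unchanged; draining all positive non-winners onto a fixed winner and invoking neutrality then yields the dichotomy that $W(\mathbf{s})$ is either $A$ or a subset of $\mathrm{supp}(\mathbf{s})$.

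It remains to pin down the thresholds, and I expect this to be the main obstacle. I would first prove that $|W(\mathbf{s})|\geq 2$ forces $W(\mathbf{s})\supseteq\mathrm{supp}(\mathbf{s})$, so the only proper-subset-of-support case is a singleton: if a winner and a positive non-winner coexisted with $|W|\geq 2$, the invariance lets me raise the non-winner's score until it equals a fixed winner's score while it is still a non-winner, contradicting neutrality (a level set must be entirely winning or entirely losing), with some minor parity bookkeeping when equalizing scores. Granting this, $\theta_1$ is the least score whose attainment forces a singleton winner: monotonicity (iii) makes the set of such scores upward closed, the invariance makes $\theta_1$ independent of how the remaining votes are distributed, and neutrality applied to two equally-scored alternatives gives $\theta_1>n/2$. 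Similarly $\theta_2$ is the support-size threshold separating $W=\mathrm{supp}(\mathbf{s})$ from $W=A$ in the non-supermajority regime, its being a clean cutoff again following from monotonicity in the support size. The delicate part throughout is ruling out intermediate winning sets and showing that both regime boundaries are governed by \emph{global} thresholds on the maximal score and on the support size rather than by finer, profile-dependent features, which is exactly where neutrality and the transfer invariance must be combined most carefully.
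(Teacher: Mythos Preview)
Your proof of item 1) is correct and essentially identical to the paper's: both order the alternatives by the sign of $D=f(R')-f(R)$, put the top $t$ first, and use tops-onlyness to substitute this preference for voter $i$'s without changing either outcome.

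For item 2), the $(\Leftarrow)$ direction via item 1) is fine (the paper does a direct case analysis instead). The $(\Rightarrow)$ direction has the right overall shape, and your constraints (i)--(iii), the invariance, and the dichotomy $W=A$ or $W\subseteq\mathrm{supp}$ are all correctly derived. The gap is in your argument that $|W(\mathbf{s})|\geq 2$ forces $W(\mathbf{s})\supseteq\mathrm{supp}(\mathbf{s})$. The second half of your invariance (``moving a vote into an alternative that \emph{remains a non-winner} leaves $W$ unchanged'') has the non-winner status as a \emph{hypothesis}, not a conclusion, so it cannot be used to guarantee that $z$ stays a non-winner as you feed votes into it. And if you rely only on the first half (moving votes \emph{out of} other non-winners), there may simply be too few such votes to raise $s_z$ to the lowest winner score: e.g.\ with scores $(3,3,1)$ and $W=\{x_1,x_2\}$ there are no non-winner votes outside $z=x_3$ at all. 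Once you are forced to draw from a winner, $W$ may change, and you land in a different profile with no contradiction to your hypothesis about the \emph{original} $\mathbf{s}$.

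The paper handles this by reversing the order of your last two steps. It \emph{first} extracts $\theta_1$ (its Step~2), showing that once $W$ is a singleton at some score $k$ it is a singleton at every score $\geq k$ regardless of how the remaining votes are distributed (this is exactly where your ``the invariance makes $\theta_1$ independent of how the remaining votes are distributed'' needs a real argument). Only \emph{then} does it prove $\mathrm{supp}\subseteq W$ in the sub-$\theta_1$ regime (its Step~3), and $\theta_1$ is used essentially: assuming $\mathrm{supp}\not\subseteq W$ and $|W|\geq 2$, it first observes that the non-winner voters can all be shifted onto $x_1$ without changing $W$, and then repeatedly moves votes from the second-highest winner $x_2$ onto $x_1$, arguing via item~1) that $W$ can only shrink; eventually $s_P(x_1)+|N^-|\geq\theta_1$, forcing $W=\{x_1\}$ and contradicting $|W|\geq 2$. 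Your attempt to get $\mathrm{supp}\subseteq W$ \emph{before} pinning down $\theta_1$ is precisely where the argument breaks; the threshold is what supplies the terminal contradiction.
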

\begin{proof}
    We will only prove the first claim here and defer the proof of the second part of the theorem to the appendix.
    Thus, let $f$ denote a tops-only SDS. We first show the direction from right to left and hence suppose that $f$ satisfies the given condition. Now, let $R$ and $R'$ denote two preference profiles and $i$ a voter such that ${\succsim_j}={\succsim_j'}$ for all $j\in N\setminus \{i\}$. Moreover, let $x$ denote voter $i$'s favorite alternative in $R$ and $y$ his favorite alternative in $R'$. If $f(R)=f(R')$, voter $i$ clearly cannot manipulate by deviating from $R$ to $R'$. Hence, we suppose that $f(R)\neq f(R')$, which requires that $x\neq y$ due to tops-onlyness. In turn, the condition of our theorem implies that $f(R',x)<f(R,x)$ if $f(R)\neq f(R')$ and $x\neq y$. This implies that $f(R')\not\succ_i f(R)$, so $f$ is weakly strategyproof.

    Next, we will show that $f$ fails weak strategyproofness if it fails the condition in the theorem. To this end, assume there are profiles $R$ and $R'$, a voter $i$, and an alternative $x$ such that ${\succsim_j}={\succsim_j'}$ for all $j\in N\setminus \{i\}$, $T_i(R)=\{x\}$, $f(R)\neq f(R')$, and $f(R,x)\leq f(R',x)$. We define $Z^+=\{z\in A\setminus \{x\}\colon f(R',z)\geq f(R,z)\}$ and $Z^-=\{z\in A\setminus \{x\}\colon f(R',z)< f(R,z)\}$ and observe that $Z^-\neq \emptyset$ since $f(R)\neq f(R')$. Moreover, we consider the profile $R^*$ derived from $R$ by assigning voter $i$ a strict preference relation $\succsim_i^*$ with $x\succ_i^* z^+\succ_i^* z^-$ for all $z^+\in Z^+$ and $z^-\in Z^-$. By tops-onlyness, $f(R)=f(R^*)$. On the other hand, it holds by construction that $f(R')\neq f(R^*)$ and $f(R', U(\succsim^*_i, y))\geq f(R^*, {U(\succsim_i^*,y)})$ for all $y\in A$, so $f(R')\succ_i^* f(R^*)$ and $f$ fails weak strategyproofness. 
\end{proof}

\remark 
The second claim of \Cref{thm:topsonly} can be used to characterize the SDS that assigns probability $1$ to an alternative if it is top-ranked by a strict majority of voters and randomizes uniformly over $\mathit{OMNI}(R)$ if no such alternative exists: among all SDSs that are tops-only, even-chance, weakly strategyproof, anonymous, and neutral, it is the one that uses the least amount of randomization.

\remark
Every strongly strategyproof SDS is score-based. This follows from a result by \citet{Gibb77a}, which states that an SDS is strongly strategyproof if and only if the SDS itself is localized and monotonic. In our terminology, this means that an SDS is strongly strategyproof if and only if it is defined by score function $s:\mathcal{L}^N\times A\rightarrow\mathbb{R}_{\geq 0}$ that satisfies monotonicity, localizedness, and that $s(R,x)-s(R^{i:yx},x)=s(R^{i:yx},y)-s(R,y)$ for all $x,y\in A$ and $R\in\mathcal{L}^N$. By replacing the last constraint by balancedness (and even allowing at most one alternative with $s(R,x)=\infty$), we derive significantly more flexibility in the design of weakly strategyproof SDSs.

\subsection{Impossibility Theorems for Strict Preferences}\label{subsec:strictprefneg}

We will now turn to the limitations of weakly strategyproof SDSs for the case that voters report strict preferences. To this end, we observe that, while our results in \Cref{subsec:strictprefpos} allow to construct weakly strategyproof SDSs that are arbitrarily close to simultaneously satisfying \emph{ex post} efficiency and Condorcet-consistency (e.g., the score-based rule defined by $s_{\mathit{C}}^k$ for very large $k$), we were not able to construct a weakly strategyproof SDS that satisfies both axioms at the same time. As it turns out, constructing such an SDS may be impossible. In more detail, we subsequently prove that no even-chance SDS (cf. \Cref{thm:EvenchanceCondImp}) and no neutral and pairwise SDSs (cf. \Cref{thm:impSDS}) simultaneously satisfies weak strategyproofness, Condorcet-consistency, and \emph{ex post} efficiency. This shows that the most common approaches for designing Condorcet-consistent SDSs do not allow to simultaneously satisfy weak strategyproofness, \emph{ex post} efficiency, and Condorcet-consistency.

Let us first consider even-chance SDSs.

\begin{restatable}{theorem}{evenchance}
	\label{thm:EvenchanceCondImp}
    Assume that $m\geq 5$ and $n\geq 5$ is odd. No even-chance SDS on $\mathcal{L}^N$ satisfies weak strategyproofness, Condorcet-consistency, and \emph{ex post} efficiency.
\end{restatable}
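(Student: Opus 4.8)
The plan is to work with the support set $X(R)$ of the even-chance SDS $f$, so that $f(R)$ is the uniform lottery on $X(R)$, and to translate the three axioms into statements about these sets: Condorcet-consistency says $X(R)=\{x\}$ whenever $x$ is the Condorcet winner of $R$; ex post efficiency says $X(R)$ never contains a Pareto-dominated alternative; and weak strategyproofness says that for any two profiles differing only in a single voter $i$, neither uniform lottery strictly stochastically dominates the other with respect to $\succsim_i$. The entire argument is driven by comparing a uniform lottery with a point lottery, so I would first isolate this comparison as a lemma.

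\textbf{Local lemma.} Suppose a single deviation of voter $i$ turns a profile $R$ into a profile $R'$ that has a Condorcet winner $x$. Then Condorcet-consistency gives that $f(R')$ is the point lottery on $x$, and a routine stochastic-dominance calculation (the only quantities being $|X(R)\cap U(\succsim_i,y)|/|X(R)|$ versus the indicator of $x\succsim_i y$) shows that weak strategyproofness forces the following: either $X(R)=\{x\}$, or $X(R)$ contains an alternative that $i$ strictly prefers in $R$ to $x$. The symmetric direction of weak strategyproofness yields an analogous constraint evaluated at $R'$. These constraints are the engine of the proof, since Condorcet-creating deviations push specified alternatives into $X(R)$ while leaving only a singleton as an escape.

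Next I would build the combinatorial skeleton. Because $n\ge 5$ is odd, the majority relation is a strict tournament, and one can place three alternatives $a,b,c$ into a tight majority cycle in which every pairwise margin is a single voter (so $a\succ_M b\succ_M c\succ_M a$, each by one vote), while all remaining $m-3\ge 2$ alternatives sit at the bottom of every ballot and are therefore Pareto-dominated. Ex post efficiency then gives $X(R)\subseteq\{a,b,c\}$. Since every margin is one vote, each of $a,b,c$ can be made the Condorcet winner by a single suitably chosen voter flipping the comparison it currently loses; applying the local lemma to these three deviations, and checking that the three singleton escape clauses $X(R)=\{a\}$, $X(R)=\{b\}$, $X(R)=\{c\}$ are pairwise incompatible with the others, forces $X(R)=\{a,b,c\}$. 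This warm-up shows that the three axioms already pin down the output on tight cycles, and it displays how the hypotheses are used: odd $n$ to obtain a clean tournament with unit margins, and the two spare alternatives to anchor ex post efficiency.

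The crux---and the step I expect to be genuinely hard---is to convert this rigidity into an outright contradiction, since a single tight cycle is by itself consistent with all three axioms (uniform randomization over $\{a,b,c\}$ survives every local check). My plan is to embed the cycle into a finite chain of profiles connected by single-voter deviations: starting from the forced output $X(R)=\{a,b,c\}$, I would gradually demote one alternative, say $c$, toward Pareto-domination, or gradually raise a spare alternative into contention, arranging the intermediate profiles so that at each step the local lemma and its symmetric partner force the disputed alternative to remain in, or to enter, the support even though ex post efficiency must eventually expel it. The contradiction then surfaces when some alternative is simultaneously forced into $X$ by a weak-strategyproofness constraint and out of $X$ by Pareto-domination, or when two Condorcet-creating deviations impose incompatible membership requirements whose only common escape is a singleton that a third deviation has already ruled out. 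The difficulty is exactly the bookkeeping of these membership constraints along the chain: every deviation offers the SDS an escape (the singleton clause, or the insertion of the deviator's own top alternative), so the construction must be tuned---this is where $m\ge 5$ and $n\ge 5$ appear essential, and plausibly where a computer-assisted search over profiles supplies the precise certificate---so that all escapes are blocked at once.
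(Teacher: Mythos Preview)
Your proposal has a genuine gap: the step you yourself flag as ``genuinely hard''---converting the rigidity of the three-cycle into an outright contradiction---is never actually carried out. You describe a vague plan (chain of deviations, demote $c$, raise a spare alternative) and even concede that ``a computer-assisted search over profiles'' may be needed to supply the certificate. That is not a proof; it is a hope that a proof exists.

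The approach also diverges from the paper's in a way that matters. Your setup places three alternatives in a cycle and treats the remaining $m-3\ge 2$ as universally bottom-ranked dummies. But the paper uses all five alternatives \emph{actively}: the two key profiles $R^1$ and $\hat R^1$ are arranged so that $d$ is Pareto-dominated in $R^1$ while $c$ is Pareto-dominated in $\hat R^1$, and the forced outputs are the four-element sets $\{a,b,c,e\}$ and $\{a,b,d,e\}$. The contradiction is then a single manipulation by voter~3 between these two profiles. This mechanism---different alternatives becoming Pareto-dominated at different profiles---cannot be reproduced with permanent dummies, which is precisely why your chain idea stalls: if the spare alternatives are always at the bottom, ex post efficiency never bites against anything inside the cycle, and you have no lever.

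The paper's structural lemmas are also stronger than your local lemma. Rather than only constraining $X(R)$ when a deviation creates a Condorcet winner, the paper first proves (i) that $|X(R)|=1$ is possible \emph{only} when a Condorcet winner exists, and (ii) that $|X(R)|=2$ is impossible whenever $n$ is odd. Together these force $|X(R)|\ge 3$ at every profile without a Condorcet winner, which immediately reduces the analysis at $R^1$ and $\hat R^1$ to excluding each three-element subset individually---a finite case check that your local lemma alone does not deliver. Your warm-up argument that $X(R)=\{a,b,c\}$ is forced on the tight cycle in fact leans on something like (i) to rule out singletons, but you have not proved (i), and the ``pairwise incompatibility'' of the singleton escape clauses depends on the full preferences of the deviating voters, which you have not specified.
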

\begin{proof}[Proof Sketch]
    For the proof of this result, we first show two auxiliary claims: assuming that the number of voters $n$ is odd, we prove \emph{(i)} that every weakly strategyproof and Condorcet-consistent even-chance SDS assigns probability $1$ to an alternative $x$ if and only if $x$ is the Condorcet winner, and \emph{(ii)} that such SDSs can never randomize over exactly two alternatives. We then consider the following two preference profiles; additional alternatives are bottom-ranked by all voters.\smallskip
    
    \setlength\tabcolsep{3 pt}
    {\medmuskip=0mu\relax
	\thickmuskip=1mu\relax
    \noindent\begin{tabular}{L{0.07\profilewidth} L{0.3\profilewidth} L{0.3\profilewidth} L{0.3\profilewidth}}
		$R^1$: & $1$: $b,e,d,c,a$ & $2:$ $a,b,c,e,d$ & $3:$ $e,d,c,a,b$
  \end{tabular}
   \begin{tabular}{L{0.07\profilewidth} L{0.43\profilewidth} L{0.43\profilewidth} L{0.45\profilewidth}}
		& $\{4, \dots, \frac{n+3}{2}\}:$ $b,c,a,e,d$ & $\{\frac{n+5}{2}, \dots,n \}$: $e,d,a,b,c$
	\end{tabular}\smallskip
 
    \noindent\begin{tabular}{L{0.07\profilewidth} L{0.3\profilewidth} L{0.3\profilewidth} L{0.3\profilewidth}}
	$\hat R^1$: & $1$: $b,e,d,c,a$ & $2:$ $a,b,c,e,d$ & $3:$ $d,a,e,b,c$
    \end{tabular}
   \begin{tabular}{L{0.07\profilewidth} L{0.43\profilewidth} L{0.43\profilewidth} L{0.45\profilewidth}}
	& $\{4, \dots, \frac{n+3}{2}\}$: $b,c,a,e,d$ & $\{\frac{n+5}{2}, \dots,n\}$: $e,d,a,b,c$
    \end{tabular}}\smallskip

    For these profiles, we show based on our auxiliary claims that every even-chance SDS that is weakly strategyproof, Condorcet-consistent, and \emph{ex post} efficient uniformly randomizes over $\{a,b,c,e\}$ for $R^1$ and over $\{a,b,d,e\}$ for $\hat R^1$. This means that voter $3$ can manipulate by deviating from $R^1$ to $\hat R^1$, contradicting weak strategyproofness. 
\end{proof}

Let us now turn to pairwise and neutral SDSs. An SDS is \emph{pairwise} if $f(R)=f(R')$ for all profiles $R$, $R'$ such that $|\{i\in N: x\succ_i y\}|-|\{i\in N: y\succ_i x\}|=|\{i\in N\colon x\succ_i' y\}|-|\{i\in N: y\succ_i' x\}|$ for all $x,y\in A$. In other words, an SDS is pairwise if it only depends on the weighted majority relation. There are many important pairwise SDSs \citep[see, e.g.,][Chapter 3 and~4]{BCE+14a} and most Condorcet-consistent voting rules are pairwise. Hence, showing that all pairwise, neutral, and weakly strategyproof SDSs fail \emph{ex post} efficiency can be seen as evidence that no SDS simultaneously satisfies weak strategyproofness, Condorcet-consistency, and \emph{ex post} efficiency. We need to slightly extend the domain of SDSs for the following result: $\mathcal{L}^*=\bigcup_{N\subseteq\mathbb{N}\text{ is finite and non-empty}} \mathcal{L}^N$ contains all strict preference profiles for every finite and non-empty electorate. This is required because pairwiseness establishes relationships between profiles with different numbers of voters.

\begin{restatable}{theorem}{impSDS}
    \label{thm:impSDS}
    Assume that $m\geq 5$. No pairwise, neutral, and weakly strategyproof SDS on $\mathcal{L}^*$ satisfies \emph{ex post} efficiency.
\end{restatable}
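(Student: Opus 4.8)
The plan is to exploit the fundamental tension between pairwiseness, which forces $f$ to depend only on the weighted majority relation, and \emph{ex post} efficiency, which is a genuine function of the profile and not of the majority margins alone. I first observe that, by pairwiseness, there is a well-defined map sending each weighted majority relation to a lottery, and that adding to any profile a pair of voters with exactly reversed preferences leaves all margins---and hence $f$---unchanged. Working on the extended domain $\mathcal{L}^*$, this lets me freely rescale the electorate and, more importantly, realize one and the same margin matrix by several profiles with different Pareto structures. Combined with neutrality, which makes the induced map equivariant under relabeling alternatives, I obtain the basic forcing tool: whenever the weighted majority relation of a profile $R$ admits an automorphism $\tau$ (a permutation of $A$ preserving all margins), then $f(R,\tau(x))=f(R,x)$ for all $x$. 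In particular, on a profile whose three alternatives $a,b,c$ form a symmetric majority $3$-cycle and in which every remaining alternative relates identically to $a$, $b$, and $c$, the rotation $(a\,b\,c)$ is such an automorphism, so $f$ must place equal probability on $a$, $b$, and $c$.

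The second ingredient turns \emph{ex post} efficiency into constraints that survive the passage to margins. Since preferences are strict, an alternative $y$ is Pareto-dominated exactly when some $x$ has majority margin $n$ over $y$, i.e.\ all voters rank $x$ above $y$. Using the reversal-pair trick and McGarvey-style constructions on $\mathcal{L}^*$, I realize each target margin matrix by an auxiliary profile in which the alternatives I want to suppress are unanimously beaten; \emph{ex post} efficiency assigns them probability $0$ there, and pairwiseness then propagates these zeros to \emph{every} profile with the same margins. This is the step that lets me conclude, for suitably engineered margin matrices, that the support of $f$ is confined to a small set of surviving alternatives.

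With these two tools I would mirror the structure of the proof of \Cref{thm:EvenchanceCondImp}: first establish auxiliary claims pinning down the lottery on a family of symmetric profiles (uniform on the surviving cycle, with the dominated alternatives at zero), and then exhibit two profiles $R$ and $R'$ that differ in a single voter $i$'s ballot such that $f(R)$ is uniform over a cycle containing $i$'s favorite while $f(R')$---forced by the efficiency-induced zeros and the equivariance above---shifts probability onto the upper contour sets of $\succsim_i$. Verifying that $f(R')$ strictly stochastically dominates $f(R)$ for $\succsim_i$ then contradicts weak strategyproofness. The role of $m\ge 5$ is to provide enough auxiliary alternatives to carry both the symmetry needed for the forcing and the domination needed for the zeros simultaneously.

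The main obstacle is precisely that \emph{ex post} efficiency and pairwiseness pull in opposite directions: forcing an alternative to probability $0$ requires its margin matrix to be realizable with a \emph{unanimous} domination (a margin equal to the electorate size), whereas forcing a clean cyclic symmetry requires the margins among the surviving alternatives to be small and balanced. These two demands cannot be met by a single margin matrix, so the delicate part is to distribute them across the auxiliary profiles used to certify the zeros and across the pair $R,R'$, and to choose the single-voter deviation so that the resulting change in the majority relation---without any appeal to Condorcet-consistency---moves probability strictly up the manipulator's preference order. Managing the parity and realizability constraints of these several profiles on $\mathcal{L}^*$ while keeping the deviation legal is where the real work of the proof lies.
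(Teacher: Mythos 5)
You have assembled several ingredients that the paper's proof also relies on: viewing a pairwise SDS as a function of the weighted majority margins, the reversal-pair trick on $\mathcal{L}^*$, the fact that neutrality plus pairwiseness forces $f(R,\tau(x))=f(R,x)$ for every permutation $\tau$ that preserves all margins (this is exactly how the paper pins the uniform lotteries on $R^3$ and $R^4$), and a terminal single-voter deviation that violates weak strategyproofness via stochastic dominance. But your proposal is a plan with all profiles left unconstructed, and—more importantly—the toolkit you restrict yourself to is missing the one device that makes the paper's argument close: weak strategyproofness and pairwiseness \emph{together} imply set-monotonicity (\Cref{lem:SMON}), i.e., weakening an alternative that currently gets probability $0$ never changes the outcome. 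That lemma is the only tool in the paper that transfers information between profiles with \emph{different} margin matrices; your tools (pairwiseness, neutrality) transfer only between profiles whose margins are identical or permuted copies, and you invoke weak strategyproofness solely in the final manipulation rather than as a structural tool throughout.

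This omission is fatal to the plan as stated. Your zero-certification mechanism can only suppress an alternative whose maximal margin-against equals the electorate size of some realization, hence equals the largest margin in the entire matrix; combined with the automorphism needed for pinning, every supported alternative then beats every suppressed one by this maximal margin. A single deviator can move any margin by at most $2$, and only by flipping a pair he sincerely orders the other way, so he can never bring a suppressed alternative into a newly pinned support (the rotation on the new support would require that alternative to relate to the departing one as the remaining support members do, a change of at least $4$), and shrinking or concentrating the support requires him to flip margins \emph{in favor of} alternatives he ranks below the ones he demotes—precisely the opposite of what a profitable manipulation needs. The paper's own pivotal step shows the gap concretely: the contradiction (voter $2$ deviating from $R^2$ to $R^4$) requires $f(R^2,c)=0$, yet in $R^2$'s margin matrix $c$ is majority-beaten by a margin of only $1$ while other margins equal $3$, so \emph{no} realization of those margins has $c$ Pareto-dominated and your propagation argument cannot produce this zero. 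The paper instead extracts it from \Cref{lem:SMON} together with an auxiliary weak-strategyproofness argument (the profiles $\hat R^1,\hat R^2$ and the transposition exchanging $c$ and $d$). To complete your approach you would have to add a cross-matrix transfer principle of exactly this kind—at which point you have reconstructed the paper's proof rather than found an alternative to it.
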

\begin{proof}[Proof Sketch]
    For this proof sketch, we focus on the case that there are $m=5$ alternatives. Moreover, we will only consider profiles with $n=3$ voters; this suffices as we work in $\mathcal{L}^*$. Nevertheless, the impossibility theorem can be extended to all odd values $n>3$ by adding voters with inverse preferences as pairwiseness requires that the outcome does not change when adding such voters. Now, assume for contradiction that there is an SDS $f$ that satisfies all axioms of our theorem. First, we show that $f$ is invariant under weakening alternatives that obtain probability $0$, i.e., if $R'$ arises from a profile $R$ by only weakening an alternative $x$ with $f(R,x)=0$, then $f(R)=f(R')$. Next, we focus on the profiles $R$ and $R'$. 
\smallskip

    \noindent\begin{profile}{C{0.07\profilewidth} C{0.25\profilewidth} C{0.25\profilewidth} C{0.25\profilewidth}}
        $R$: & $1$: $a,b,e,c,d$ & $2$: $b,c,e,a,d$ & $3$: $e,c,a,b,d$\\
        $R'$: & $1$: $a,b,e,c,d$ & $2$: $b,c,a,e,d$ & $3$: $e,c,a,b,d$
    \end{profile}

    In particular, we show that $f(R,a)=f(R,b)=f(R,e)=\frac{1}{3}$ and $f(R',a)=f(R',b)=f(R',c)=\frac{1}{3}$ by analyzing several auxiliary profiles. However, this means that voter $2$ can manipulate by deviating from $R$ to $R'$ as $f(R')\succ_2 f(R)$, so $f$ fails weak strategyproofness.
\end{proof}

\remark
	All axioms except of the even-chance condition are required for \Cref{thm:EvenchanceCondImp}: the omninomination rule $f^O$ satisfies \emph{ex post} efficiency and weak strategyproofness, the Condorcet rule satisfies Condorcet-consistency and weak strategyproofness, and uniformly randomizing over known majoritarian choice sets such as the uncovered set satisfies \emph{ex post} efficiency and Condorcet-consistency. Moreover, the bounds on $n$ and $m$ are tight: if $n\leq 4$, the SDS that chooses the Condorcet winner with probability $1$ if there is one and otherwise randomizes uniformly over the top-ranked alternatives satisfies all given axioms; if $m\leq 4$, the SDS that chooses the Condorcet winner with probability $1$ if there is one, randomizes uniformly over the top-ranked alternative if there are two that are first-ranked by exactly half of the voters, and otherwise randomizes uniformly over the set of \emph{ex post} efficient alternatives meets all conditions of \Cref{thm:EvenchanceCondImp}. Finally, we conjecture that the even-chance assumption is not required for the impossibility. 

\remark
    Just like numerous other results on Condorcet-consistency \citep[e.g.,][]{BoEn21a,BLS22c}, we cannot prove \Cref{thm:EvenchanceCondImp} for even $n$. However, we show in the appendix that there is no even-chance SDS that satisfies weak strategyproofness, \emph{ex post} efficiency, and strong Condorcet-consistency. The last condition requires that an alternative is chosen with probability $1$ if and only if it is the Condorcet winner, and it is typically satisfied by all strategyproof, Condorcet-consistent SDSs. 
    
\remark
	The main result of \citet{BrLe21a} implies that, under mild additional assumptions, no pairwise set-valued voting rule satisfies Condorcet-consistency, \emph{ex post} efficiency, and a strategyproofness notion called Fishburn-strategyproofness. When interpreting even-chance SDSs as set-valued voting rules, \Cref{thm:EvenchanceCondImp} extends this observation to \emph{all} set-valued voting rules at the expense of using a slightly stronger notion of strategyproofness.

\begin{table*}[tb]
\setlength{\parskip}{0pt}
\renewcommand{\arraystretch}{1.5}
\centering
    \begin{tabular}[t]{K{0.2\textwidth}K{0.34\textwidth}K{0.38\textwidth}}
        & Strict preferences & Weak preferences\\\hline
        \emph{ex post} efficiency  & 
        \vspace{-12pt}
        \begin{itemize}[topsep=-12pt, rightmargin=0pt, leftmargin=0pt]
            \item[$\oplus$] Various tops-only SDSs (Thm \ref{thm:profscoring},\ref{thm:topsonly})
            \item[$\ominus$] No pairwise SDS$^*$ (Thm \ref{thm:impSDS})\vspace{-10pt}
        \end{itemize} 
        & 
        \vspace{-12pt}
        \begin{itemize}[topsep=-12pt, rightmargin=0pt, leftmargin=0pt]
        \item[$\oplus$] Random serial dictatorship \citep{ABBB15a}
        \item[$\ominus$] No \emph{ex ante} efficient SDSs$^*$ \citep{BBEG16a}
        \item[$\ominus$] Only (bi)dictatorial even-chance SDSs (Thm \ref{thm:bidicatorship}) 
        \vspace{-10pt}
        \end{itemize}
        \\\hline
        Condorcet-consistency
        & 
         \vspace{-12pt}
        \begin{itemize}[topsep=0pt, rightmargin=0pt, leftmargin=0pt]
        \item[$\oplus$] Variants of Copeland's rule (Thm \ref{thm:profscoring})\vspace{-10pt}
        \end{itemize}
        & 
         \vspace{-12pt}
        \begin{itemize}[topsep=0pt, rightmargin=0pt, leftmargin=0pt]
        \item[$\ominus$] No SDS \citep{Bran11c}\vspace{-10pt}
        \end{itemize}
        \\\hline
        \emph{ex post} efficiency and Condorcet-consistency 
        &  
         \vspace{-12pt}
        \begin{itemize}[topsep=0pt, rightmargin=0pt, leftmargin=0pt]
        \item[$\oplus$] Approximately \emph{ex post} efficient and Condorcet-consistent SDSs (Thm \ref{thm:profscoring})
        \item[$\ominus$] No even-chance SDS (Thm \ref{thm:EvenchanceCondImp})
        \end{itemize}
        & 
         \vspace{-12pt}
        \begin{itemize}[topsep=0pt, rightmargin=0pt, leftmargin=0pt]
            \item[$\ominus$] No SDS \citep{Bran11c}\vspace{-14pt}
        \end{itemize}
    \end{tabular}
    \caption{Summary of our results. Each cell states which weakly strategyproof SDSs satisfy which axioms for strict preferences (left column) and weak preferences (right column).
    Results marked by a $\oplus$ symbol are possibility theorems, whereas the $\ominus$ symbol indicates impossibility theorems. Results with an asterisk ($^*$) additionally need anonymity and/or neutrality.}
    \label{tab:resultssummary}
\end{table*}

\subsection{Impossibility Theorems for Weak Preferences}\label{subsec:weakpref}

In this section, we prove two impossibility theorems for weakly strategyproof SDSs on $\mathcal{R}^N$: firstly, we present a simplified proof of the main result of \citet{BBEG16a} (cf. \Cref{thm:SDimpossibity}) in the appendix; secondly, we prove an even more severe impossibility for even-chance SDSs (cf. \Cref{thm:bidicatorship}). 
These results demonstrate that there are no attractive weakly strategyproof SDSs when voters have weak preferences. 

We start by revisiting the impossibility theorem by \citet{BBEG16a}. It shows that no anonymous, neutral, and weakly strategyproof SDS is \emph{ex ante} efficient. The last condition, also known as SD-efficiency, is a strengthening of \emph{ex post} efficiency focusing on lotteries rather than individual alternatives. In more detail, a lottery $p$ \emph{ex ante dominates} a lottery $q$ in a profile $R$ if $p\succsim_i q$ for all $i\in N$ and $p\succ_i q$ for some $i\in N$. Conversely, a lottery is \emph{ex ante efficient} if it is not \emph{ex ante} dominated by any lottery, and an SDS $f$ is \emph{ex ante efficient} if $f(R)$ is \emph{ex ante} efficient for every profile $R$. \emph{Ex ante} efficiency ensures that there is no lottery that weakly increases the expected utility of all voters and strictly for at least one voter. Hence, the impossibility theorem by \citet{BBEG16a} shows that no weakly strategyproof SDS on $\mathcal{R}^N$ satisfies mild efficiency constraints. 

\begin{theorem}[\citet{BBEG16a}]\label{thm:SDimpossibity}
    Assume $n\geq 4$ and $m\geq 4$. No anonymous and neutral SDS on $\mathcal{R}^N$ satisfies \emph{ex ante} efficiency and weak strategyproofness. 
\end{theorem}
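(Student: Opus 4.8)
The plan is to exhibit a small, concrete profile on which every anonymous, neutral, and \emph{ex ante} efficient SDS must produce a lottery that some voter can strictly improve upon (in the stochastic dominance sense) by misreporting, thereby contradicting weak strategyproofness. Since $n,m\geq 4$, I would first look for a profile whose symmetry group (under the combined action of voter permutations and alternative permutations preserved by anonymity plus neutrality) is rich enough to pin down the lottery $f(R)$ almost completely. The standard device is a profile invariant under a cyclic or rotational symmetry on a subset of alternatives together with a matching permutation of voters: neutrality forces the probabilities on the orbit of alternatives to be equal, and anonymity lets me freely relabel voters. The hard constraint is then \emph{ex ante} efficiency, which via the stochastic-dominance characterization $p\succsim_i q \iff p(U(\succsim_i,x))\geq q(U(\succsim_i,x))$ for all $x$ rules out certain symmetric lotteries by comparing $f(R)$ against a concrete dominating lottery $q$ that moves mass upward in several voters' preferences simultaneously.

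The key steps, in order, would be: \emph{(i)} construct a base profile $R$ with enough indifferences among the $m\geq 4$ alternatives so that \emph{ex ante} efficiency has real bite (weak preferences are essential here, which is why the theorem fails for strict preferences); \emph{(ii)} use anonymity and neutrality to reduce the admissible lotteries $f(R)$ to a low-dimensional family, ideally forcing equal probabilities across a symmetric set of alternatives; \emph{(iii)} invoke \emph{ex ante} efficiency to eliminate all but essentially one candidate lottery, pinning down $f(R)$ (or a tight set of possibilities) exactly; and \emph{(iv)} identify a voter $i$ and a deviation to a profile $R'$ (differing only in $\succsim_i$) such that, by the same symmetry-plus-efficiency analysis applied to $R'$, the forced lottery $f(R')$ satisfies $f(R')\succ_i f(R)$, checking the stochastic-dominance inequalities $f(R')(U(\succsim_i,x))\geq f(R)(U(\succsim_i,x))$ for every $x$ with at least one strict inequality. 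The delicate point is that the deviation must be chosen so that the voter's \emph{true} preference $\succsim_i$ at $R$ strictly prefers the outcome at $R'$; this typically means the misreport $\succsim_i'$ is a coarsening or reshuffling that the efficiency analysis at $R'$ rewards precisely along the upper contour sets of $\succsim_i$.

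The main obstacle I anticipate is step \emph{(iii)}: getting \emph{ex ante} efficiency together with anonymity and neutrality to determine $f(R)$ uniquely rather than merely constraining it. Anonymity and neutrality alone leave a whole simplex of symmetric lotteries, and \emph{ex ante} efficiency is a one-sided (domination) constraint, so it generically only rules out a face of that simplex rather than isolating a point. The trick will be to engineer the profile so that the efficiency frontier is a single symmetric lottery --- for instance, by arranging voters' indifference classes so that any attempt to shift probability toward one alternative's orbit can be Pareto-improved by the symmetric shift, forcing uniformity, while any attempt to support a Pareto-dominated alternative is forbidden outright. Chaining two such profiles $R$ and $R'$ whose forced lotteries are \emph{different} (because the single deviating voter's indifference structure changes which lottery is efficient) is where the contradiction with weak strategyproofness crystallizes, and balancing these two rigidity requirements against each other is the genuinely hard design problem. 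Because the statement is attributed to \citet{BBEG16a} and the authors advertise a \emph{simplified} proof, I would expect the intended argument to hinge on one especially economical profile pair rather than a long case analysis, so I would invest effort in searching for that minimal witness before attempting any general reduction.
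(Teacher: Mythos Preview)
Your overall strategy---use anonymity plus neutrality to force equal probabilities on an orbit, then apply \emph{ex ante} efficiency to collapse the remaining freedom, and finally exhibit a manipulation---is exactly the right template, and the paper's proof does begin precisely this way: at its starting profile $R^1$, the symmetry $a\leftrightarrow c$, $b\leftrightarrow d$ combined with \emph{ex ante} efficiency forces $f(R^1,a)=f(R^1,c)=\tfrac12$.

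The genuine gap is your expectation that a \emph{single} deviation pair $(R,R')$ will suffice, with both endpoints pinned down by symmetry and efficiency alone. This is where your plan would stall. At the profile $R'$ reached by one voter's deviation from a highly symmetric $R$, the symmetry is typically broken, so anonymity and neutrality no longer determine $f(R')$; and \emph{ex ante} efficiency, as you yourself note, only cuts away a face of the simplex. The paper's resolution is not to find a cleverer single pair but to use weak strategyproofness itself as a propagation tool: starting from $R^1$ (where the lottery is forced), a chain of roughly a dozen single-voter deviations is constructed, and at each step weak strategyproofness transfers an inequality (e.g.\ $f(R^2,c)\geq\tfrac12$, then $f(R^3,a)+f(R^3,b)\leq\tfrac12$, and so on) to the next profile. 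Symmetry and efficiency are invoked again only at a handful of waypoints ($R^4$, $R^7$, $R^8$) where fresh rigidity is available. The contradiction arrives only after three such chains converge on a profile $R^6$ where the accumulated constraints force $f(R^6,b)=1$, and a final short chain produces a manipulation.

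So your diagnosis of step~(iii) as the crux is correct, but the fix is not a better profile---it is to abandon the two-profile hope and let weak strategyproofness do most of the work transitively. The ``simplified'' proof advertised in the paper is simplified relative to the original computer-generated proof of \citet{BBEG16a} (47 canonical profiles down to 10), not simplified to a one-step argument.
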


\citet{BBEG16a} have shown this result by a computer-generated proof, which reasons over 47 (canonical) preference profiles. As it is very difficult for humans to verify the correctness of this 14-page proof, \citeauthor{BBEG16a} had the proof checked by the interactive theorem prover Isabelle/HOL. 
By contrast, we give a rather simple proof of this result in the appendix, which argues over only $13$ profiles ($10$ canonical profiles) and takes less than two pages. Its correctness can easily be verified by the avid reader.

\Cref{thm:SDimpossibity} crucially relies on \emph{ex ante} efficiency. Indeed, if \emph{ex ante} efficiency is replaced with \emph{ex post} efficiency, random serial dictatorship satisfies all the axioms \citep{ABBB15a}. This SDS randomly chooses an order over the voters and each voter in the sequence then acts as dictator, breaking the ties left by the previous dictators. 
This leads to the question whether there are also reasonable \emph{even-chance} SDSs on $\mathcal{R}^N$ that satisfy weak strategyproofness and \emph{ex post} efficiency. Unfortunately, it turns out that this is not the case: every such SDS can only randomize over the top-ranked alternatives of at most two voters. To make this formal, we say an SDS $f$ is \emph{dictatorial} if there is a voter $i\in N$ such that $f(R, T_i(R))=1$ for all profiles $R$, and \emph{bidictatorial} if there are two distinct voters $i,j\in N$ such that $f(R, T_i(R)\cup T_j(R))=1$ for all profiles $R$. Clearly, dictatorial and bidictatorial SDSs are undesirable as at most two voters can influence which alternatives are returned with positive probability.

\begin{restatable}{theorem}{bidictatorship}\label{thm:bidicatorship}
    Assume $m\geq 3$ and $n\geq 3$. Every \emph{ex post} efficient and weakly strategyproof even-chance SDS on $\mathcal{R}^N$ is dictatorial or bidictatorial. 
\end{restatable}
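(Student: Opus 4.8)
The plan is to exploit the rigidity of uniform lotteries together with the especially simple form that weak strategyproofness takes on \emph{dichotomous} preferences. Since $f$ is even-chance, I identify $f(R)$ with its support $X\subseteq A$, so that $f(R,B)=\lvert X\cap B\rvert/\lvert X\rvert$ for every $B\subseteq A$. If a voter $i$ reports a dichotomous preference whose only non-trivial upper contour set is a top set $B$ (that is, $U(\succsim_i,x)=B$ for $x\in B$ and $U(\succsim_i,x)=A$ otherwise), then $f(R')\succ_i f(R)$ holds exactly when $\lvert f(R')\cap B\rvert/\lvert f(R')\rvert>\lvert f(R)\cap B\rvert/\lvert f(R)\rvert$, so weak strategyproofness forbids any voter from strictly increasing the \emph{fraction} of the support contained in his top set. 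Taking $B=\{x\}$ yields the key \emph{monotonicity lemma}: fixing the other voters, voter $i$ maximizes $f(R,x)$ by reporting the dichotomous preference with sole top $x$. Alongside this I would establish an \emph{invariance lemma}---that weakening an alternative $x$ with $f(R,x)=0$ in a single voter's preference leaves the outcome unchanged, proved from weak strategyproofness much as in the proof of \Cref{thm:impSDS}---which lets me move between profiles while controlling the support.

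Next I use \emph{ex post} efficiency to localize the support. On a profile in which every voter $i$ has a single strict top $t_i$ and is indifferent among all remaining alternatives, any alternative that nobody tops lies in the bottom indifference class of every voter and is therefore Pareto-dominated by any topped alternative, whereas each topped alternative is Pareto-optimal; hence $f(R)\subseteq\{t_1,\dots,t_n\}$ on all such \emph{single-top profiles}. Combining the invariance and monotonicity lemmas, I expect $f$ to be tops-only on $\mathcal{R}^N$, so that it suffices to understand $f$ on profiles described purely by the voters' top sets, with single-top profiles as the decisive subdomain. The problem thereby reduces to classifying the induced \emph{nomination correspondence} $(t_1,\dots,t_n)\mapsto f(R)$, which sends each vector of nominated alternatives to a uniform lottery over a subset of the nominees and inherits both lemmas and weak strategyproofness.

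The heart of the proof is to show that this nomination correspondence is governed by at most two fixed voters. Here I would run a pivotal-voter argument in the spirit of the Gibbard--Satterthwaite and random-dictatorship theorems: starting from a unanimous profile, where \emph{ex post} efficiency forces the common top to receive probability~$1$, I switch voters' nominations one at a time and invoke the monotonicity lemma to make each alternative's probability a monotone, threshold-shaped function of its set of nominators. The even-chance constraint makes these thresholds extremely rigid, since the only attainable probabilities are the reciprocals $1,\frac{1}{2},\frac{1}{3},\dots$ of the support size; combining this rigidity with weak strategyproofness on profiles in which three voters nominate three distinct alternatives rules out any genuine three-way split of the support and isolates two voters $i,j$ whose nominations always contain it. I expect this combinatorial core to be the main obstacle: the delicate points are to show that the \emph{same} pair $i,j$ works for every nomination vector---rather than a pair that varies with the profile---and to rule out the subtle partial-support configurations that even-chance uniquely permits.

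Finally, I assemble the conclusion. The tops-onlyness reduction and the single-top classification together show that, for the identified voters $i,j$, no alternative outside $T_i(R)\cup T_j(R)$ can carry positive probability in any profile $R$: collapsing each voter to a dichotomous report with top set $T_k(R)$ and applying the approved-fraction form of weak strategyproofness transfers the single-top bound $f\subseteq\{t_i,t_j\}$ to arbitrary top-set profiles. This gives $f(R,T_i(R)\cup T_j(R))=1$ for every $R$, so $f$ is bidictatorial; the degenerate case of the pivotal argument, in which a single voter already controls the support, yields the dictatorial alternative and completes the dichotomy.
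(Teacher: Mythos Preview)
Your central reduction fails: the claim that every weakly strategyproof, \emph{ex post} efficient, even-chance SDS on $\mathcal{R}^N$ is tops-only is simply false. Take serial dictatorship with a fixed order over all $n$ voters (voter~$1$ restricts to $T_1(R)$, voter~$2$ to his favorites within that set, and so on). This rule is even-chance, \emph{ex post} efficient, and even strongly strategyproof, yet it is not tops-only: with $m=n=3$, voter~$1$ reporting $\{a,b\},c$, voter~$3$ fully indifferent, and voter~$2$ reporting $c,a,b$ versus $c,b,a$, the top sets are identical ($T_2=\{c\}$ in both) but the outputs are $\{a\}$ and $\{b\}$. So there is no well-defined ``nomination correspondence'' to classify, and your pivotal-voter step---already the vaguest part of the plan---has nothing to act on. A related gap sits in your transfer step: your ``single-top bound'' is derived only for profiles where each $T_k$ is a singleton, but collapsing a voter to a dichotomous report with top set $T_k(R)$ does not land you in that subdomain when $|T_k(R)|>1$, so the approved-fraction argument does not carry the bound across. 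Finally, your invariance lemma is lifted from the proof of \Cref{thm:impSDS}, but \Cref{lem:SMON} uses pairwiseness and a variable electorate essentially (it adds two voters with inverse preferences); that device is unavailable here.

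The paper avoids tops-onlyness entirely. It works with \emph{decisive} and \emph{nominating} coalitions: a group $G$ is decisive if $f(R)\subseteq T_i(R)$ whenever all $i\in G$ agree, and it invokes a lemma of \citet{BBL21b} stating that $G$ is decisive iff $N\setminus G$ is not nominating. The technical core is a \emph{contraction lemma}---if two decisive groups of equal size overlap in all but one voter each, their intersection is decisive---proved by a careful case analysis over concrete weak-preference profiles. From this one shows there are at most two weak dictators and that their set $G$ is itself decisive; the final step then argues, by case distinction on how $T_i(R)$ and $T_j(R)$ intersect, that $f(R)\subseteq T_i(R)\cup T_j(R)$ for every $R$. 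This coalition-contraction machinery is what replaces your hoped-for tops-only reduction, and it is where the real work lies.
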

\begin{proof}[Proof Sketch]
    Let $f$ denote an even-chance SDS that is \emph{ex post} efficient and weakly strategyproof. The proof of this theorem is focusing on the decisive groups and weak dictators of $f$. To this end, we say that a voter $i$ is a weak dictator for $f$ if $f(R, T_i(R))>0$ for all profiles $R$ and a group of voters $G\subseteq N$ is decisive for $f$ if $f(R, T_i(R))=1$ for all voters $i\in G$ and profiles $R$ such that all voters in $G$ report the same preference relation in $R$.
    First, a result of \citet{BBL21b} implies that if a voter $i$ is not a weak dictator for $f$, then $N\setminus \{i\}$ is decisive for $f$. We next show that there are at least one and at most two weak dictators $i,j$ for $f$, so we get for every voter $h\not\in \{i,j\}$ that $N\setminus \{h\}$ is decisive. The last insight for our proof is a contraction lemma stating that if there are two decisive groups $G$ and $G'$ for $f$ such that $|G|=|G'|$ and $|G\cap G'|=|G|-1$, then $G\cap G'$ is also decisive for $f$. By applying this to our decisive groups, we infer that the set of weak dictators is decisive. Based on this observation, we finally show that $f$ is (bi)dictatorial. 
\end{proof}

\remark
All axioms are required for \Cref{thm:bidicatorship}. Every constant even-chance SDS only fails \emph{ex post} efficiency, the SDS that randomizes uniformly over the Pareto-optimal alternatives is \emph{ex post} efficient and not (bi)dictatorial. Finally, random serial dictatorship satisfies all axioms but it is not even-chance nor (bi)dictatorial.

\remark
\Cref{thm:bidicatorship} has interesting connections to known results. Firstly, based on much stronger strategyproofness notions, \citet{Feld80a} and \citet{BDS01a} show that all strategyproof and \emph{ex post} efficient even-chance SDSs are dictatorial or bidictatorial when the voters' preferences are strict. For instance, \citet{Feld80a} uses strong strategyproofness and his result is thus a corollary of the theorem by \citet{Gibb77a}. 
\Cref{thm:bidicatorship} demonstrates that a much weaker strategyproofness notion still allows to deduce the same result when allowing for weak preferences. Secondly, \Cref{thm:bidicatorship} is related to a result by \citet{BSS19a} who show that no set-valued voting rule on $\mathcal{R}^N$ satisfies anonymity, \emph{ex post} efficiency, and a strategyproofness notion due to \citet{Fish72a}. When interpreting even-chance SDSs as set-valued voting rules, weak strategyproofness is only slightly stronger than Fishburn-strategyproofness, but it yields the much more restrictive conclusion of (bi)dictatorial rules. Moreover, \Cref{thm:bidicatorship} is stronger than Corollary 2 of \citeauthor{BSS19a}.

\section{Conclusion}

In this paper, we study randomized voting rules, so-called social decision schemes (SDSs), with respect to weak strategyproofness. This strategyproofness notion only deems a manipulation successful if it increases the voter’s expected utility for all utility functions consistent with his ordinal preferences. We show that weak strategyproofness allows for some positive results. For example, in contrast to results on strong strategyproofness \citep{BLR23a}, there are Condorcet-consistent weakly strategyproof SDSs that are approximately \emph{ex post} efficient. 
We also explore the limitations of weak strategyproofness and show, e.g., that no even-chance SDS simultaneously satisfies weak strategyproofness, Condorcet-consistency, and \emph{ex post} efficiency when preferences are strict. Moreover, we prove much more severe impossibility theorems for weak preferences, highlighting a sharp contrast between strict and weak preferences. We refer to \Cref{tab:resultssummary} for a complete overview of our results.

Our work points to a number of interesting and challenging open questions: firstly, based on our results in \Cref{subsec:strictprefneg}, we conjecture that no weakly strategyproof SDS satisfies both Condorcet-consistency and \emph{ex post} efficiency. If this conjecture was true, this result would effectively unify several results analyzing the existence of strategyproof and Condorcet-consistent SDSs \citep[e.g.,][]{Lede21c,BLS22c,BrLe21a}. Secondly, our negative results for the case of weak preferences lead to the question of whether all weakly strategyproof and \emph{ex post} efficient SDSs on the domain of weak preferences only randomize over the top-ranked alternatives of the voters. 

\section*{Acknowledgements}

We thank Alexander Thole and Ren\'e Romen for helpful discussions. This work was supported by the Deutsche Forschungsgemeinschaft under grants BR 2312/11-2 and BR 2312/12-1, and by the NSF-CSIRO grant on "Fair Sequential Collective Decision-Making" (RG230833).

\clearpage
\appendix

\section{Omitted Proofs}

In this appendix, we present all proofs that have been omitted from the main body. For a better readability, we put the proof of every theorem into an own subsection. 

\subsection{Proof of \Cref{thm:profscoring}}

We start by discussing the full proof of \Cref{thm:profscoring}.

\profscoring*
\begin{proof}
    Let $f$ denote a score-based SDS and let $s$ denote its corresponding score function. Moreover, consider two strict preference profiles $R$ and $R'$ that differ only in the preference relation of a single voter $i$. For simplicity, we will assume that voter $i$'s preference relation is given by ${\succsim_i}=x_1,x_2,\dots, x_m$. Our goal is to show that $f(R')\not\succ_i f(R)$, which implies that $f$ is weakly strategyproof. To this end, we define the profiles $R^{1}, \dots, R^k$ by \emph{(i)} $R^1=R$, \emph{(ii)} $R^k=R'$, $R^{k}=(R^{k-1})^{i:yx}$ for some alternatives $x,y\in A$, and \emph{(iii)} each pair of alternatives is swapped at most once in the sequence. In particular, condition \emph{(iii)} ensures that if $x\succ_i y$ and $x\succ_i' y$, then $x\succ_i^k y$ for every profile $R^k$ in our sequence. Moreover, we define $s_\mathit{total}(R)=\sum_{x\in A} s(R,x)$ and $s_\mathit{total}(R')=\sum_{x\in A} s(R',x)$ as the total scores assigned to the alternatives in $R$ and $R'$, and consider a case distinction with respect to these parameters. \medskip

    \textbf{Case 1: ${s_\mathit{total}(R)<s_\mathit{total}(R')}$.} First, we assume that $s_\mathit{total}(R)<s_\mathit{total}(R')$. In particular, this means that $s(R,x)\neq\infty$ for all $x\in A$. In this case, we let $x^*$ denote the most preferred alternative of voter $i$ with $s(R,x)>0$; such an alternative exists because $\sum_{y\in A} s(R,y)>0$ by definition. Next, let $h$ denote the index such that $x^*=x_{h}$ and note that $s(R,x_{\ell})=0$ for all $\ell<h$. To prove the theorem in this case, we will first show that $s(R^j, x_\ell)=0$ for all $j\in \{1,\dots, k\}$ and $\ell<h$. We start by considering the alternative $x_1$ and two profiles $R^{j}$, $R^{j+1}$ in our sequence, and assume that $s(R^{j},x)=0$. If the swap from $R^j$ to $R^{j+1}$ does not involve $x_1$, we have that $s(R^{j+1}, x)=s(R^{j},x)=0$ due to the localizedness of $s$. On the other hand, if the swap from $R^j$ to $R^{j+1}$ involves $x_1$, it must have been weakened as the sequence $R^1,\dots, R^k$ swaps each pair of alternatives at most once. The monotonicity of $s$ thus shows that $s(R^{j},x_1)\geq s(R^{j+1},x_1)$, which implies that $s(R^{j+1},x_1)=0$. Since $s(R^1,x_1)=0$, we can repeatedly apply these arguments to infer that $s(R^j,x_1)=0$ for all profiles $R^j$ in our sequence.
    
    Next, we assume inductively that there is $\ell\in \{1,\dots, h-2\}$ such that $s(R^j,x_{\ell'})=0$ for all $j\in \{1,\dots, k\}$ and $\ell'\in \{1,\dots, \ell\}$. We will show again that that $s(R^j,x_{\ell+1})=0$ implies that $s(R^{j+1}, x_{\ell+1})=0$ for all $j\in \{1,\dots,k-1\}$. To this end, consider two arbitrary profiles $R^j$ and $R^{j+1}$ on our sequence and assume that $s(R^j,x_{\ell+1})=0$. If $x_{\ell+1}$ is not involved in the swap from $R^j$ to $R^{j+1}$, then $s(R^{j+1}, x_{\ell+1})=s(R^j,x_{\ell+1})=0$ by localizedness. Next, if $x_{\ell+1}$ is weakened from $R^j$ to $R^{j+1}$, then $s(R^{j+1}, x_{\ell+1})=s(R^j,x_{\ell+1})=0$ due to monotonicity. Finally, if $x_{\ell+1}$ is reinforced against an alternative $y$ when going from $R^j$ to $R^{j+1}$, $y$ must be in the set $\{x_1,\dots, x_\ell\}$. Since $s(R^{j},y)=s(R^{j+1},y)=0$ by the induction hypothesis and $s(R^j,x_{\ell+1})=0$ by assumption, we infer that $s(R^{j+1}, x_{\ell+1})=0$ by balancedness. Finally, by repeatedly applying this argument, we infer that $s(R^j, x_{\ell+1})=0$ for all $j\in \{1,\dots, k\}$ as $s(R^1, x_{\ell+1})=0$. This proves the induction step and hence also our auxiliary claim.
    
    Finally, we will show that $s(R,x^*)\geq s(R',x^*)$, which implies that $f(R,x^*)=\frac{s(R,x^*)}{s_\mathit{total}(R)}>\frac{s(R',x^*}{s_\mathit{total}(R')}=f(R',x^*)$. Since $f(R, x)=0=f(R',x)$ for all $x\in A$ with $x\succ_i x^*$, this shows that $f(R, U(\succsim_i,x^*))>f(R', U(\succsim_i,x^*))$, so voter $i$ cannot manipulate by deviating from $R$ to $R'$. To show that $s(R,x^*)\geq s(R',x^*)$, we consider again two profiles $R^{j}$ and $R^{j+1}$ in our sequence and assume that $s(R^j,x^*)\neq\infty$. If the swap from $R^j$ to $R^{j+1}$ does not involve $x^*$ or weakens $x^*$, we can derive that $s(R^j,x^*)\geq s(R^{j+1},x^*)$ analogously to the previous analysis. Finally, if $x^*$ is reinforced from $R^j$ to $R^{j+1}$, it is reinforced against an alternative $y$ with $y\succ_i x^*$. Since $s(R^j,y)=s(R^{j+1},y)=0$ and $s(R^j,y)\neq\infty$, balancedness implies that $s(R^j,x^*)=s(R^{j+1},x^*)$. By chaining these arguments and observing that $s(R,x^*)\neq\infty$ as $s_\mathit{total}(R)<\infty$, it follows that $s(R,x^*)\geq s(R',x^*)$.\medskip

    \textbf{Case 2: ${s_\mathit{total}(R)>s_\mathit{total}(R')}$.} For the second case, suppose that $s_\mathit{total}(R)>s_\mathit{total}(R')$, which implies that $s_\mathit{total}(R')<\infty$. In this case, let $x^*$ and $x'$ denote voter $i$'s least favorite alternatives in $R$ such that $s(R,x^*)>0$ and $s(R',x')>0$, respectively. First, if $x^*\succ_i x'$, voter $i$ cannot manipulate by deviating from $R$ to $R'$ as $f(R', x')=\frac{s(R',x')}{s_\mathit{total}(R')}>0$ and thus $f(R, {U(\succsim_i, x^*)})=1>f(R', {U(\succsim_i, x^*)})$. Next, let $h$ denote the index such that $x^*=x_h$, which means that $s(R,x_\ell)=s(R',x_\ell)=0$ for all $\ell\in \{h+1, \dots, m\}$. Our aim is to show that $s(R',x^*)\geq s(R,x^*)$ as this implies that voter $i$ cannot manipulate by deviating from $R$ to $R'$. In more detail, this means that $s(R,x^*)\neq\infty$ because $s_\mathit{total}(R)>s_\mathit{total}(R')$ implies that $s(R',x^*)\neq\infty$. In turn, we infer from $s(R',x^*)\geq s(R,x^*)$ and $s_\mathit{total}(R)>s_\mathit{total}(R')$ that $f(R,x^*)=\frac{s(R,x^*)}{s_\mathit{total}(R)}<\frac{s(R',x^*)}{s_\mathit{total}(R')}=f(R',x^*)$. Because $f(R,x)=f(R',x)=0$ for all $x\in A$ with $x^*\succ_ix$, this means that $f(R, {U(\succsim_i,x^*)\setminus \{x^*\}})=1- f(R, x^*)>1-f(R',x^*)=f(R', {U(\succsim_i,x^*)\setminus \{x^*\}})$, which demonstrates that voter $i$ cannot manipulate by deviating from $R$ to~$R'$. 

    To prove that $s(R',x^*)\geq s(R,x^*)$, we will first show that $s(R^j,x_\ell)=0$ for all profiles $R^j$ in our sequence and alternatives $x_\ell\in \{x_{h+1}, \dots, x_m\}$. To this end, we first consider alternative $x_m$ and two profiles $R^j$ and $R^{j+1}$ on our sequence. Since each pair of alternatives is swapped at most once along our sequence, $x_m$ is never weakened in our analysis, so it follows from localizedness and monotonicity that $s(R^{j+1},x_m)\geq s(R^j, x_m)$ for all $j\in \{1,\dots, k-1\}$. Finally, since $s(R^k, x_m)=0$, this implies that $s(R^j,x_m)=0$ for all profiles in our sequence. 
    
    Next, assume inductively that there is $\ell\in \{h+2,\dots, m\}$ such $s(R^j,x_{\ell'})=0$ for all $j\in \{1,\dots, k\}$ and alternatives $x_{\ell'}$ with $\ell'\in \{\ell,\dots, m\}$. We will show that the same holds for the alternative $x_{\ell-1}$. To this end consider two profiles $R^j$, $R^{j+1}$ in the sequence and suppose that $s(R^{j+1},x_{\ell-1})=0$. Now, if $x_{\ell-1}$ is not moved at all or reinforced when going from $R^{j}$ to $R^{j+1}$, then it follows from localizedness and monotonicity that $s(R^j,x_{\ell-1})=0$, too. On the other hand, if $x_{\ell-1}$ is weakened against another alternative $x$ when going from $R^{j}$ to $R^{j+1}$, then $x\in \{x_\ell,\dots,x_m\}$. Put differently, this means that we reinforce $x$ against $x_{\ell-1}$ in this step. 
    Since the induction hypothesis implies that $s(R^{j+1},x)=s(R^j,x)=0$, it follows from balancedness that $s(R^j,x_{\ell-1})=s(R^{j+1}, x_{\ell-1})=0$. Finally, since $s(R^k,x_{\ell-1})=0$, we conclude that $s(R^j,x_\ell)=0$ for all $j\in \{1,\dots, k\}$ and $\ell\in \{1,\dots, h\}$. 

    As the last step, we will show that $s(R',x^*)\geq s(R,x^*)$. To this end, we consider again two profiles $R^j$ and $R^{j+1}$ in our sequence from $R$ to $R'$. If $x^*$ is not swapped at all or reinforced when going from $R^{j}$ to $R^{j+1}$, it follows that $s(R^{j+1},x^*)\geq s(R^j,x^*)$ due to the localizedness and monotonicity of $s$. By contrast, if $x^*$ is weakened against another alternative $x$ when going from $R^{j}$ to $R^{j+1}$, then $x$ is in $\{x_{h+1},\dots, x_m\}$ as our sequence swaps each pair of alternatives at most once. Since $s(R^{j+1},x)=s(R^j,x)=0$ by our previous analysis, balancedness implies that $s(R^{j},x^*)=s(R^{j+1},x^*)$. By repeatedly applying this argument, we finally infer that $s(R',x^*)=s(R^k, x^*)\geq s(R^1,x^*)=s(R,x^*)$. \medskip

    \textbf{Case 3: ${s_\mathit{total}(R)=s_\mathit{total}(R')}<\infty$.}
    Next, we suppose that ${s_\mathit{total}(R)=s_\mathit{total}(R')}<\infty$. For this case, it is necessary to specify the sequence $R^1,\dots, R^k$ that transforms $R$ to $R'$. In particular, consider the following sequence: for deriving $R^{j+1}$ from $R^j$, we identify the most-preferred alternative $x$ according to $\succsim_i'$ that is not ranked at its correct position, and we reinforce $x$ against the alternative directly above it. Less formally, this results in the following sequence: starting at $R$, we first reinforce voter $i$'s favorite alternative $x_1'$ in $R'$ until we arrive at a profile where $x_1'$ is his top-ranked alternative; next, we reinforce voter $i$'s second-favorite alternative $x_2'$ in $R'$ until we arrive at a profile where $x_1'$ is his favorite alternative and $x_2'$ is his second-favorite alternative; next, we will repeat the same process with his third-most preferred alternative $x_3'$ in $R'$ and so on. It should be easy to see that this sequence of profiles $R^1,\dots, R^k$ indeed satisfies our three critera stated in the beginning. 
    
    We will next prove that $s_\mathit{total}(R)=s_\mathit{total}(R')<\infty$ implies that $s_\mathit{total}(R^j)<\infty$ for all profiles in this sequence. To this end, assume for contradiction that this is not the case and let $R^j$ denote the first profile in our sequence such that $s_\mathit{total}(R^j)=\infty$. In particular, we have that $s_\mathit{total}(R^{j-1})<\infty$. Now, let $x^*$ denote the alternative that is reinforced when going from $R^{j-1}$ to $R^j$. By localizedness and monotonicity, it must be that $s_\mathit{total}(R^j,x^*)=\infty$ because the score of every other alternative is weakly decreasing. Furthermore, by the definition of our sequence, we will keep reinforcing $x^*$ until it is at its correct position. Monotonicity implies for the corresponding steps that the score of $x^*$ has to remain $\infty$. Finally, since $x^*$ is now at its correct position, it will not be moved anymore, so localizedness implies that it score will always be $\infty$, i.e., it holds that $s(R^{j'},x^*)=\infty$ for all profiles $R^{j'}$ with $j'\in \{j,\dots, k\}$. However, this means that $s(R',x^*)=\infty$, which contradicts that $s_\mathit{total}(R')<\infty$.

    For the remainder of this case, we will hence assume that $s_\mathit{total}(R^j)<\infty$ for all $j\in \{1,\dots, k\}$. In particular, this means that there are no exceptions to balancedness, i.e., if we reinforce (or weaken) an alternative $x$ against another alternative $y$ when moving from a profile $R^j$ to $R^{j+1}$ and $s(R^j,x)=s(R^{j+1},x)$, then $s(R^j,y)=s(R^{j+1},y)$. Furthermore, if $s(R,x)=s(R',x)$ for all $x\in A$, then $f(R)=f(R')$ and voter $i$ cannot manipulate by deviating from $R$ to $R'$. Hence, suppose that there is an alternative $x\in A$ such that $s(R,x)\neq s(R',x)$ and let $x^*$ denote voter $i$'s favorite alternative in $R$ with $s(R,x^*)\neq s(R',x^*)$. Moreover, we let $h$ denote the index such that $x^*=x_h$, which means that $s(R,x_\ell)=s(R',x_\ell)$ for all $x_\ell\in \{x_1,\dots, x_{h-1}\}$. We will show that $s(R,x^*)>s(R',x^*)$ because then $f(R, U(\succsim_i, x^*))>f(R', U(\succsim_i,x^*))$. 
    
    To this end, we will first show show that $s(R^j, x_\ell)=s(R, x_\ell)$ for all profiles $R^j$ on our sequence and all alternatives $x_\ell$ with $x_\ell\in \{x_1, \dots, x_{h-1}\}$. We consider first the alternative $x_1$. Since this is voter $i$'s favorite alternative in $R$, this alternative is never reinforced in our sequence. Hence, it follows from localizedness and monotonicity that $s(R^j,x_1)\geq s(R^{j+1},x_1)$ for all $j$. Due to the assumption that $s(R^1,x_1)=s(R^k,x_1)$, we then derive that $s(R^j,x_1)=s(R,x_1)$ for all $j\in \{1,\dots, k\}$. Next, assume inductively that there is an index $\ell\in \{1,\dots, h-2\}$ such that $s(R^j,x)=s(R,x)$ for all $x\in \{x_1, \dots, x_\ell\}$ and all profiles $R^j$ in our sequence. We will show that the same holds for $x_{\ell+1}$. For this, we consider two profiles $R^{j}$ and $R^{j+1}$ in our sequence. If $x_{\ell+1}$ is not moved at all or weakened in this step, then $s(R^j, x_{\ell+1})\geq s(R^{j+1}, x_{\ell+1})$ due to localizedness and monotonicity. On the other hand, if $x_{\ell+1}$ is reinforced when going from $R^j$ to $R^{j+1}$, it must be reinforced against an alternative $x\in \{x_1,\dots, x_\ell\}$. Since $s(R^j,x)=s(R^{j+1},x)$ by our induction hypothesis, $s_\mathit{total}(R^j)<\infty$, and $s_\mathit{total}(R^{j+1})<\infty$, we infer from the balancedness of $s$ that $s(R^j,x_{\ell+1})=s(R^{j+1},x_{\ell+1})$. In summary, we have $s(R^j,x_{\ell+1})\geq s(R^{j+1},x_{\ell+1})$ for all $j\in \{1,\dots, k-1\}$ and since $s(R^1,x_{\ell+1})=s(R^k,x_{\ell+1})$, it follows again that all these inequalities are tight. This completes the induction step. 

    Finally, we will show that $s(R,x^*)\geq s(R',x^*)$, which implies that $s(R,x^*)>s(R',x^*)$ as $s(R,x^*)\neq s(R',x^*)$. To this end, consider again two profiles $R^j$ and $R^{j+1}$ in our sequence. If $x^*$ is not moved or weakened when going from $R^j$ to $R^{j+1}$, localizedness and monotonicity imply that $s(R^j, x^*)\geq s(R^{j+1},x^*)$. On the other hand, if $x^*$ is reinforced in this step, it is reinforced against an alternative $x\in \{x_1,\dots, x_{h-1}\}$. Since $s(R^j,x)=s(R^{j+1},x)$, $s_\mathit{total}(R^j)<\infty$, and $s_\mathit{total}(R^{j+1})<\infty$, balancedness shows that $s(R^j,x^*)=s(R^{j+1},x^*)$. This proves that $s(R^j,x^*)\geq s(R^{j+1},x^*)$ and inductively applying this argument then shows that $s(R,x^*)\geq s(R',x^*)$.\medskip

    \textbf{Case 4: ${s_\mathit{total}(R)=s_\mathit{total}(R')}=\infty$.}
    As the last case, we assume that ${s_\mathit{total}(R)=s_\mathit{total}(R')}=\infty$. This means that there are two alternatives $x^*$ and $x'$ such that $s(R,x^*)=\infty$ and $s(R',x')=\infty$. Moreover, as there is at most one alternative with a score of infinity of a profile, it follows that $f(R,x^*)=1$ and $f(R',x')=1$. Now, if $x^*=x'$, it is certainly no manipulation to deviate from $R$ to $R'$, so we assume that $x\neq x^*$. In this case, we will prove that $x^*\succ_i x'$, which shows that voter $i$ cannot manipulate by deviating from $R$ to $R'$. 

    For this case, we will again consider a specific sequence of profiles between $R$ and $R'$. To this end, let $U=\{x\in X\colon x\succ_i x^*\}$, $L=\{x\in X\colon x^*\succ_i x\}$, $U'=\{x\in X\colon x\succ_i' x^*\}$, and $L'=\{x\in X\colon x^*\succ_i' x\}$. Now, consider first the profile $R^1$ where voter $i$ reports the preference relation $\succsim_i^1$ defined by $u\succ_i^1 x^*\succ_i^1 \ell$ for all $u\in U$, $\ell\in L$, $u\succ_i^1 u'$ iff $u\succ_i' u'$ for all $u,u'\in U$, and $\ell\succ_i^1 \ell'$ if and only if $\ell\succ_i' \ell'$ for all $\ell,\ell'\in L$. Put less formally, we derive $R^1$ from $R$ by letting voter $i$ sort the alternatives in $U$ and $L$ according to $\succsim_i'$. Now, it should be clear that we can transform $R$ into $R^1$ by pairwise swaps that do not involve $x^*$, so we can infer from a repeated application of localizedness that $s(R^1,x^*)=\infty$. 
    Next, let $R^2$ denote the profile derived from $R^1$ by reinforcing $x^*$ against all alternatives in $U\setminus U'$. We observe that these alternatives are ranked directly ahead of $x^*$ in $\succsim_i^1$ since $U\setminus U'\subseteq L'$ and $u\succ_i' x^*\succ_i' \ell$ for all $u\in U'$, $\ell\in L'$. Hence, we only need to reinforce $x^*$ in this step, so monotonicity implies that $s(R^2,x^*)=\infty$. 
    We note that, in $R^2$, it holds that $x^*\succ_i^2 x$ for all $x\in L\cup L'$. Hence, we now reorder the alternatives $x\in L\cup L'$ in voter $i$'s preference relation according to $\succsim_i'$. Localizedness implies again that $s(R^3,x^*)=\infty$. In particular, this means that $s(R^2,x)<\infty$ for all other alternatives $x\in A\setminus \{x^*\}$ because there can be at most one alternative with a score of infinity. 

    Finally, we can now transform $R^3$ into $R'$ by reinforcing the alternatives in $L\cap U'$ against alternatives in $(U\cap U')\cup \{x^*\}$. In particular, we can transform $R^3$ to $R'$ by pairwise swaps such that no swaps reinforce an alternative in $U$. Hence, monotonicity and localizedness imply that $s(R',x)\leq s(R^3,x)<\infty$ for all $x\in U$. Because we assume that there is an alternative $x'\neq x^*$ with $s(R',x')=\infty$, it follows that that $x'\in L$. Thus, it is again no manipulation for voter $i$ to deviate from $R$ to $R'$. 
\end{proof}

\subsection{Proof of \Cref{thm:topsonly}}\label{subsec:topsonly}

We next will show the second claim of \Cref{thm:topsonly}. To this end, we recall that $s_{\mathit{P}}(R,x)$ denotes the plurality score of alternative $x$ in the profile $R$, i.e., the number of voters who prefer $x$ the most in $R$. Furthermore, for the proof of this claim, we will view even-chance SDSs as functions that return sets of alternatives instead of lotteries. Since even-chance lotteries randomize uniformly over a set of alternatives, this representation is without loss of information. Furthermore, the voters' preferences over lotteries turn into the following preferences over sets of alternatives: a voter $i$ prefers a set $X$ to another set $Y$, denoted by $X\succsim_i Y$, if and only if $\frac{|X\cap U(\succsim_i, x)|}{|X|}\geq \frac{|Y\cap U(\succsim_i, x)|}{|Y|}$ for all $x\in A$. The definition of weak strategyproofness does not change: an even-chance SDS $f$ (interpreted as a set-valued voting rule) is weakly strategyproof if $f(R')\not\succ_i f(R)$ for all profiles $R,R'$ and voters $i\in N$ with ${\succsim_j}={\succsim_j'}$. In this context, weak strategyproofness is equivalent to the well-known notion of even-chance strategyproofness \citep[e.g.,][]{Gard79a,BDS01a,BSS19a}.

\topsonly*
\begin{proof}
    We will only show the second claim here as first statement has been proven in the main body.\medskip
    
    $\impliedby$ We start by showing that every parameterized omninomination rule satisfies our the given axioms. To this end, let $f$ denote a parameterized omninomination rule and let $\theta_1\in \{\lceil\frac{n+1}{2}\rceil,\dots, n+1\}$, $\theta_2\in \{0,\dots, m-1\}$ denote its parameters. It should be clear that $f$ is by definition anonymous, neutral, even-chance, and tops-only. It thus only remains to show that it is weakly strategyproof. For showing this, let $R$ and $R'$ denote two strict preference profiles and $i$ a voter such that ${\succsim_j}={\succsim_j'}$ for all $j\in N\setminus \{i\}$. Moreover, let $x^*$ denote voter $i$'s favorite alternative in $R$ and $y^*$ his favorite alternative in $R'$. We first note that, if $x^*=y^*$, then $f(R)=f(R')$ as $f$ is tops-only. We will hence assume that $x^*\neq y^*$ and consider a case distinction for this. 
     
    First, assume that $s_{\mathit{P}}(R,x)\geq\theta_1$ for some alternative $x$, which means that $f(R,x)=1$. Now, if $x=x^*$, voter $i$ cannot manipulate as his favorite alternative is chosen with probability $1$. On the other hand, if voter $i$ prefers $x$ not the most, then $s_{\mathit{P}}(R',x)\geq\theta_1$ and $f(R',x)=1$ and voter $i$ cannot manipulate by deviating from $R$ to $R'$. 
    
    Next, assume that $s_{\mathit{P}}(R,x)<\theta_1$ for all $x\in A$ and $|\textit{OMNI}(R)|\leq \theta_2$. In this case, $f(R)=f^O(R)$ and hence $f(R,x^*)=\frac{1}{|\textit{OMNI}(R)|}$. First, if there is an alternative $x\in A$ with $s_{\mathit{P}}(R',x)\geq \theta_1$, then $f(R',x)=1$ and $f(R',x^*)=0$ as voter $i$ cannot increase the number of voters that top-rank his favorite alternative and thus $x\neq x^*$. Next, if $f(R',x^*)<\frac{1}{|\textit{OMNI}(R)|}$, it is no successful manipulation to deviate to $R'$. We infer from this that $|\textit{OMNI}(R')|\leq |\textit{OMNI}(R)|$. Furthermore, if $\textit{OMNI}(R')=\textit{OMNI}(R)$, then $f(R)=f(R')$ and voter $i$ cannot manipulate. On the other hand, if $\textit{OMNI}(R')\neq \textit{OMNI}(R)$ and $|\textit{OMNI}(R')|\leq |\textit{OMNI}(R)|$, it must be the case that $x^*\not\in \textit{OMNI}(R')$ as this is the only alternative that voter $i$ can potentially remove from this set. However, it then follows that $f(R',x^*)=0$, so $f(R')\not\succ_i f(R)$.
    
    Lastly, assume that $s_{\mathit{P}}(R,x)<\theta_1$ for all $x\in A$ and $|\textit{OMNI}(R)|> \theta_2$. In this case, we have that $f(R,x)=\frac{1}{m}$ for all $x\in A$. If there is an alternative $x$ with $s_{\mathit{P}}(R',x)\geq \theta_1$, then $f(R',x)=1$. However, $x\neq x^*$ as voter $i$ cannot increase the number of voters that improve his favorite alternative. Thus, $f(R',x^*)=0$ and no manipulation is possible. Similarly, if $\max_{x\in A} s_{\mathit{P}}(R',x)<\theta_1$ and $|\textit{OMNI}(R')|\leq \theta_2$, then $x^*\not\in \textit{OMNI}(R')$ and $f(R',x^*)=0$. Finally, if $\max_{x\in A} s_{\mathit{P}}(R',x)<\theta_1$ and $|\textit{OMNI}(R')|> \theta_2$, then $f(R)=f(R')$ and again no manipulation is possible. Hence, we conclude that $f$ is weakly strategyproof.\medskip

    $\implies$ Let $f$ denote a tops-only even-chance SDS that satisfies weak strategyproofness, anonymity, and neutrality. We will prove in four steps that $f$ is a parameterized omninomination rule. To this end, we define $N(R,x)$ as the set of voters that prefer $x$ the most and note that $s_{\mathit{P}}(R,x)=|N(R,x)|$. Moreover, we treat $f$ in this proof as a set-valued voting rule, as described in the beginning of this section. Now, we will first prove that if $s_{\mathit{P}}(R,x)\geq s_{\mathit{P}}(R,y)$ and $y\in f(R)$, then $x\in f(R)$, too. Next, we assume that there is a profile $R$ and a voter $i$ such that $x\not\in f(R)$ even though $N(R,x)\neq\emptyset$. If such a profile does not exist, we can set $\theta_1=n+1$ and proceed to the last step. We then infer that there is a value $\theta_1$ such that $f(R)=\{x\}$ whenever $\theta_1$ or more voters top-rank $x$. As third step, we then prove that $\textit{OMNI}(R)\subseteq f(R)$ for all profiles $R$ with $\max_{y\in A} s_{\mathit{P}}(R,y)<\theta_1$. Hence, a single alternative is chosen if it is top-ranked by at least $\theta_1$ voters, and $f$ otherwise chooses $\textit{OMNI}(R)$ or $A$. For the latter, we note that if $\textit{OMNI}(R)\subsetneq f(R)$, then there is an alternative $x\in f(R)$ with $s_{\mathit{P}}(R,x)=0$, so the first step implies that all alternatives need to be chosen. Finally, we show that, if $f(R)=\textit{OMNI}(R)$ for some profile $R$ with $\textit{OMNI}(R)\neq A$, then the same holds for all profiles $R'$ with $|\textit{OMNI}(R')|<|\textit{OMNI}(R)|$ and $\max_{x\in A} s_{\mathit{P}}(R',x)<\theta_1$. This implies that there is a parameter $\theta_2\in \{0,\dots,m-1\}$ such that $f(R)=\textit{OMNI}(R)$ if $\max_{y\in A} s_{\mathit{P}}(R,y)<\theta_1$ and $|\textit{OMNI}(R)|\leq \theta_2$ and $f(R,x)=\frac{1}{m}$ for all $x\in A$ if $\max_{y\in A} s_{\mathit{P}}(R,y)<\theta_1$ and $|\textit{OMNI}(R)|> \theta_2$.\medskip

    \textbf{Step 1:} Consider a profile $R$ and two alternatives $x,y\in A$ such that $s_{\mathit{P}}(R,x)\geq s_{\mathit{P}}(R,y)$. We will show that $y\in f(R)$ implies that $x\in f(R)$. If $s_{\mathit{P}}(R,x)= s_{\mathit{P}}(R,y)$, this follows immediately from tops-onlyness, anonymity, and neutrality, because $x$ and $y$ are symmetric if we only consider the voters' favorite alternatives. Hence, assume that $s_{\mathit{P}}(R,x)>s_{\mathit{P}}(R,y)$. In this case, we let the voters $i\in N(R,x)$ one after another change their preference relation to one where they top-rank $y$. This results in a sequence of profiles $R^1,R^2,\dots$ such that $R^1=R$ and $s_{\mathit{P}}(R^i,x)=s_{\mathit{P}}(R^{i+1},x)+1$ and $s_{\mathit{P}}(R^i,y)=s_{\mathit{P}}(R^i,y)-1$ for all profiles $R^i$. If $x\not\in f(R^i)$ for some profile $R^i$, then it follows from Claim 1 of this theorem that $f(R^i)=f(R^{i+1})$ as it is impossible that the probability of $x$ decreases. Because $x\not\in f(R^1)$, this means that $f(R^1)=f(R^i)$ for all profiles $R^i$ in our sequence. Now, if there is a profile $R^i$ such that $s_{\mathit{P}}(R^i,x)=s_{\mathit{P}}(R^i,y)$ (which happens if $s_{\mathit{P}}(R,x)-s_{\mathit{P}}(R,y)$ is a multiple of $2$), it follows again from anonymity, neutrality, and tops-onlyness that $x\in f(R^i)$ if and only if $y\in f(R^i)$. However, this contradicts that $f(R)=f(R^i)$. On the other hand, if no such profile exists in our sequence, then there is an index $i$ such that $s_{\mathit{P}}(R^i,x)=s_{\mathit{P}}(R^i,y)+1$ and $s_{\mathit{P}}(R^{i+1},x)=s_{\mathit{P}}(R^{i+1},y)-1$. This implies that $s_{\mathit{P}}(R^i,x)=s_{\mathit{P}}(R^{i+1},y)$ because $s_{\mathit{P}}(R^i,x)-1=s_{\mathit{P}}(R^{i+1},x)=s_{\mathit{P}}(R^{i+1},y)-1$, and that $s_{\mathit{P}}(R^i,y)=s_{\mathit{P}}(R^{i+1},x)$ because $s_{\mathit{P}}(R^i,y)+1=s_{\mathit{P}}(R^i,x)=s_{\mathit{P}}(R^{i+1},x)+1$. Hence, by anonymity, neutrality, and tops-onlyness, we have that $y\in f(R^i)$ if and only if $x\in f(R^{i+1})$, which conflicts again with the observation that $f(R)=f(R^i)=f(R^{i+1})$. Since we exhausted all cases, we conclude that $x\in f(R)$ if  $s_{\mathit{P}}(R,x)\geq s_{\mathit{P}}(R,y)$ and $y\in f(R)$.\medskip

    \textbf{Step 2:} Next, we will show that there is a parameter $\theta_1\in \{\lceil\frac{n+1}{2}\rceil,\dots, n+1\}$ such that $f(R)=\{x\}$ whenever $x$ is top-ranked by at least $\theta_1$ voters. To this end, we assume that there is a profile $R$ such that $\textit{OMNI}(R)\not\subseteq f(R)$. Otherwise, we can simply set $\theta_1=n+1$. Without loss of generality, we order the alternatives according to the plurality score, i.e., we assume that $s_{\mathit{P}}(R,x_1)\geq s_{\mathit{P}}(R,x_2)\geq \dots\geq s_{\mathit{P}}(R,x_m)$. By Step 1, there is an index $i$ such that $f(R)=\{x_1,\dots, x_i\}$. We note that this also means that $s_{\mathit{P}}(R,x_i)>s_{\mathit{P}}(R,x_{i+1})\geq 1$, where the last inequality follows as there is an alternative $x\in \textit{OMNI}(R)\setminus f(R)$. Now, consider the profile $R'$ that is derived from $R$ by making $x_1$ into the favorite alternative of a voter in $N(R,x_i)$. First, if $x_i\not\in f(R')$, then $f(R')\subsetneq f(R)$ since $s_{\mathit{P}}(R',x_i)\geq s_{\mathit{P}}(R',x_j)$ for all $j>i$. On the other hand, if $x_i\in f(R')$, it must hold that $f(R')=f(R)$. Otherwise, Step 1 implies that $f(R)\subseteq f(R')$, which means that the probability of $x_1$ in $R'$ is less than in $R$. However, this conflicts with Claim 1 of this theorem because the probability of $x_1$ is not allowed to decrease when it is made into a voter's favorite alternative. In summary, it holds that $f(R')\subseteq f(R)$. Finally, we can repeat this process until $x_i$ is no longer chosen because we will eventually arrive at a profile $R''$ such that $s_{\mathit{P}}(R'',x_i)=s_{\mathit{P}}(R'',x_{i+1})$. Assuming that $x_i$ is chosen in all profiles before $R''$, our previous analysis shows that $f(R'')\subseteq f(R)$. However, if $x_i\in f(R'')$, Step 1 implies that $x_{i+1}\in f(R'')$, too. Hence, $x_i\not\in f(R'')$ and our process will indeed find a profile $\bar R$ such that $f(\bar R)\subsetneq f(R)$.

    Next, we can repeat the process in the previous paragraph until we arrive at a profile $\hat R$ with $f(\hat R)=\{x\}\subsetneq \textit{OMNI}(\hat R)$ by repeatedly removing the chosen alternative with minimal plurality score from the choice set. Since such a profile exists, we can define $R^*$ and $x^*$ as the profile-alternative pair that minimizes $s_{\mathit{P}}(R^*,x^*)$ subject to $f(R^*)=\{x^*\}$. We then define $\theta_1=s_{\mathit{P}}(R^*,x^*)$ and we will next show that $f(R)=\{x\}$ for all profiles $R$ and alternatives $x$ such that $s_\mathit{P}(R,x)\geq \theta_1$. To this end, consider first a profile $R$ where $x^*$ is top-ranked by at least $\theta_1$ voters. By anonymity, we can assume that $N(R^*,x^*)\subseteq N(R,x^*)$ as we can permute the voters without changing the outcome. Hence, we can derive the profile $R$ from $R^*$ by only changing the favorite alternatives of the voters $i\in N\setminus N(R^*,x^*)$. Since these alternatives are not chosen, Claim 1 of this theorem implies that the outcome is not allowed to change. Hence, we have that $f(R)=\{x^*\}$ for all profiles $R$ in which alternative $x^*$ is top-ranked by at least $\theta_1$ voters. Finally, neutrality generalizes this insight to all alternatives. This then implies that $\theta_1\geq \lceil \frac{n+1}{2}\rceil$; otherwise there can be multiple alternatives that are top-ranked by $\theta_1$ voters.\medskip

    \textbf{Step 3:} We will next show that, if $\max_{x\in A} s_{\mathit{P}}(R,x)<\theta_1$, then all top-ranked alternatives must be chosen. To this end, we assume for contradiction that there is a profile $R$ such that $\max_{x\in A} s_{\mathit{P}}(R,x)<\theta_1$ but $\textit{OMNI}(R)\not\subseteq f(R)$. By Step 2, we then infer that $\theta_1<n$ and that $|f(R)|>1$ as there is no profile $R'$ and alternative $x$ such that $f(R')=\{x\}$ and $s_{\mathit{P}}(R',x)<\theta_1$. Next, we again order the alternatives according to their plurality score, i.e., we assume that $s_{\mathit{P}}(R,x_1)\geq s_{\mathit{P}}(R,x_2)\geq\dots\geq s_{\mathit{P}}(R,x_m)$. Step 1 shows that there is an index $i$ such that $f(R)=\{x_1,\dots, x_{i}\}$ and $s_{\mathit{P}}(R,x_i)>s_{\mathit{P}}(R,x_{i+1})$. Now, let $N^-(R)=\{i\in N\colon T_i(R)\not\subseteq f(R)\}$ and note that $N^-(R)\neq\emptyset$ by assumption. 
    The central insight for this step is that $s_{\mathit{P}}(R,x_1)+|N^-(R)|<\theta_1$. Otherwise (i.e., if $s_{\mathit{P}}(R,x_1)+|N^-(R)|\geq \theta_1$), the voters in $N^-(R)$ can one after another deviate to a preference relation where $x_1$ is their favorite alternative. Claim 1 of this theorem implies for every step that the outcome is not allowed to change, so it holds for the final profile $R'$ that $f(R)=f(R')$. However, $x_1$ is top-ranked by at least $\theta_1$ voters in $R'$. Hence, Step 2 implies that $f(R')=\{x_1\}$. These two observations contradict each other as $|f(R)|>1$, so we conclude that $s_{\mathit{P}}(R,x_1)+|N^-(R)|< \theta_1$. 

    We will next focus on the case that $s_{\mathit{P}}(R,x_1)+|N^-(R)|< \theta_1$. To this end, let $R'$ denote the profile derived from $R$ by letting making $x_1$ into the favorite alternative of a voter in $N(R,x_2)$. First, we observe that $s_{\mathit{P}}(R',x_1)\leq s_{\mathit{P}}(R,x_1)+|N^-(R)|<\theta_1$, so $|f(R')|>1$. Furthermore, $s_{\mathit{P}}(R',x_2)\geq s_{\mathit{P}}(R',x_{i+1})$ because $s_{\mathit{P}}(R,x_2)> s_{\mathit{P}}(R,x_{i+1})$. Hence, $f(R)\subsetneq f(R')$ if $x_{i+1}\in f(R')$ by Step 1. However, this means that the probability of $x_1$ decreases when making it into the favorite alternative of a voter. This conflicts with Claim 1, so $x_{i+1}\not\in f(R)$ and $f(R')\subseteq f(R)$. In particular, this shows that $N^-(R)\subseteq N^-(R')$. Moreover, it holds by construction that $s_{\mathit{P}}(R',x_1)=s_{\mathit{P}}(R,x_1)+1$ and thus $s_{\mathit{P}}(R',x_1)+|N^-(R')|>s_{\mathit{P}}(R,x_1)+|N^-(R)|$. Now, if $s_{\mathit{P}}(R',x_1)+|N^-(R')|\geq \theta_1$, we derive the same contradiction as in the last paragraph. On the other hand, if $s_{\mathit{P}}(R',x_1)+|N^-(R')|<\theta_1$, we can set $R=R'$, redefine $x_1, x_2, \dots, x_m$, and repeat this step. Since $\theta_1<n$, we will eventually arrive at a profile $R''$ such that $s_{\mathit{P}}(R'',x_1)+|N^-(R'')|\geq \theta_1$ and $|f(R'')|>1$. For this profile, we have the same contradiction as in the last paragraph, thus showing that the assumption that $\textit{OMNI}(R)\not\subseteq f(R)$ is wrong.\medskip

    \textbf{Step 4:} Steps 2 and 3 imply that $f(R)=\{x\}$ whenever $x$ is top-ranked by at least $\theta_1$ voters in $R$ and $\textit{OMNI}(R)\subseteq f(R)$ if no such alternative exists. It thus remains to specify the second parameter $\theta_2$ of $f$ to complete the proof. We observe for this that, for all profiles $R$ with $\max_{x\in A} s_{\mathit{P}}(R,x)<\theta_1$, it holds by Steps 1 and 3 that $f(R)=\textit{OMNI}(R)$ or $f(R)=A$. First, if $\textit{OMNI}(R)=A$, both of these cases coincide. To derive the parameter $\theta_2$, we will show that if $f(R)=\textit{OMNI}(R)$ for some profile $R$ with $|\textit{OMNI}(R)|<m$, then $f(R')=\textit{OMNI}(R')$ for all profiles $R'$ with $|\textit{OMNI}(R')|\leq |\textit{OMNI}(R)|$ and $\max_{x\in A} s_{\mathit{P}}(R',x)<\theta_1$. We can then define $\theta_2$ by $\theta_2=|\textit{OMNI}(R^*)|$, where $R^*$ is a profile that maximizes $|\textit{OMNI}(R^*)|$ subject to $f(R^*)=\textit{OMNI}(R^*)$ and $f(R^*)\neq A$. (If no such profile exists, we can simply define $\theta_2=0$.) By our auxiliary claim, the previous insights, and the definition of $\theta_2$, it holds that $f(R)=\textit{OMNI}(R)$ if $\max_{x\in A} s_{\mathit{P}}(R,x)<\theta_1$ and $|\textit{OMNI}(R)|\leq\theta_2$ and that $f(R)=A$ if $\max_{x\in A} s_{\mathit{P}}(R,x)<\theta_1$ and $|\textit{OMNI}(R)|>\theta_2$, thus showing that $f$ is indeed a parameterized omninomination rule.

    It remains to prove the claim that if $f(R)=\textit{OMNI}(R)$ for some profile $R$ with $\textit{OMNI}(R)\neq A$, then $f(R')=\textit{OMNI}(R')$ for all profiles $R'$ with $|\textit{OMNI}(R')|\leq |\textit{OMNI}(R)|$ and $\max_{x\in A} s_{\mathit{P}}(R',x)<\theta_1$. First, if $|\textit{OMNI}(R)|=1$, this claim follows directly from neutrality and tops-onlyness. The reason for this is that $|\textit{OMNI}(R)|=1$ implies that all voters top-rank the same alternative, and all such profiles are symmetric to each other when restricting the attention to the voters' favorite alternatives. Hence, we assume that $|f(R)|=|\textit{OMNI}(R)|>1$, which implies that $\max_{x\in A} s_{\mathit{P}}(R,x)<\theta_1$. 
    
    We will first show that $f(R')=\textit{OMNI}(R')$ for all profiles $R'$ with  $\max_{x\in A} s_{\mathit{P}}(R,x)<\theta_1$ and $\textit{OMNI}(R)=\textit{OMNI}(R')$. In this case, let $R^1,\dots, R^k$ denote a sequence of profiles such that \emph{(i)} $R^1=R$ and \emph{(ii)} $R^{i+1}$ is derived from $R^i$ by identifying two alternatives $x,y$ with $s_{\mathit{P}}(R^i,x)>s_{\mathit{P}}(R',x)$ and $s_{\mathit{P}}(R^i, y)<s_{\mathit{P}}(R',y)$ and making $y$ into the favorite alternative of a voter in $N(R^i,x)$. We first note that, unless $s_{\mathit{P}}(R^i, z)=s_{\mathit{P}}(R',z)$ for all $z\in A$, such alternatives $x,y$ are guaranteed to exist as $\sum_{z\in A} s_{\mathit{P}}(R^i,z)=\sum_{z\in A} s_{\mathit{P}}(R',z)$. Hence, our sequence terminates in a profile $R^k$ such that $s_{\mathit{P}}(\bar R^k,z)=s_{\mathit{P}}(R',z)$ for all $z\in A$. Anonymity then shows that $f(R')=f(R^k)$, so it suffices to prove that $f(R^k)=f(R)=\textit{OMNI}(R)$. For this, we observer that $\min(s_{\mathit{P}}(R,z), s_{\mathit{P}}(R',z))\leq s_{\mathit{P}}(R^i,z)\leq \max(s_{\mathit{P}}(R,z), s_{\mathit{P}}(R',z))$ for all profiles $R^i$ in our sequence and alternatives $z\in A$. This implies that $\textit{OMNI}(R^i)=\textit{OMNI}(R)$ and that $\max_{z\in A} s_{\mathit{P}}(R^i,z)<\theta_1$. By Step 3, it thus holds that $\textit{OMNI}(R^i)\subseteq f(R^i)$ for all profiles in our sequence. Finally, if $f(R^i)=\textit{OMNI}(R^i)$ for some profile $R^i$, then $f(R^{i+1})=\textit{OMNI}(R^{i+1})$. In more detail, if $f(R^{i+1})\neq\textit{OMNI}(R^{i+1})$, it follows that $f(R^{i+1})=A$. Next, let $y$ denote the alternative such that $s_{\mathit{P}}(R^{i+1},y)=s_{\mathit{P}}(R^i,y)+1$. Since $\textit{OMNI}(R^{i})=\textit{OMNI}(R)\neq A$, the probability of $y$ decreases when going from $R^{i}$ to $R^{i+1}$ even though $N(R^i,y)\subsetneq N(R^{i+1},y)$. this contradicts Claim 1 of this theorem, thus showing that the assumption that $f(R^{i+1})\neq \textit{OMNI}(R^{i+1})$ is wrong. Finally, since $f(R^1)=\textit{OMNI}(R^1)$, we conclude that $f(R^k)=\textit{OMNI}(R^k)=\textit{OMNI}(R)$. This proves that $f(R')=f(R^k)=\textit{OMNI}(R')$. 
    
    By the discussion in the last paragraph, it holds that $f(R')=\textit{OMNI}(R')$ for all profiles $R'$ with $\textit{OMNI}(R')=\textit{OMNI}(R)$ and $\max_{z\in A} s_{\mathit{P}}(R',z)<\theta_1$. Moreover, by neutrality, this argument generalizes to all profiles $R''$ with $|\textit{OMNI}(R'')|=|\textit{OMNI}(R)|$ and $\max_{z\in A} s_{\mathit{P}}(R'',z)<\theta_1$. Next, we assume that $R'$ is a profile such that $|\textit{OMNI}(R')|=|\textit{OMNI}(R)|-1$ and $\max_{z\in A} s_{\mathit{P}}(R',z)<\theta_1$, and we will prove that $f(R')=\textit{OMNI}(R')$. By repeatedly applying this argument (and the insights of the last paragraph), it then follows that $f(\hat R)=\textit{OMNI}(\hat R)$ for all profiles $\hat R$ with $|\textit{OMNI}(\hat R)|\leq |\textit{OMNI}(R)|$ and $\max_{z\in A} s_{\mathit{P}}(\hat R,z)<\theta_1$. To show that $f(R')=\textit{OMNI}(R')$ for the considered profile $R'$, let $x$ and $y$ denote an alternatives with $N(R',x)=\emptyset$ and $|N(R',y)|>1$. Such alternatives exist as $|\textit{OMNI}(R')|<| \textit{OMNI}(R)|$ is otherwise impossible. Since $\max_{z\in A} s_{\mathit{P}}(R',z)<\theta_1$, we have that $\textit{OMNI}(R')\subseteq f(R')$ by Step 3, and Step 1 implies in turn that $f(R')=A$ if $f(R')\neq \textit{OMNI}(R')$. Now, let $R''$ denote the profile derived from $R'$ by letting a voter who top-ranks $y$ make $x$ into his favorite alternative. By construction, we have that $|\textit{OMNI}(R'')|=|\textit{OMNI}(R)|$ and $\max_{z\in A} s_{\mathit{P}}(R'',z)\leq \max_{z\in A} s_{\mathit{P}}(R',z)<\theta_1$. Hence, we conclude that $f(R'')=\textit{OMNI}(R'')$. However, since $|\textit{OMNI}(R)|<m$, this means that the probability of $y$ increases when going from $R'$ to $R''$, which conflicts with Claim 1 of this theorem. Hence, $f(R'')=\textit{OMNI}(R'')$ implies that $f(R')=\textit{OMNI}(R')$, which completes the proof of this step. 
\end{proof}

\subsection{Proof of \Cref{thm:EvenchanceCondImp}}\label{subsec:evenchancCondImp}

In this section, we provide the proof of \Cref{thm:EvenchanceCondImp} and moreover show a variant of this result for the case that $n$ is even. To this end, we will interpret even-chance SDSs as set-valued voting rules (see \Cref{subsec:topsonly} for details). Befoer showing \Cref{thm:EvenchanceCondImp}, we will prove two auxiliary lemmas that analyze when an even-chance SDS that satisfies weak strategyproofness, \emph{ex post} efficiency, and Condorcet-consistency is allowed to return choice sets of size $1$ or $2$. In more detail, we show next that, if the number of voters $n$ is odd, every weakly strategyproof and Condorcet-consistent SDS chooses a single winner if and only if it is the Condorcet winner. We note for the subsequent lemma that the even-chance condition is not required here.

\begin{lemma}\label{lem:sCC}
	Assume that the number of voters $n$ is odd and let $f$ denote an SDS on $\mathcal{L}^N$ that satisfies Condorcet-consistent and weak strategyproofness. It holds for all preference profiles $R$ that $f(R,x)=1$ if and only if $x$ is the Condorcet winner in $R$. 
\end{lemma}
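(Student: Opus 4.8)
The forward implication is immediate: if $x$ is the Condorcet winner in $R$, then $f(R,x)=1$ by Condorcet-consistency. For the converse, assume $f(R,x)=1$ and suppose, for contradiction, that $x$ is not the Condorcet winner. Since $n$ is odd and all preferences are strict, the majority relation $\succ_M$ is a tournament (there are no majority ties), so $x$ failing to be the Condorcet winner yields an alternative $y$ with $y\succ_M x$.

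The engine of the proof is a point-mass consequence of weak strategyproofness. Using the stochastic-dominance characterization from the preliminaries, a lottery $q$ strictly dominates the lottery putting probability one on $x$ for voter $i$ precisely when $q(U(\succsim_i,x))=1$ and $q(\{w\colon w\succ_i x\})>0$. Specializing to degenerate outcomes gives the key fact: if two profiles $P,Q$ that differ only in voter $i$ put probability one on distinct alternatives $a$ and $b$, then applying weak strategyproofness in both directions forces $a\succ_i^P b$ and $b\succ_i^Q a$; that is, a single voter can shift a sure winner from $a$ to $b$ only by reversing the relative ranking of $a$ and $b$ in their own preference. I would then construct the witnessing profile $R^*$ obtained from $R$ by raising $y$ to the top for every voter in $S=\{i\in N\colon y\succ_i x\}$. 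As $y\succ_M x$, the set $S$ is a strict majority, so $y$ is top-ranked by a majority and hence is the Condorcet winner of $R^*$; Condorcet-consistency gives $f(R^*,y)=1$. Transforming $R$ into $R^*$ by raising $y$ one voter of $S$ at a time produces a chain whose first profile assigns probability one to $x$ and whose last assigns probability one to $y$, and in which every voter that is changed strictly prefers $y$ to $x$. Moreover, once at least $\lceil(n+1)/2\rceil$ voters top-rank $y$, all remaining profiles of the chain have $y$ as their Condorcet winner, so the entire tail is pinned to probability one on $y$.

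The hard part, and the main obstacle, is that weak strategyproofness does not propagate sure winners: one can check that even reordering the alternatives ranked below $x$, or raising $x$ itself, need not preserve $f(R,x)=1$, because the incomplete dominance relation only forbids a strict improvement and tolerates outcomes that place mass both above and below $x$. Consequently the profiles in the initial segment of the chain---before a majority top-ranks $y$---carry no Condorcet winner, so Condorcet-consistency says nothing about them and the point-mass fact cannot be applied step by step. A naive single-step comparison at the moment the chain first leaves probability one on $x$ only yields the weak conclusion that the outcome simultaneously places positive probability above and below $x$ for the pivotal voter, which is not yet a contradiction.

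To overcome this I would force the decisive probability shift to occur across a pair of profiles that both possess a Condorcet winner, so that the point-mass fact applies. Concretely, I expect to argue with a minimal counterexample---minimizing, say, the number of voters who do not top-rank $x$, or the number of alternatives majority-preferred to $x$---and to exploit that $n$ is odd to guarantee strict majorities and the absence of ties, which lets the constructed neighbor inherit a Condorcet winner. The target is a single-voter step between two Condorcet profiles whose Condorcet winners are $a=x$ (or an alternative the pivotal voter ranks above $x$) and $b=y$, with the pivotal voter not reversing their ranking of $a$ and $b$, directly contradicting the point-mass fact. Establishing that such a pivotal Condorcet-to-Condorcet step must exist is where the real work of the lemma concentrates.
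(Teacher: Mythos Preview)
Your setup and the point-mass observation are fine, and you correctly diagnose the obstacle: raising $y$ to the top for voters in $S$ one at a time does not let you propagate $f(\cdot,x)=1$, because weak strategyproofness only blocks a strict improvement and says nothing when the deviating voter's top alternative is not $x$. However, your proposed fix---a minimal-counterexample argument aimed at producing a single-voter step between two profiles that both have a Condorcet winner---is not carried out, and it is not clear how to make it work: the original profile $R$ need not have any Condorcet winner, and minimizing over the suggested parameters does not obviously manufacture one on the ``before'' side of the pivotal step.

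The paper resolves the obstacle by modifying the \emph{other} voters first. Let $I\subseteq N$ with $|I|=\tfrac{n+1}{2}$ consist of voters with $y\succ_i x$, and let the voters in $N\setminus I$ one by one switch to a preference with $x$ first and $y$ second. Each such voter's \emph{new} preference has $x$ on top, so if the outcome moved away from probability one on $x$, that voter could strictly improve by reverting; hence $f(\cdot,x)=1$ is preserved along this entire chain, ending in a profile $R'$ with $f(R',x)=1$. Now a \emph{single} voter $i^*\in I$ raises $y$ to the top, yielding $R''$. In $R''$, every voter in $N\setminus I$ ranks $y$ second (above all $z\notin\{x,y\}$) and $i^*$ top-ranks $y$, so $y$ beats every $z\neq x$ by a majority; and $y$ still beats $x$ since all of $I$ prefer $y$ to $x$. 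Thus $y$ is the Condorcet winner of $R''$ and $f(R'',y)=1$. Since $i^*$ prefers $y$ to $x$ in $R'$, the step from $R'$ to $R''$ is a successful manipulation, the desired contradiction. The missing idea in your proposal is precisely this: propagate through the $x$-preferring voters (where propagation is free), not the $y$-preferring ones.
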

\begin{proof}
	Let $f$ denote a Condorcet-consistent and weakly strategyproof even-chance SDS and assume that the number of voters $n$ is odd. Since $f$ is Condorcet-consistent, it holds by definition that $f(R,x)=1$ if $x$ is the Condorcet winner in $R$. We hence focus on the converse direction and assume for contradiction that there is a profile $R$ and an alternative $x$ such that $f(R,x)=1$ even though $x$ is not the Condorcet winner in $R$. Since the number of voters $n$ is odd and $x$ is not the Condorcet winner in $R$, there is another alternative $y$ such that $y\succ_M x$. Moreover, let $I\subseteq N$ denote a set of voters such that $y\succ_i x$ for all $i\in I$ and $|I|=\frac{n+1}{2}$.
	
	We will now start to modify the profile $R$. In particular, we first let the voters $i\in N\setminus I$ deviate one after another to a preference relation where $x$ is their most preferred alternative and $y$ their second-most preferred one. By weak strategyproofness, it follows that, if $x$ is chosen with probability $1$ before this manipulation, $x$ also needs to be with probability $1$ after the manipulation; otherwise the deviating voter can manipulate by undoing this step. Hence, it holds for the profile $R'$ derived by this process that $f(R',x)=1$. 
	
	Finally, we let a voter $i^*\in I$ deviate to a preference relation where $y$ is top-ranked. In the resulting profile $R''$, alternative $y$ is the Condorcet winner, so $f(R'',y)=1$ by Condorcet-consistency. In more detail, in $R''$, the $\frac{n-1}{2}$ voters in $N\setminus I$ all prefer only $x$ to $y$ and the voter $i^*$ top-ranks $y$. Hence, $y\succ_M'' z$ for every alternative $z\in A\setminus \{x,y\}$. On the other hand, all voters in $I$ still prefer $y$ to $x$, so $y\succ_M'' x$. Finally, the observations that $f(R'',y)=1$ and $f(R',x)=1$ conflict with weak strategyproofness as voter $i^*$ prefers $y$ to $x$ in $R'$. This is the desired contradiction, so the assumption that $f(R,x)=1$ even though $x$ is not the Condorcet winner in $R$ must be wrong. 
\end{proof}

Next, we will show that an even-chance SDS $f$ that satisfies weak strategyproofness, Condorcet-consistency, and \emph{ex post} efficiency can only choose a set of size $2$ if the two chosen alternatives are in a majority tie. In particular, if $n$ is odd, this means that such an SDS can never choose a choice set of size $2$ as majority ties are impossible.

\begin{lemma}\label{lem:no2}
	Let $f$ denote an even-chance SDS on $\mathcal{L}^N$ that satisfies weak strategyproofness, \emph{ex post} efficiency, and Condorcet-consistency. It holds for all profiles $R$ and distinct alternatives $x,y\in A$ that $f(R)=\{x,y\}$ implies $x\sim_M y$. 
\end{lemma}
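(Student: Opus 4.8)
The plan is to assume, for contradiction, that $f(R)=\{x,y\}$ while $x$ and $y$ are not majority-tied, and to derive a violation of Condorcet-consistency. Since preferences are strict, the negation of $x\sim_M y$ means $x\succ_M y$ or $y\succ_M x$; by relabelling I assume $x\succ_M y$, so the set $I=\{i\in N:x\succ_i y\}$ is a strict majority, $|I|>n/2$. As a warm-up observation, if $I=N$ then $x$ Pareto-dominates $y$ and \emph{ex post} efficiency already forces $f(R,y)=0$, contradicting $y\in f(R)$; hence I may assume both $x$ and $y$ are Pareto-optimal.

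The heart of the argument is a normalization that moves every alternative other than $x$ and $y$ to the bottom of every ballot without changing the outcome. Concretely, I would first prove the following invariance, treating $f$ as a set-valued rule as in \Cref{subsec:topsonly}: if $f(\hat R)=\{x,y\}$ and $\hat R'$ is obtained from $\hat R$ by a single voter weakening one alternative $w\notin\{x,y\}$ by one adjacent swap, then $f(\hat R')=\{x,y\}$. Granting this, I apply it repeatedly, voter by voter and swap by swap, to push each $w\notin\{x,y\}$ below both $x$ and $y$ in every voter's ranking. Because the swaps never exchange $x$ and $y$, each voter retains their original order on $\{x,y\}$, so the resulting profile $R^{\circ}$ has $x,y$ in the top two positions of every ballot, with the $I$-voters ranking $x$ first; moreover $f(R^{\circ})=\{x,y\}$ by the invariance. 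Since $|I|>n/2$ voters top-rank $x$ in $R^{\circ}$, we have $x\succ_M z$ for every $z\neq x$, i.e.\ $x$ is the Condorcet winner of $R^{\circ}$. Condorcet-consistency then forces $f(R^{\circ})=\{x\}$, contradicting $f(R^{\circ})=\{x,y\}$ and completing the proof.

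The main obstacle is the invariance lemma, which is exactly the step where the even-chance extension and the weakness of the strategyproofness notion bite. I would establish it using weak strategyproofness in both directions for the deviating voter $i$, together with \emph{ex post} efficiency. Ex post efficiency keeps the weakened alternative $w$ at probability $0$, so the comparison of the old and new outcomes reduces to upper-contour sets that do not separate $w$ from its swap partner; weak strategyproofness applied to the forward deviation rules out the new outcome being any set that $i$ strictly prefers under $\hat R$, and applied to the backward deviation rules out any set $i$ strictly prefers under $\hat R'$. The delicate point is that these two constraints must be combined to exclude not only the collapse to $\{x\}$ (which follows quickly, since $\{x\}\succ_i\{x,y\}$ whenever $x\succ_i y$ and symmetrically $\{y\}\succ_i\{x,y\}$ whenever $y\succ_i x$) but also every other set such as $\{x,y,w\}$ or $\{x,z\}$; handling these remaining possibilities is where the even-chance comparison of whole sets, rather than single alternatives, makes the bookkeeping nontrivial. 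An analogous probability-$0$-weakening invariance is the first step in the proof sketch of \Cref{thm:impSDS}, and I expect the same style of case analysis to carry it through here.
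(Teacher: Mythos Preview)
Your overall strategy---normalize to a profile where $x$ and $y$ are every voter's top two, then invoke Condorcet-consistency---is exactly the paper's. The gap is in how you carry out the normalization. You propose to weaken each $w\notin\{x,y\}$ by single adjacent swaps and maintain the invariant $f=\{x,y\}$ at every step. This invariance cannot be established by the method you outline. First, the sentence ``\emph{ex post} efficiency keeps the weakened alternative $w$ at probability $0$'' is unjustified: after one swap $w$ need not be Pareto-dominated, so \emph{ex post} efficiency says nothing about $f(\hat R',w)$. Second, and more seriously, weak strategyproofness in both directions does not suffice to pin down the outcome after a single swap. Concretely, suppose voter $i$'s preference in $\hat R$ is $x,z,w,y,\dots$ and you swap $w$ with $y$ to get $\hat R'$ with $x,z,y,w,\dots$. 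Then the lotteries $\{x,y\}$ and $\{x,y,z\}$ are \emph{incomparable} for voter $i$ under both $\succsim_i$ and $\succsim_i'$ (the upper-contour probabilities at $x$ favour $\{x,y\}$, those at $z$ favour $\{x,y,z\}$), so neither direction of weak strategyproofness excludes $f(\hat R')=\{x,y,z\}$. Your analogy with the set-monotonicity step in the proof of \Cref{thm:impSDS} is also misplaced: that argument (\Cref{lem:SMON}) hinges on \emph{pairwiseness}, which lets one add two auxiliary voters with tailored preferences; no such device is available here.

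The paper avoids all of this by having each voter change their \emph{entire} ballot in one step, moving $x$ and $y$ (in their original order) to the top two positions. The point is that once $x,y$ are voter $i$'s top two in $R^1$, the outcome $f(R)=\{x,y\}$ stochastically dominates, with respect to $\succsim_i^1$, every other even-chance lottery except $\{x\}$; and $\{x\}$ is ruled out by weak strategyproofness in the forward direction (voter $i$ with truthful $x\succ_i y$ would deviate from $R$ to $R^1$). This makes the case analysis trivial: any $f(R^1)\neq\{x,y\}$ is either $\{x\}$ or is strictly worse for $i$ in $R^1$ than $\{x,y\}$, yielding a manipulation from $R^1$ back to $R$. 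Iterating voter by voter reaches a profile where $x$ is the Condorcet winner yet $f=\{x,y\}$. Note that \emph{ex post} efficiency is in fact not needed for this argument; the paper lists it only because it is part of the ambient hypotheses of \Cref{thm:EvenchanceCondImp}.
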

\begin{proof}
	Let $f$ denote an even-chance SDS that satisfies all given axioms and assume for contradiction that there is a profile $R$ and two distinct alternatives $x,y\in A$ such that $f(R)=\{x,y\}$ and not $x\sim_M y$. Without loss of generality, we suppose that $x\succ_M y$. Now, let $i\in N$ denote an arbitrary voter and let $R^1$ be the profile derived from $R$ by making $x$ and $y$ into the two favorite alternative of voter $i$ without reordering them. In particular, this means that $x\succ_i^1 y$ if and only if $x\succ_i y$. We will show that $f(R^1)=\{x,y\}$, too. To this end, we assume without loss of generality that $x\succ_i y$ and observe that $f(R^1)\neq \{x\}$ as voter $i$ can otherwise manipulate by deviating from $R$ to $R^1$. Moreover, if $x\not\in f(R^1)$ voter $1$ can manipulate by deviating from $R^1$ to $R$ as $x$ is his favorite alternative in $R^1$. As third point, if $y\not\in f(R^1)$ and $f(R^1)\neq \{x\}$, then voter $1$ can manipulate by deviating from $R^1$ to $R$ because his two favorite alternatives are chosen in $R$ but not in $R^1$. Finally, if $f(R)\subsetneq f(R^1)$, voter $1$ can again manipulate by deviating from $R^1$ to $R$. Hence, the only valid choice set for $R^1$ is $\{x,y\}$.
	
	It is now easy to see that we can repeat this argument for one voter after another to arrive at a profile $R'$, where all voters report $x$ and $y$ as their favorite two alternatives and $f(R')=\{x,y\}$. However, $x$ is the Condorcet winner in $R'$ because $x\succ_M y$ by assumption and we never reordered $x$ and $y$ in the preference relations of the voters. Hence, $f(R')=\{x,y\}$ contradicts Condorcet-consistency. This shows that our original assumption is wrong and $f(R)=\{x,y\}$ is indeed only possible if $x\sim_M y$. 
	\end{proof}

We are now ready to prove \Cref{thm:EvenchanceCondImp}.

\evenchance*
\begin{proof}
	In this proof, we focus on the case that $m=5$ because we can simply add further alternatives that are bottom-ranked by all voters to extend our result to $m>5$. These universally bottom-ranked alternatives are Pareto-dominated and thus do not affect our analysis. Now, assume for contradiction that there is an even-chance SDS that satisfies all given axioms for $m=5$ and an odd number of voters $n\geq 5$. We will focus on the following two profiles $R^1$ and $\hat R^1$ in our proof. 
 
    \setlength\tabcolsep{3 pt}
    {\medmuskip=0mu\relax
	\thickmuskip=1mu\relax
    \noindent\begin{tabular}{L{0.07\profilewidth} L{0.3\profilewidth} L{0.3\profilewidth} L{0.3\profilewidth}}
		$R^1$: & $1$: $b,e,d,c,a$ & $2:$ $a,b,c,e,d$ & $3:$ $e,d,c,a,b$
  \end{tabular}
   \begin{tabular}{L{0.07\profilewidth} L{0.43\profilewidth} L{0.43\profilewidth} L{0.45\profilewidth}}
		& $\{4, \dots, \frac{n+3}{2}\}:$ $b,c,a,e,d$ & $\{\frac{n+5}{2}, \dots,n \}$: $e,d,a,b,c$
	\end{tabular}\smallskip

    \noindent\begin{tabular}{L{0.07\profilewidth} L{0.3\profilewidth} L{0.3\profilewidth} L{0.3\profilewidth}}
	$\hat R^1$: & $1$: $b,e,d,c,a$ & $2:$ $a,b,c,e,d$ & $3:$ $d,a,e,b,c$
    \end{tabular}
   \begin{tabular}{L{0.07\profilewidth} L{0.43\profilewidth} L{0.43\profilewidth} L{0.45\profilewidth}}
	& $\{4, \dots, \frac{n+3}{2}\}$: $b,c,a,e,d$ & $\{\frac{n+5}{2}, \dots,n\}$: $e,d,a,b,c$
    \end{tabular}}\smallskip

We observe that profiles there is no Condorcet winner in both profiles. In more detail, for both profiles, it can be checked that $a\succ_M b$, $b\succ_M c$, $b\succ_M d$, $b\succ_M e$, and  $d\succ_M^1 a$. Moreover, $e$ Pareto-dominates $d$ in $R^1$ and $b$ Pareto-dominates $c$ in $\hat R^1$. Consequently, $e\not\in f(R^1)$ and $c\not\in f(\hat R^1)$ by \emph{ex post} efficiency. Subsequently, we will show that $f(R^1)=\{a,b,c,e\}$ and $f(\hat R^1)=\{a,b,d,e\}$. This implies that voter $3$ can manipulate by deviating from $R^1$ to $\hat R^1$ as he prefers $d$ to $c$.\medskip

 \textbf{Claim 1: $f(R^1)=\{a,b,c,e\}$}
 
First, we will prove that $f(R^1)=\{a,b,c,e\}$. To this end, we first note that $|f(R^1)|\geq 3$ due to \Cref{lem:sCC,lem:no2}. Since $d\not\in f(R^1)$ by \emph{ex post} efficiency, we can hence show that $f(R^1)=\{a,b,c,e\}$ by proving that $|f(R^1)|\neq 3$. We do so by considering each possible subset of size $3$ individually. \medskip

\emph{Case 1.1: $f(R^1)\neq \{b,c,e\}$.$\qquad$} Assume for contradiction that $f(R^1)=\{b,c,e\}$ and consider the profile $R^2$ shown below, which is derived from $R^1$ by swapping $a$ and $b$ in the preference relation of voter $2$. It can be checked that $b$ is the Condorcet winner in $R^2$ because the set $\{1,2\} \cup \{4,\dots\frac{n+3}{2}\}$ contains more than half of the voters and all of these voters top-rank $b$. Thus, Condorcet-consistency requires that $f(R^2)=\{b\}$. However this means that voter $2$ can manipulate by deviating from $R^1$ to $R^2$ since he prefers $b$ to both $c$ and $e$. Hence, the assumption that $f(R^1)=\{b,c,e\}$ conflicts with weak strategyproofness.

    {\medmuskip=0mu\relax
	\thickmuskip=1mu\relax
    \noindent\begin{tabular}{L{0.07\profilewidth} L{0.3\profilewidth} L{0.3\profilewidth} L{0.3\profilewidth}}
		$R^2$: & $1$: $b,e,d,c,a$ & $2:$ $b,a,c,e,d$ & $3:$ $e,d,c,a,b$
  \end{tabular}
   \begin{tabular}{L{0.07\profilewidth} L{0.43\profilewidth} L{0.43\profilewidth} L{0.45\profilewidth}}
		& $\{4, \dots, \frac{n+3}{2}\}:$ $b,c,a,e,d$ & $\{\frac{n+5}{2}, \dots,n \}$: $e,d,a,b,c$
	\end{tabular}}\medskip

\emph{Case 1.2: $f(R^1)\neq \{a,c,e\}$.$\qquad$} Assume for contradiction that $f(R^1)=\{a,c,e\}$ and consider the profile $R^3$ shown below, which is derived from $R^1$ by swapping $b$ and $e$ in the preference relation of voter $1$. 
Analogously to the last case, it can be checked that $e$ is top-ranked by more than half of the voters, so Condorcet-consistency requires that $f(R^3)=\{e\}$. However, since $f(R^1)=\{a,c,e\}$ by assumption, we derive that voter $1$ can manipulate because he prefers $\{e\}$ to $\{a,c,e\}$. This contradicts weak strategyproofness, so the assumption that $b\not\in f(R^1)$ must have been wrong.

{\medmuskip=0mu\relax
	\thickmuskip=1mu\relax
    \noindent\begin{tabular}{L{0.07\profilewidth} L{0.3\profilewidth} L{0.3\profilewidth} L{0.3\profilewidth}}
		$R^3$: & $1$: $e,b,d,c,a$ & $2:$ $a,b,c,e,d$ & $3:$ $e,d,c,a,b$
  \end{tabular}
   \begin{tabular}{L{0.07\profilewidth} L{0.43\profilewidth} L{0.43\profilewidth} L{0.45\profilewidth}}
		& $\{4, \dots, \frac{n+3}{2}\}:$ $b,c,a,e,d$ & $\{\frac{n+5}{2}, \dots,n \}$: $e,d,a,b,c$
	\end{tabular}}\medskip

\emph{Case 1.3: $f(R^1)\neq \{a,b,e\}$.$\qquad$}
Assume for contradiction that $f(R^1)=\{a,b,e\}$ and consider the profile $R^4$ shown below, where voter $n$ deviates to $a,e,d,b,c$. We first note that there is no Condorcet winner in $R^4$. Moreover, $d$ is still Pareto-dominated by $e$ in $R^4$, so $f(R^4)\subseteq \{a,b,c,e\}$. Since there is no Condorcet winner in $R^4$, it follows from \Cref{lem:sCC,lem:no2} that $|f(R^4)|\geq 3$. Now, if $c\in f(R^4)$, voter $4$ prefers $f(R^1)=\{a,b,e\}$ to $f(R^4)$ as $c$ is his least preferred alternative in $R^4$. In more detail, if $|f(R^4)|=3$ and $c\in f(R^4)$, then we only substitute $c$ with another alternative $x$, which makes voter $n$ better off since $x\succ_n^4 c$. On the other hand, if $|f(R^4)|=4$, then $f(R^4)=\{a,b,c,e\}$ and voter $n$ prefers $f(R^3)$ as $c$ is his least preferred alternative. Hence, we have that $f(R^4)=\{a,b,e\}$. 
 
By repeatedly applying this argument for one voter after another, we can also derive for the profile $R^5$ shown below that $f(R^5)=\{a,b,e\}$. In particular, we note that $a$ is still not the Condorcet winner in $R^5$ as $c\succ_M^5 a$. 

Finally, consider the profile $R^6$, which is derived from $R^5$ by swapping $a$ and $c$ in the preference relation of voter~$1$. Since $a$ is the Condorcet winner in $R^6$, it holds that $f(R^6)=\{a\}$. However, this means that voter $1$ can manipulate by deviating from $R^6$ to $R^5$ since he prefers $\{a,b,e\}$ to $\{a\}$ according to $\succsim_1^6$. This contradicts weak strategyproofness, so the assumption that $f(R^1)=\{a,b,e\}$ must be wrong.\smallskip

{\medmuskip=0mu\relax
	\thickmuskip=1mu\relax
    \noindent\begin{tabular}{L{0.07\profilewidth} L{0.3\profilewidth} L{0.3\profilewidth} L{0.3\profilewidth}}
		$R^4$: & $1$: $b,e,d,c,a$ & $2:$ $a,b,c,e,d$ & $3:$ $e,d,c,a,b$
  \end{tabular}
   \begin{tabular}{L{0.07\profilewidth} L{0.47\profilewidth} L{0.43\profilewidth}}
		 & $\{4, \dots, \frac{n+3}{2}\}:$ $b,c,a,e,d$ & $n$: $a,e,d,b,c$ \\
   \end{tabular}
   \begin{tabular}{L{0.07\profilewidth} L{0.6\profilewidth}}
  & $\{\frac{n+5}{2}, \dots,n-1\}$: $e,d,a,b,c$
	\end{tabular}}\smallskip

 {\medmuskip=0mu\relax
	\thickmuskip=1mu\relax
    \noindent\begin{tabular}{L{0.07\profilewidth} L{0.3\profilewidth} L{0.3\profilewidth} L{0.3\profilewidth}}
		$R^5$: & $1$: $b,e,d,c,a$ & $2:$ $a,b,c,e,d$ & $3:$ $e,d,c,a,b$
  \end{tabular}
    \begin{tabular}{L{0.07\profilewidth} L{0.43\profilewidth} L{0.43\profilewidth}}
		 & $\{4, \dots, \frac{n+3}{2}\}:$ $b,c,a,e,d$ & $\{\frac{n+5}{2}, \dots,n \}$: $a,e,d,b,c$
	\end{tabular}}\smallskip
 
 {\medmuskip=0mu\relax
	\thickmuskip=1mu\relax
    \noindent\begin{tabular}{L{0.07\profilewidth} L{0.3\profilewidth} L{0.3\profilewidth} L{0.3\profilewidth}}
		$R^6$: & $1$: $b,e,d,a,c$ & $2:$ $a,b,c,e,d$ & $3:$ $e,d,c,a,b$
  \end{tabular}
    \begin{tabular}{L{0.07\profilewidth} L{0.43\profilewidth} L{0.43\profilewidth}}
		&  $\{4, \dots, \frac{n+3}{2}\}:$ $b,c,a,e,d$ & $\{\frac{n+5}{2}, \dots,n \}$: $a,e,d,b,c$
	\end{tabular}}\medskip
\medskip

\emph{Case 1.4: $f(R^1)\neq \{a,b,c\}$.$\qquad$}
For our last step, we suppose that $f(R^1)=\{a,b,c\}$. Now, consider the profile $R^7$ that arises from $R^1$ by swapping $b$ and $c$ in the preference relation of voter $4$. We note that if $n=5$, then the set $\{5,\dots, \frac{n+3}{2}\}$ is empty and there is no voter in $R^7$ reports $b,c,a,e,d$. There is no Condorcet winner in $R^7$ and $e$ still Pareto-dominates $d$. We can hence conclude that $d\not\in f(R^7)$ due to \emph{ex post} efficiency and that $|f(R^7)|\geq 3$ due to \Cref{lem:sCC,lem:no2}. Moreover, if $|f(R^7)|\geq 3$ and $e\in f(R^7)$, voter $4$ can manipulate by deviating from $R^7$ to $R^1$. Hence, it follows that $f(R^7)=\{a,b,c\}$. By repeating this argument for one voter after another in $\{4,\dots, \frac{n+3}{2}\}$, we can infer the same for the profile $R^8$ shown below, i.e., $f(R^8)=\{a,b,c\}$.
 
For the next step, we consider the profile $R^9$ derived from $R^8$ by swapping $b$ and $c$ in the preference relation of voter $2$. There is still no Condorcet winner in $R^9$ since $e\succ_M^9 c$. As a consequence, essentially the same arguments as for $R^7$ show that $f(R^9)=\{a,b,c\}$. 

Finally, we consider the profile $R^{10}$ which is derived from $R^9$ by changing the preference relation of voter $3$ to $c,e,d,a,b$. It can be checked that $c$ is the Condorcet winner in $R^{10}$, so $f(R^{10})=\{c\}$ due to Condorcet-consistency. However, this means that voter $5$ can manipulate by deviating from $R^9$ to $R^{10}$, which contradicts the weak strategyproofness of $f$. Hence, the assumption that $f(R^1)=\{a,b,c\}$ must have been wrong.
 \smallskip

 {\medmuskip=0mu\relax
	\thickmuskip=1mu\relax
    \noindent\begin{tabular}{L{0.08\profilewidth} L{0.3\profilewidth} L{0.3\profilewidth} L{0.3\profilewidth}}
		$R^7$: & $1$: $b,e,d,c,a$ & $2:$ $a,b,c,e,d$ & $3:$ $e,d,c,a,b$\\
  \end{tabular}
   \begin{tabular}{L{0.08\profilewidth} L{0.3\profilewidth} L{0.43\profilewidth}}
     & $4$: $c,b,a,e,d$ & $\{5, \dots, \frac{n+3}{2}\}:$ $b,c,a,e,d$ \\
    \end{tabular}
    \begin{tabular}{L{0.08\profilewidth} L{0.43\profilewidth}L{0.43\profilewidth}}
  & $\{\frac{n+5}{2}, \dots,n \}$: $e,d,a,b,c$
	\end{tabular}}
 \smallskip

 {\medmuskip=0mu\relax
	\thickmuskip=1mu\relax
    \noindent\begin{tabular}{L{0.08\profilewidth} L{0.3\profilewidth} L{0.3\profilewidth} L{0.3\profilewidth}}
		$R^8$: & $1$: $b,e,d,c,a$ & $2:$ $a,b,c,e,d$ & $3:$ $e,d,c,a,b$\\
  \end{tabular}
    \begin{tabular}{L{0.08\profilewidth} L{0.43\profilewidth}L{0.43\profilewidth}}
  & $\{4, \dots, \frac{n+3}{2}\}:$ $c,b,a,e,d$ & $\{\frac{n+5}{2}, \dots,n \}$: $e,d,a,b,c$
	\end{tabular}}
 \smallskip

  {\medmuskip=0mu\relax
	\thickmuskip=1mu\relax
    \noindent\begin{tabular}{L{0.08\profilewidth} L{0.3\profilewidth} L{0.3\profilewidth} L{0.3\profilewidth}}
		$R^9$: & $1$: $b,e,d,c,a$ & $2:$ $a,c,b,e,d$ & $3:$ $e,d,c,a,b$\\
  \end{tabular}
    \begin{tabular}{L{0.08\profilewidth} L{0.43\profilewidth}L{0.43\profilewidth}}
  & $\{4, \dots, \frac{n+3}{2}\}:$ $c,b,a,e,d$ & $\{\frac{n+5}{2}, \dots,n \}$: $e,d,a,b,c$
	\end{tabular}}
 \smallskip

  {\medmuskip=0mu\relax
	\thickmuskip=1mu\relax
    \noindent\begin{tabular}{L{0.08\profilewidth} L{0.3\profilewidth} L{0.3\profilewidth} L{0.3\profilewidth}}
		$R^{10}$: & $1$: $b,e,d,c,a$ & $2:$ $a,c,b,e,d$ & $3:$ $c,e,d,a,b$\\
  \end{tabular}
    \begin{tabular}{L{0.08\profilewidth} L{0.43\profilewidth}L{0.43\profilewidth}}
  & $\{4, \dots, \frac{n+3}{2}\}:$ $c,b,a,e,d$ & $\{\frac{n+5}{2}, \dots,n \}$: $e,d,a,b,c$
	\end{tabular}}
 \smallskip

 \textbf{Claim 2: $f(\hat R^1)=\{a,b,d,e\}$}

We now turn attention to our second claim, i.e., that $f(\hat R^1)=\{a,b,d,e\}$. To this end, we note that \Cref{lem:sCC,lem:no2} imply that $|f(\hat R^1)|\geq 3$ and that $c\not\in f(\hat R^1)$ since it is Pareto-dominated by $b$. Hence, we will again show that no choice set of size $3$ is a valid outcome for $f(\hat R^1)$. For the reader's convenience, we display the profile $\hat R^1$ again.
\smallskip

{\medmuskip=0mu\relax
	\thickmuskip=1mu\relax
    \noindent\begin{tabular}{L{0.07\profilewidth} L{0.3\profilewidth} L{0.3\profilewidth} L{0.3\profilewidth}}
	$\hat R^1$: & $1$: $b,e,d,c,a$ & $2:$ $a,b,c,e,d$ & $3:$ $d,a,e,b,c$
    \end{tabular}
   \begin{tabular}{L{0.07\profilewidth} L{0.43\profilewidth} L{0.43\profilewidth} L{0.45\profilewidth}}
	& $\{4, \dots, \frac{n+3}{2}\}$: $b,c,a,e,d$ & $\{\frac{n+5}{2}, \dots,n\}$: $e,d,a,b,c$
    \end{tabular}}\medskip

\emph{Case 2.1: $f(\hat R^1)\neq \{b,d,e\}$.$\qquad$}
Assume for contradiction that $f(\hat R^1)=\{b,d,e\}$. In this case, consider the profile $\hat R^2$ derived from $\hat R^1$ by swapping $a$ and $b$ in the preference relation of voter $2$. 
In the profile $\hat R^2$, $b$ is the Condorcet winner as it is top-ranked by a majority of the voters. So, Condorcet-consistency requires that $f(\hat R^2)=\{b\}$. However, when $f(\hat R^1)=\{b,d,e\}$, this means that voter $3$ can manipulate by deviating from $\hat R^1$ to $\hat R^2$. This contradicts the weak strategyproofness of $f$, so the assumption that $f(\hat R^1)=\{b,d,e\}$ must be wrong.
\smallskip

{\medmuskip=0mu\relax
	\thickmuskip=1mu\relax
    \noindent\begin{tabular}{L{0.07\profilewidth} L{0.3\profilewidth} L{0.3\profilewidth} L{0.3\profilewidth}}
	$\hat R^2$: & $1$: $b,e,d,c,a$ & $2:$ $b,a,c,e,d$ & $3:$ $d,a,e,b,c$
    \end{tabular}
   \begin{tabular}{L{0.07\profilewidth} L{0.43\profilewidth} L{0.43\profilewidth} L{0.45\profilewidth}}
	& $\{4, \dots, \frac{n+3}{2}\}$: $b,c,a,e,d$ & $\{\frac{n+5}{2}, \dots,n\}$: $e,d,a,b,c$
    \end{tabular}}\medskip

\emph{Case 2.2: $f(\hat R^1)\neq \{a,b,e\}$.$\qquad$} For our second case, we assume for contradiction that $f(\hat R^1)=\{a,b,e\}$. Now, consider the profile $\hat R^3$ derived from $\hat R^1$ by swapping $a$ and $d$ in the preference relation of voter $3$. In this profile, alternative $a$ is the Condorcet winner, so $f(\hat R^3)=\{a\}$ due to Condorcet-consistency. However, since $f(\hat R^1)=\{a,b,e\}$, this means that voter $3$ can manipulate by deviating from $\hat R^1$ to $\hat R^3$. This is a contradiction to the weak strategyproofness of $f$, and it therefore follows that $f(\hat R^1)\neq\{a,b,e\}$.
\smallskip

{\medmuskip=0mu\relax
	\thickmuskip=1mu\relax
    \noindent\begin{tabular}{L{0.07\profilewidth} L{0.3\profilewidth} L{0.3\profilewidth} L{0.3\profilewidth}}
	$\hat R^3$: & $1$: $b,e,d,c,a$ & $2:$ $a,b,c,e,d$ & $3:$ $a,d,e,b,c$
    \end{tabular}
   \begin{tabular}{L{0.07\profilewidth} L{0.43\profilewidth} L{0.43\profilewidth} L{0.45\profilewidth}}
	& $\{4, \dots, \frac{n+3}{2}\}$: $b,c,a,e,d$ & $\{\frac{n+5}{2}, \dots,n\}$: $e,d,a,b,c$
    \end{tabular}}\smallskip
\medskip

\emph{Case 2.3: $f(\hat R^1)\neq \{a,b,d\}$.$\qquad$} Assume for contradiction that $f(\hat R^1)=\{a,b,d\}$. We first consider the profile $\hat R^4$, which is derived from $\hat R^1$ by making $d$ into the favorite alternative of voter $4$. There is no Condorcet winner in $\hat R^4$, so $|f(\hat R^4)|\geq 3$ by \Cref{lem:sCC,lem:no2}. Moreover, $c$ is still Pareto-dominated by $b$, so $c\not\in f(\hat R^4)$. Finally, if $e\in f(\hat R^4)$ and $|f(\hat R^4)|\geq 3$, voter $4$ can manipulate by deviating from $\hat R^4$ to $\hat R^1$ as $e$ is his least preferred alternative in $\hat R^4$. Hence, $f(\hat R^4)=\{a,b,d\}$. 

Furthermore, we can apply the same argument for every voter $i\in \{5,\dots, \frac{n+3}{2}\}$ to infer the same for profile $\hat R^5$ shown below. In particular, we note that there is a majority of voters who prefers $e$ to $d$ in $\hat R^5$, which verifies that $d$ is not the Condorcet winner in this profile. 

Next, let $\hat R^6$ denote the profile derived from $\hat R^5$ by letting voter $n$ make $d$ into his favorite alternative. Alternative $d$ is the Condorcet winner in $\hat R^6$, so $f(\hat R^6)=\{d\}$. However, this is means that voter $n$ can manipulate by deviating from $\hat R^5$ to $\hat R^6$ as he prefers $\{d\}$ to $\{a,b,d\}$. This contradicts the weak strategyproofness of $f$, so the assumption that $f(\hat R^1)=\{a,b,d\}$ must have been wrong.
\smallskip

{\medmuskip=0mu\relax
	\thickmuskip=1mu\relax
    \noindent\begin{tabular}{L{0.07\profilewidth} L{0.3\profilewidth} L{0.3\profilewidth} L{0.3\profilewidth}}
	$\hat R^4$: & $1$: $b,e,d,c,a$ & $2:$ $a,b,c,e,d$ & $3:$ $d,a,e,b,c$
    \end{tabular}
    \noindent\begin{tabular}{L{0.07\profilewidth} L{0.3\profilewidth} L{0.5\profilewidth} L{0.3\profilewidth}}
	& $4$: $d,b,c,a,e$ & $\{5, \dots, \frac{n+3}{2}\}$: $b,c,a,e,d$
    \end{tabular}
   \begin{tabular}{L{0.07\profilewidth} L{0.43\profilewidth} L{0.43\profilewidth} L{0.45\profilewidth}}
	 & $\{\frac{n+5}{2}, \dots,n\}$: $e,d,a,b,c$
    \end{tabular}}\smallskip

{\medmuskip=0mu\relax
	\thickmuskip=1mu\relax
    \noindent\begin{tabular}{L{0.07\profilewidth} L{0.3\profilewidth} L{0.3\profilewidth} L{0.3\profilewidth}}
	$\hat R^5$: & $1$: $b,e,d,c,a$ & $2:$ $a,b,c,e,d$ & $3:$ $d,a,e,b,c$
    \end{tabular}
   \begin{tabular}{L{0.07\profilewidth} L{0.43\profilewidth} L{0.43\profilewidth} L{0.45\profilewidth}}
	& $\{4, \dots, \frac{n+3}{2}\}$: $d,b,c,a,e$ & $\{\frac{n+5}{2}, \dots,n\}$: $e,d,a,b,c$
    \end{tabular}}\smallskip

{\medmuskip=0mu\relax
	\thickmuskip=1mu\relax
    \noindent\begin{tabular}{L{0.07\profilewidth} L{0.3\profilewidth} L{0.3\profilewidth} L{0.3\profilewidth}}
	$\hat R^6$: & $1$: $b,e,d,c,a$ & $2:$ $a,b,c,e,d$ & $3:$ $d,a,e,b,c$
    \end{tabular}
    \noindent\begin{tabular}{L{0.07\profilewidth} L{0.5\profilewidth} L{0.3\profilewidth}}
     & $\{4, \dots, \frac{n+3}{2}\}$: $d,b,c,a,e$ & $n$: $d,e,a,b,c$ \\
     & $\{\frac{n+5}{2}, \dots,n-1\}$: $e,d,a,b,c$
    \end{tabular}}\medskip

\emph{Case 2.4: $f(\hat R^1)\neq \{a,d,e\}$.$\qquad$}
   For the last step, we assume for contradiction that $f(\hat R^1)= \{a,d,e\}$. In this case, we first consider the profile $\hat R^7$ which is derived from $\hat R^1$ by letting voter $3$ swap $a$ and $e$. We first note that there is no Condorcet winner in $\hat R^7$ and that $b$ still Pareto-dominates $c$. Hence, $f(\hat R^7)\subseteq \{a,b,d,e\}$ due to \emph{ex post} efficiency and $|f(\hat R^7)|\geq 3$ because of \Cref{lem:sCC,lem:no2}. Now, since $b$ is the least preferred alternative of voter $3$ among the Pareto-optimal ones, we infer from weak strategyproofness that $b\not\in f(\hat R^7)$ as this voter can otherwise manipulate by deviating from $\hat R^7$ to $\hat R^1$. So, $f(\hat R^7)=\{a,d,e\}$. 
   
   Finally, let $\hat R^8$ denote the profile derived from $\hat R^7$ by letting voter $1$ swap $b$ and $e$. It can be checked that $e$ is the Condorcet winner in $\hat R^8$, so $f(\hat R^8)=\{e\}$ by Condorcet-consistency. However, this means that voter $1$ can manipulate by deviating from $\hat R^7$ to $\hat R^8$ as he prefers $\{e\}$ to $\{a,d,e\}$ in $\hat R^7$.\smallskip

   {\medmuskip=0mu\relax
	\thickmuskip=1mu\relax
    \noindent\begin{tabular}{L{0.07\profilewidth} L{0.3\profilewidth} L{0.3\profilewidth} L{0.3\profilewidth}}
	$\hat R^7$: & $1$: $b,e,d,c,a$ & $2:$ $a,b,c,e,d$ & $3:$ $d,e,a,b,c$
    \end{tabular}
   \begin{tabular}{L{0.07\profilewidth} L{0.43\profilewidth} L{0.43\profilewidth} L{0.45\profilewidth}}
	& $\{4, \dots, \frac{n+3}{2}\}$: $b,c,a,e,d$ & $\{\frac{n+5}{2}, \dots,n\}$: $e,d,a,b,c$
    \end{tabular}}\smallskip

    {\medmuskip=0mu\relax
	\thickmuskip=1mu\relax
    \noindent\begin{tabular}{L{0.07\profilewidth} L{0.3\profilewidth} L{0.3\profilewidth} L{0.3\profilewidth}}
	$\hat R^8$: & $1$: $e,b,d,c,a$ & $2:$ $a,b,c,e,d$ & $3:$ $d,e,a,b,c$
    \end{tabular}
   \begin{tabular}{L{0.07\profilewidth} L{0.43\profilewidth} L{0.43\profilewidth} L{0.45\profilewidth}}
	& $\{4, \dots, \frac{n+3}{2}\}$: $b,c,a,e,d$ & $\{\frac{n+5}{2}, \dots,n\}$: $e,d,a,b,c$
    \end{tabular}}\smallskip

This completes the proof of this theorem.
\end{proof}

As the last point in this section, we will prove a variant of \Cref{thm:EvenchanceCondImp} for the case that $n$ is even. To this end, we will use \emph{strong Condorcet-consistency} instead of Condorcet-consistency, which requires that an SDS assigns probability $1$ to an alternative if and only if it is the Condorcet winner. 

\begin{proposition}
	Assume that $m\geq 5$ and $n\geq 8$ is even. There is no even-chance SDS that satisfies weak strategyproofness, strong Condorcet-consistency, and \emph{ex post} efficiency.
\end{proposition}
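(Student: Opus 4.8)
The plan is to mirror the proof of Theorem~\ref{thm:EvenchanceCondImp}, replacing its reliance on the parity of $n$ by the stronger Condorcet axiom. First I would observe that strong Condorcet-consistency supplies the conclusion of Lemma~\ref{lem:sCC} for free and for every $n$: by definition, $f(R,x)=1$ holds if and only if $x$ is the Condorcet winner of $R$, which is exactly the statement Lemma~\ref{lem:sCC} had to establish (and the only place the oddness of $n$ was used). Next, I would note that Lemma~\ref{lem:no2} carries over verbatim, since its proof never invokes parity: it only promotes the two chosen alternatives to the top of each voter's relation one at a time—using weak strategyproofness to keep the choice set equal to $\{x,y\}$—until a profile is reached in which $x$ beats every other alternative unanimously and still beats $y$ (its relative order being preserved), so $x$ is a Condorcet winner and Condorcet-consistency forces $f=\{x\}$, a contradiction. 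Hence, for an even-chance SDS satisfying all three axioms, every choice set is either a singleton (precisely at a Condorcet winner) or has size at least three, except possibly when two chosen alternatives are tied in the majority relation.

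I would then construct even-$n$ analogues of $R^1$ and $\hat R^1$, together with all auxiliary profiles $R^2,\dots,R^{10}$ and $\hat R^2,\dots,\hat R^8$. The two large voter blocks $\{4,\dots,\tfrac{n+3}{2}\}$ and $\{\tfrac{n+5}{2},\dots,n\}$ must be replaced by blocks whose sizes sum correctly for even $n$; since voters $1,2,3$ remain distinct and $n-3$ is odd, the blocks cannot be equal and are forced to have sizes $\tfrac{n}{2}-2$ and $\tfrac{n}{2}-1$, e.g.\ $\{4,\dots,\tfrac{n}{2}+1\}$ and $\{\tfrac{n}{2}+2,\dots,n\}$. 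These blocks should be chosen so that: (i) all pairwise majority margins among $a,b,c,d,e$ are nonzero—because $n$ is even these margins are even, so ``nonzero'' forces $|\cdot|\geq 2$ and rules out the only ties that Lemma~\ref{lem:no2} would allow to produce a size-two choice set, yielding $|f(\cdot)|\geq 3$ throughout; (ii) the qualitative majority relation of the odd case is preserved, namely $a\succ_M b$, $b\succ_M c$, $b\succ_M d$, $b\succ_M e$, $d\succ_M a$, with no Condorcet winner, while $e$ Pareto-dominates $d$ in $R^1$ and $b$ Pareto-dominates $c$ in $\hat R^1$; and (iii) each case transformation still produces the intended Condorcet winner in its derived profile. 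With such profiles, the entire case analysis goes through unchanged—strong Condorcet-consistency playing the role of Lemma~\ref{lem:sCC} and Lemma~\ref{lem:no2} applied exactly as before—giving $f(R^1)=\{a,b,c,e\}$ and $f(\hat R^1)=\{a,b,d,e\}$. Since $R^1$ and $\hat R^1$ differ only in voter $3$'s relation and voter $3$ prefers $d$ to $c$, the deviation from $R^1$ to $\hat R^1$ is a successful manipulation, contradicting weak strategyproofness.

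The main obstacle is the profile engineering: I must exhibit a single family of profiles, valid simultaneously for all even $n\geq 8$, whose majority margins are nonzero (to exclude ties and hence size-two sets), qualitatively match the odd-case tournament, and survive each swap or top-promotion used in the cases. Each such modification shifts a margin by exactly $2$, so the block sizes must keep every relevant margin strictly bounded away from $0$ until the final, decisive step of each case. This is precisely why the bound $n\geq 8$ rather than $n\geq 5$ is needed: the smaller block has size $\tfrac{n}{2}-2$, which is at least $2$ exactly when $n\geq 8$, so that the one-voter-at-a-time deviations in the cases analogous to Cases~1.3, 1.4, 2.3, and~2.4 can be carried out while preserving both the absence of a Condorcet winner and the required Pareto dominations. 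Once a concrete admissible profile family is verified to meet conditions (i)–(iii), the remainder is a routine transcription of the odd-case reasoning, with strong Condorcet-consistency absorbing every appeal to the parity of $n$.
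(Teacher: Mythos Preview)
There is a genuine gap in your plan: condition~(i) (all majority margins nonzero throughout the case analysis) is incompatible with the single-swap structure you want to inherit from Cases~1.1, 1.2, 2.1, and~2.2 of the odd proof. In those cases, a single voter swaps two adjacent alternatives and thereby \emph{creates} a Condorcet winner; this requires the relevant majority margin to flip sign under a change of exactly~$2$. For even $n$ with all margins even, a margin of $+2$ becomes $0$ after one swap---a tie, not a strict majority---so the swapped alternative is \emph{not} a Condorcet winner, and (strong) Condorcet-consistency does not force a singleton. Conversely, if the margin starts at $+4$ or more, a single swap cannot flip it. Hence you cannot simultaneously keep all margins nonzero at the penultimate profile and reach a Condorcet winner by a single adjacent swap. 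Your proposal acknowledges that margins must stay ``bounded away from~$0$ until the final, decisive step,'' but the final step itself then fails.

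The paper does not try to avoid ties. It builds entirely new base profiles for $n=8$ (three voters with $b,c,a,e,d$, three with $a,e,d,b,c$, and two distinguished voters~$7,8$; the manipulation is by voter~$8$, not voter~$3$), and in almost every case a majority tie \emph{does} appear at some intermediate profile. Each time, the paper rules out the corresponding size-$2$ choice set not via Lemma~\ref{lem:no2} alone but by an extra weak-strategyproofness argument showing that choosing exactly that tied pair would let the deviating voter manipulate; only then does Lemma~\ref{lem:no2} exclude all other size-$2$ sets. The extension to even $n>8$ is done by adding pairs of voters with fixed preferences, not by rescaling blocks. So the ``routine transcription'' you envisage does not exist; the even case needs both different profiles and a systematic treatment of the ties that inevitably arise.
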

\begin{proof}
	Just as for the case that $n$ is odd, we will again assume that $m=5$ as it is straightforward to generalize our analysis to a larger number of alternatives by using \emph{ex post} efficiency. Moreover, we also assume that there are precisely $n=8$ voters; it can be checked that the impossibility can be generalized to every even $n>8$ by adding pairs of voters who report $a,b,c,d,e$ and $d,e,b,c,a$. 
 
 Now, we assume for contradiction that there is an even-chance SDS that satisfies weak strategyproofness, strong Condorcet-consistency, and \emph{ex post} efficiency for $m=5$ alternatives and $n=8$ voters. Our proof will focus on the following preference profiles.\smallskip

\setlength\tabcolsep{3 pt}
\noindent
{\medmuskip=0mu\relax
	\thickmuskip=1mu\relax
    \begin{tabular}{L{0.07\profilewidth} L{0.4\profilewidth} L{0.4\profilewidth}}
	$R^1$: & $\{1,2,3\}$: $b,c,a,e,d$ & $\{4,5,6\}$: $a,e,d,b,c$ \\
 & $7:$ $e,d,b,c,a$ & $8$: $e,d,c,b,a$
\end{tabular}}\smallskip

\noindent
{\medmuskip=0mu\relax
	\thickmuskip=1mu\relax
\begin{tabular}{L{0.07\profilewidth} L{0.4\profilewidth} L{0.4\profilewidth}}
	$\hat R^1$: & $\{1,2,3\}$: $b,c,a,e,d$ & $\{4,5,6\}$: $a,e,d,b,c$ \\
 & $7:$ $e,d,b,c,a$ & $8$: $d,e,b,c,a$
\end{tabular}
}\smallskip

Just as for \Cref{thm:EvenchanceCondImp}, we will show that $f(R^1)=\{a,b,c,e\}$ and $f(\hat R^1)=\{a,b,d,e\}$. This means that voter~$8$ can manipulate by deviating from $R^1$ to $\hat R^1$, thus showing that $f$ fails weak strategyproofness.\medskip

\textbf{Claim 1: $f(\hat R^1)=\{a,b,d,e\}$}

We will first show that $f(\hat R^1)=\{a,b,d,e\}$. To this end, we note that there is no Condorcet winner or majority tie in $R^1$, so strong Condorcet-consistency and \Cref{lem:no2} imply that $|f(\hat R^1)|\geq 3$. Moreover, $c$ is Pareto-dominated by $b$, so $c\not\in f(\hat R^1)$ by \emph{ex post} efficiency. We can thus prove our claim by showing that no choice set of size $3$ is chosen for $\hat R^1$. \medskip

\textit{Case 1.1: $f(\hat R^1)\neq \{a,d,e\}$ and $f(\hat R^1)\neq \{b,d,e\}$.$\qquad$}

As the first point, we will show that $f(\hat R^1)\neq \{a,d,e\}$ and $f(\hat R^1)\neq \{b,d,e\}$. Towards this end, we first assume for contradiction that $f(\hat R^1)=\{b,d,e\}$. In this case, we let the voters $6$, $5$, and $4$ one after another deviate to the preference relation $e,a,d,b,c$, which results in the profiles $\hat R^2$, $\hat R^3$, and $\hat R^4$ shown below. For $\hat R^2$, we first note $|f(\hat R^2)|\neq 1$ as there is no Condorcet winner in this profile. Moreover, \Cref{lem:no2} shows that $|f(\hat R^2)|\neq 2$ as there is no majority tie in $\hat R^2$. Finally, if $|f(\hat R^2)|\geq 3$ and $a\in f(\hat R^2)$, voter $6$ can manipulate by deviating from $\hat R^1$ to $\hat R^2$, so we conclude now that $f(\hat R^2)=\{b,d,e\}$. 

Next, for $\hat R^3$, there is still no Condorcet winner, so $|f(\hat R^3)|\neq 1$. Moreover, $f(\hat R^3)\neq \{a,e\}$ as voter $5$ can otherwise manipulate by deviating from $\hat R^2$ to $\hat R^3$. Since $a\mathrel{\hat{\sim}_3} e$ is the only majority tie in $\hat R^3$, we now infer that $|f(\hat R^3)|\neq 2$ due to \Cref{lem:no2}. Finally, it follows now analogously to $\hat R^2$ that $f(\hat R^3)=\{b,d,e\}$ as voter $5$ prefers every set $X$ with $a\in X$, $c\not\in X$, and $|X|\geq 3$ to $\{b,d,e\}$.

Lastly, we observe that $e$ is the Condorcet winner in $\hat R^4$, so strong Condorcet-consistency requires that $f(\hat R^3)=\{e\}$. This, however, conflicts with weak strategyproofness as voter $4$ prefers $\{e\}$ to $\{b,d,e\}$. This shows that our assumption that $f(\hat R^1)=\{b,d,e\}$ was wrong.
\smallskip

\noindent
{\medmuskip=0mu\relax
	\thickmuskip=1mu\relax
\begin{tabular}{L{0.07\profilewidth} L{0.4\profilewidth} L{0.4\profilewidth}}
	$\hat R^2$: & $\{1,2,3\}$: $b,c,a,e,d$ & $\{4,5\}$: $a,e,d,b,c$
 \end{tabular}
 \begin{tabular}{L{0.07\profilewidth} L{0.28\profilewidth} L{0.28\profilewidth}  L{0.28\profilewidth}}
      & $6$: $e,a,d,b,c$ & $7:$ $e,d,b,c,a$ & $8$: $d,e,b,c,a$
 \end{tabular}}\smallskip

{\noindent\medmuskip=0mu\relax
	\thickmuskip=1mu\relax
\begin{tabular}{L{0.07\profilewidth} L{0.4\profilewidth} L{0.4\profilewidth}}
	$\hat R^3$: & $\{1,2,3\}$: $b,c,a,e,d$ & $4$: $a,e,d,b,c$
 \end{tabular}
 \begin{tabular}{L{0.07\profilewidth} L{0.33\profilewidth} L{0.25\profilewidth}  L{0.25\profilewidth}}
      & $\{5,6\}$: $e,a,d,b,c$ & $7:$ $e,d,b,c,a$ & $8$: $d,e,b,c,a$
 \end{tabular}}\smallskip

{\noindent\medmuskip=0mu\relax
	\thickmuskip=1mu\relax
\begin{tabular}{L{0.07\profilewidth} L{0.4\profilewidth} L{0.4\profilewidth}}
	$\hat R^4$: & $\{1,2,3\}$: $b,c,a,e,d$ & $\{4,5,6\}$: $e,a,d,b,c$\\
  & $7:$ $e,d,b,c,a$ & $8$: $d,e,b,c,a$
 \end{tabular}
}\smallskip

Finally, the case that $f(\hat R^1)=\{a,d,e\}$ can be excluded by a symmetric argument when letting voters $1$ and $2$ deviate to $a,b,c,e,d$.\medskip

\textit{Case 1.2: $f(\hat R^1)\neq \{a,b,d\}$.$\qquad$} 
Assume for contradiction that $f(\hat R^1)= \{a,b,d\}$. First, we let voter $7$ deviate to the preference relation $e,b,d,c,a$ to derive the profile $\hat R^5$. We note for this profile that there is no Condorcet winner, so $|f(\hat R^5)|\neq 1$ by strong Condorcet-consistency. Next, $f(\hat R^5)\neq \{b,d\}$ as voter $7$ can otherwise manipulate by deviating from $\hat R^1$ to $\hat R^5$. Since $b\mathrel{\hat\sim_M^5} d$ is the only majority tie in $\hat R^5$, we infer from \Cref{lem:no2} that $|f(\hat R^5)|\neq 2$. As a consequence, it holds that $|f(\hat R^5)|\geq 3$, which implies that $e\not\in f(\hat R^5)$. Otherwise, voter $7$ can manipulate by deviating from $\hat R^1$ to $\hat R^5$. This proves that $f(\hat R^5)=\{a,b,d\}$

As the second step, we consider the profile $\hat R^6$ derived from $\hat R^4$ by assigning voter $6$ the preference relation $a,b,d,e,c$. Once again, strong Condorcet-consistency implies that $|f(\hat R^6)|\neq 1$ as there is no Condorcet winner in $\hat R^6$. Moreover, $f(\hat R^6)\neq \{b,e\}$ because voter $6$ can otherwise manipulate by deviating from $\hat R^6$ to $\hat R^5$. As a consequence, we derive from \Cref{lem:no2} that $|f(\hat R^6)|\neq 2$ because $b\mathrel{\hat \sim_M^6} e$ is the only majority tie in $\hat R^6$. As the last point on $\hat R^6$, we observe that, if $|f(\hat R^6)|\geq 3$, then $e\not\in f(\hat R^6)$ as voter $6$ can otherwise manipulate by deviating from $\hat R^6$ back to $\hat R^5$. This implies that $f(\hat R^6)=\{a,b,d\}$. 

Finally, voter $7$ can now manipulate by swapping $b$ and $e$. This step results in the profile $\hat R^7$, where $b$ is the Condorcet winner and therefore uniquely chosen. However, voter $7$ prefers the set $\{b\}$ to the set $\{b,d,e\}$, so this violates weak strategyproofness. This shows that our assumption that $f(\hat R^1)= \{a,b,d\}$ must have been wrong. 
\smallskip\\

\noindent
{\medmuskip=0mu\relax
	\thickmuskip=1mu\relax
\begin{tabular}{L{0.07\profilewidth} L{0.4\profilewidth} L{0.4\profilewidth}}
	$\hat R^5$: & $\{1,2,3\}$: $b,c,a,e,d$ & $\{4,5,6\}$: $a,e,d,b,c$ \\
 & $7:$ $e,b,d,c,a$ & $8$: $d,e,b,c,a$
\end{tabular}}\smallskip

\noindent
{\medmuskip=0mu\relax
	\thickmuskip=1mu\relax
\begin{tabular}{L{0.07\profilewidth} L{0.4\profilewidth} L{0.4\profilewidth}}
	$\hat R^6$: & $\{1,2,3\}$: $b,c,a,e,d$ & $\{4,5\}$: $a,e,d,b,c$
 \end{tabular}
 \begin{tabular}{L{0.07\profilewidth} L{0.28\profilewidth} L{0.28\profilewidth}  L{0.28\profilewidth}}
      & $6$: $a,b,d,e,c$ & $7:$ $e,b,d,c,a$ & $8$: $d,e,b,c,a$
 \end{tabular}}\smallskip

\noindent
{\medmuskip=0mu\relax
	\thickmuskip=1mu\relax
\begin{tabular}{L{0.07\profilewidth} L{0.4\profilewidth} L{0.4\profilewidth}}
	$\hat R^7$: & $\{1,2,3\}$: $b,c,a,e,d$ & $\{4,5\}$: $a,e,d,b,c$
 \end{tabular}
 \begin{tabular}{L{0.07\profilewidth} L{0.28\profilewidth} L{0.28\profilewidth}  L{0.28\profilewidth}}
      & $6$: $a,b,d,e,c$ & $7:$ $b,e,d,c,a$ & $8$: $d,e,b,c,a$
 \end{tabular}}\medskip

\textit{Case 1.3: $f(\hat R^1)\neq \{a,b,e\}$.$\qquad$} 
We assume for contradiction that $f(\hat R^1)=\{a,b,e\}$. In this case, we first let voter $8$ swap $b$ and $e$, which results in the profile $\hat R^{8}$. Since there is no Condorcet winner in $\hat R^8$, strong Condorcet-consistency necessitates that $|f(\hat R^8)|\neq 1$. Moreover, it holds that $f(\hat R^8)\neq \{b,e\}$ as voter $8$ can otherwise manipulate by deviating from $\hat R^1$ to $\hat R^8$. Since $b\mathrel{\hat\sim_M^{8}} e$ is the only majority tie in $\hat R^{8}$, it follows from \Cref{lem:no2} that $|f(\hat R^{8})|\neq 2$. Finally, if $|f(\hat R^8)|\geq 3$, then $d\not\in f(\hat R^{8})$ as every \emph{ex post} efficient outcome with $d\in f(\hat R^{8})$ and $|f(\hat R^{7})|\geq 3$ means that voter $8$ can manipulate by deviating from $\hat R^1$ to $\hat R^{8}$. Hence, we conclude that $f(\hat R^{8})=\{a,b,e\}$.

As second step, we consider the profile $\hat R^{9}$ which is derived from $\hat R^{8}$ by assigning voter $6$ the preference relation $a,b,e,d,c$. First, we note once again that there is no Condorcet winner in $\hat R^9$, so no singleton set can be chosen. Moreover, $f(\hat R^9)\neq \{b,d\}$ as voter $6$ can otherwise manipulate by deviating back to $\hat R^8$. Consequently, \Cref{lem:no2} shows now that $|f(\hat R^9)|\neq 2$ as $b\mathrel{\hat \sim_M^9} d$ is the only majority tie in $\hat R^9$. This proves that $|f(\hat R^9)|\geq 3$. In turn, it follows that $d\not\in f(\hat R^9)$ as voter $6$ can otherwise manipulate by deviating from $\hat R^9$ to $\hat R^8$. Hence, the only valid outcome for this profile is $f(\hat R^9)=\{a,b,e\}$.

Finally, that means that voter $8$ can manipulate by swapping $b$ and $d$. In the resulting profile $\hat R^{10}$, $b$ is the Condorcet winner, so strong Condorcet-consistency requires that the set $\{b\}$ is chosen. As voter $8$ prefers $b$ to both $a$ and $e$ in $\hat R^{9}$, this is a manipulation and hence proves that $f(\hat R^1)\neq\{a,b,e\}$.
\smallskip

\noindent
{\medmuskip=0mu\relax
	\thickmuskip=1mu\relax
\begin{tabular}{L{0.08\profilewidth} L{0.4\profilewidth} L{0.4\profilewidth}}
	$\hat R^8$: & $\{1,2,3\}$: $b,c,a,e,d$ & $\{4,5,6\}$: $a,e,d,b,c$ \\
 & $7:$ $e,d,b,c,a$ & $8$: $d,b,e,c,a$
\end{tabular}}\smallskip

\noindent
{\medmuskip=0mu\relax
	\thickmuskip=1mu\relax
\begin{tabular}{L{0.08\profilewidth} L{0.4\profilewidth} L{0.4\profilewidth}}
	$\hat R^9$: & $\{1,2,3\}$: $b,c,a,e,d$ & $\{4,5\}$: $a,e,d,b,c$ \\
 \end{tabular}
 \begin{tabular}{L{0.08\profilewidth} L{0.28\profilewidth} L{0.28\profilewidth}  L{0.28\profilewidth}}
      & $6$: $a,b,e,d,c$ & $7:$ $e,d,b,c,a$ & $8$: $d,b,e,c,a$
 \end{tabular}}\smallskip

\noindent
{\medmuskip=0mu\relax
	\thickmuskip=1mu\relax
\begin{tabular}{L{0.08\profilewidth} L{0.4\profilewidth} L{0.4\profilewidth}}
	$\hat R^{10}$: & $\{1,2,3\}$: $b,c,a,e,d$ & $\{4,5\}$: $a,e,d,b,c$ \\
 \end{tabular}
 \begin{tabular}{L{0.08\profilewidth} L{0.28\profilewidth} L{0.28\profilewidth}  L{0.28\profilewidth}}
      & $6$: $a,b,e,d,c$ & $7:$ $e,d,b,c,a$ & $8$: $b,d,e,c,a$
 \end{tabular}}\smallskip

\textbf{Claim 2: $f(\hat R^1)=\{a,b,c,e\}$}

As our second claim, we will show that $f(R^1)=\{a,b,c,e\}$. To this end, we first note that there is no Condorcet winner and no majority tie in $R^1$, so strong Condorcet-consistency and \Cref{lem:no2} imply that $|f(R^1)|\geq 3$. Moreover, $d$ is Pareto-dominated by $e$ in $R^1$, so $d\not\in f(R^1)$ by \emph{ex post} efficiency. Hence, we will again show that $f(R^1)=\{a,b,c,e\}$ by ruling out the four remaining sets of size $3$.\medskip

\textit{Case 2.1: $f(R^1)\neq \{a,b,c\}$.$\qquad$} Assume for contradiction that $f(R^1)= \{a,b,c\}$. In this case, we first let voter $8$ manipulate by reporting the preference relation $e,d,b,c,a$. In the resulting profile $R^2$, $b$ Pareto-dominates $c$ and $e$ Pareto-dominates $d$, so $f(R^2)\subseteq \{a,b,e\}$ by \emph{ex post} efficiency. In turn, weak strategyproofness rules out that $f(R^2)\in \{\{e\}, \{b,e\}, \{a,b,e\}\}$ as voter $8$ can otherwise manipulate by deviating from $R^1$ to $R^2$, and that $f(R^2)\in \{\{a\}, \{a,b\}\}$ as voter $8$ can otherwise manipulate by deviating from $R^2$ to $R^1$. Thus, the only valid outcome is $\{b\}$, i.e., $f(R^2)=\{b\}$. However, this contradicts strong Condorcet-consistency as $b$ is not the Condorcet winner in $R^2$, so the assumption that $f(R^1)=\{a,b,c\}$ is wrong.
\smallskip

\noindent
{\medmuskip=0mu\relax
	\thickmuskip=1mu\relax
    \begin{tabular}{L{0.07\profilewidth} L{0.4\profilewidth} L{0.4\profilewidth}}
	$R^2$: & $\{1,2,3\}$: $b,c,a,e,d$ & $\{4,5,6\}$: $a,e,d,b,c$ \\
 & $7:$ $e,d,b,c,a$ & $8$: $e,d,b,c,a$
\end{tabular}}
\medskip

\textit{Case 2.2: $f(R^1)\neq \{b,c,e\}$.$\qquad$} 
For our second case, we suppose for contradiction that $f(R^1)=\{b,c,e\}$. We consider the profiles $R^3$, $R^4$, and $R^5$ derived from $R^1$ by letting voters $6$, $5$, and $4$ swap $a$ and $e$. Using analogous reasoning as in Case 1.1, it can be shown that $f(R^4)=\{b,c,e\}$. However, this means that voter $6$ can manipulate by deviating from $R^4$ to $R^5$ as $e$ is the Condorcet winner in $R^5$. Thus, strong Condorcet-consistency postulates that $f(R^5)=\{e\}$, which means that voter $4$ can manipulate by deviating from $R^5$ to $R^6$. Hence, weak strategyproofness conflicts with Condorcet-consistency, thus proving that $f(R^1)\neq\{b,c,e\}$. 
\smallskip

\noindent
{\medmuskip=0mu\relax
	\thickmuskip=1mu\relax
\begin{tabular}{L{0.07\profilewidth} L{0.4\profilewidth} L{0.4\profilewidth}}
	$R^3$: & $\{1,2,3\}$: $b,c,a,e,d$ & $\{4,5\}$: $a,e,d,b,c$
 \end{tabular}
 \begin{tabular}{L{0.07\profilewidth} L{0.28\profilewidth} L{0.28\profilewidth}  L{0.28\profilewidth}}
      & $6$: $e,a,d,b,c$ & $7:$ $e,d,b,c,a$ & $8$: $e,d,c,b,a$
 \end{tabular}}\smallskip

{\noindent\medmuskip=0mu\relax
	\thickmuskip=1mu\relax
\begin{tabular}{L{0.07\profilewidth} L{0.4\profilewidth} L{0.4\profilewidth}}
	$R^4$: & $\{1,2,3\}$: $b,c,a,e,d$ & $4$: $a,e,d,b,c$
 \end{tabular}
 \begin{tabular}{L{0.07\profilewidth} L{0.33\profilewidth} L{0.25\profilewidth}  L{0.25\profilewidth}}
      & $\{5,6\}$: $e,a,d,b,c$ & $7:$ $e,d,b,c,a$ & $8$: $e,d,c,b,a$
 \end{tabular}}\smallskip

{\noindent\medmuskip=0mu\relax
	\thickmuskip=1mu\relax
\begin{tabular}{L{0.07\profilewidth} L{0.4\profilewidth} L{0.4\profilewidth}}
	$R^5$: & $\{1,2,3\}$: $b,c,a,e,d$ & $\{4,5,6\}$: $e,a,d,b,c$\\
  & $7:$ $e,d,b,c,a$ & $8$: $e,d,c,b,a$
 \end{tabular}}\medskip

\textit{Case 2.3: $f(R^1)\neq \{a,c,e\}$.$\qquad$} 
For our third case, we assume for contradiction that $f(R^1)= \{a,c,e\}$. In this case, we derive the profiles $R^6$ and $R^7$ from $R^1$ by letting voters $1$ and $2$ reinforce $a$ against $c$. We first note for $R^6$ that $|f(R^6)|\neq 1$ due to strong Condorcet-consistency and the absence of a Condorcet winner. Moreover, $f(R^5)\neq\{a,c\}$ as voter $1$ can otherwise manipulate $f$ by deviating from $R^1$ to $R^6$. Since $a\sim_M^6 c$ is the only majority tie in $R^6$, it follows from \Cref{lem:no2} that $|f(R^6)|\neq 2$, too. Hence, $|f(R^6)|\geq 3$ and, because voter $1$ prefers each set $X$ with $b\in X$, $d\not\in X$, and $|X|\geq 3$ to $\{a,c,e\}$, we conclude that $f(R^6)=\{a,c,e\}$. Furthermore, analogous arguments between $R^6$ and $R^7$ show that $f(R^7)=\{a,c,e\}$.

Next, we consider the profile $R^8$ and $R^9$ derived from $R^7$ by letting voters $1$ and $2$ also reinforce $a$ against $b$. First, we note again that $|f(R^8)|\neq 1$ as there is still no Condorcet winner in this profile. Moreover, $f(R^8)\neq \{a,b\}$ because voter $1$ can otherwise manipulate by deviating from $R^7$ to $R^8$. In turn, \Cref{lem:no2} implies that $|f(R^8)|\neq 2$ as $a\sim_M^8 b$ is the only majority tie in $R^8$. This means that $|f(R^8)|\geq 3$ and it can be verified that every such outcome with $b\in f(R^8)$ allows voter $1$ to manipulate by going from $R^7$ to $R^8$. Hence, we conclude that $f(R^8)=\{a,c,e\}$. However, this means that voter $2$ can manipulate by deviating from $R^8$ to $R^9$: in the latter profile $a$ is the Condorcet winner, so $f(R^9)=\{a\}$ due to Condorcet-consistency. However, voter $2$ prefers $a$ to both $c$ and $e$, so deviating from $R^8$ to $R^9$ makes him better off. This proves that $f(R^1)\neq\{a,c,e\}$.
\smallskip

{\noindent\medmuskip=0mu\relax
	\thickmuskip=1mu\relax
\begin{tabular}{L{0.07\profilewidth} L{0.36\profilewidth} L{0.4\profilewidth}}
	$R^6$: & $1$: $b,a,c,e,d$ & $\{2,3\}$: $b,c,a,e,d$ 
 \end{tabular}
 \begin{tabular}{L{0.07\profilewidth} L{0.36\profilewidth} L{0.25\profilewidth}  L{0.25\profilewidth}}
      & $\{4,5,6\}$: $e,a,d,b,c$ & $7:$ $e,d,b,c,a$ & $8$: $e,d,c,b,a$
 \end{tabular}}\smallskip

 {\noindent\medmuskip=0mu\relax
	\thickmuskip=1mu\relax
\begin{tabular}{L{0.07\profilewidth} L{0.36\profilewidth} L{0.4\profilewidth}}
	$R^7$: & $\{1,2\}$: $b,a,c,e,d$ & $3$: $b,c,a,e,d$ 
 \end{tabular}
 \begin{tabular}{L{0.07\profilewidth} L{0.36\profilewidth} L{0.25\profilewidth}  L{0.25\profilewidth}}
      & $\{4,5,6\}$: $e,a,d,b,c$ & $7:$ $e,d,b,c,a$ & $8$: $e,d,c,b,a$
 \end{tabular}}\smallskip

  {\noindent\medmuskip=0mu\relax
	\thickmuskip=1mu\relax
 \begin{tabular}{L{0.07\profilewidth} L{0.36\profilewidth} L{0.25\profilewidth}  L{0.25\profilewidth}}
	$R^8$: & $1$: $a,b,c,e,d$ & $2$: $b,a,c,e,d$ & $3$: $b,c,a,e,d$\\
      & $\{4,5,6\}$: $e,a,d,b,c$ & $7:$ $e,d,b,c,a$ & $8$: $e,d,c,b,a$
 \end{tabular}}\smallskip

 {\noindent\medmuskip=0mu\relax
	\thickmuskip=1mu\relax
\begin{tabular}{L{0.07\profilewidth} L{0.36\profilewidth} L{0.4\profilewidth}}
	$R^7$: & $\{1,2\}$: $a,b,c,e,d$ & $3$: $b,c,a,e,d$ 
 \end{tabular}
 \begin{tabular}{L{0.07\profilewidth} L{0.36\profilewidth} L{0.25\profilewidth}  L{0.25\profilewidth}}
      & $\{4,5,6\}$: $e,a,d,b,c$ & $7:$ $e,d,b,c,a$ & $8$: $e,d,c,b,a$
 \end{tabular}}
\medskip

\textit{Case 2.4: $f(R^1)\neq \{a,b,e\}$.\qquad} 
As the last case, suppose for contradiction that $f(R^1)=\{a,b,e\}$. We first consider the profile $R^{10}$ that is derived from $R^{1}$ by swapping $a$ and $b$ in the preference relation of voter $8$. For this profile, no singleton set is chosen as there is no Condorcet winner. Moreover, $f(R^{10})\neq \{a,b\}$ because voter $8$ can otherwise manipulate by going from $R^{10}$ to $R^1$. We then derive from \Cref{lem:no2} that $|f(R^{10})|\neq 2$ as $a\sim_M^{10} b$ is the only majority tie in $R^{10}$. Consequently, $|f(R^{10})|\geq 3$. Now, if $f(R^{10})=\{a,c,e\}$ or $f(R^{10})=\{b,c,e\}$, then voter $8$ can manipulate by deviating from $R^{1}$ to $R^{10}$. By contrast, if $f(R^{10})=\{a,b,c\}$, then voter $8$ can manipulate by going into the other direction. Hence, the only valid choice set of size $3$ is $f(R^{10})=\{a,b,e\}$. Finally, we note that $f(R^{10})\neq \{a,b,c,e\}$ as voter $8$ can otherwise manipulate to the profile $\hat R^1$ for which $f(\hat R^1)=\{a,b,d,e\}$. Hence, the only valid outcome is $f(R^{10})=\{a,b,e\}$. 

Next, we let voter $7$ change his preference relation to $e,d,a,b,c$ to derive the profile $R^{11}$. We note that there is still no Condorcet winner in $R^{11}$, so $|f(R^{11})|\neq 1$ due to strong Condorcet-consistency. Furthermore, $f(R^{11})\neq \{a,c\}$ as voter $5$ can otherwise manipulate by deviating to $f(R^{10})$. Since $a\sim_M^{11} c$ is the only majority tie in $R^{11}$, this means that $|f(R^{11})|\neq 2$ due to \Cref{lem:no2}. Finally, it follows that $f(R^{11})\neq X$ for every Pareto-optimal set $X$ with $|X|\geq 3$ and $c\in X$ because voter $7$ can otherwise manipulate by deviating back to $R^{11}$. Hence, the only valid outcome for $R^{11}$ is $f(R^{11})=\{a,b,e\}$.

Finally, we consider the profile $R^{12}$ derived from $R^{11}$ by assigning voter $1$ the preference relation $b,e,a,c,d$. In this profile $a$ is the Condorcet winner, so strong Condorcet-consistency requires that $f(R^{12})=\{a\}$. However, this means that voter $1$ can manipulate by reverting back to $R^{11}$ as he prefers the set $\{a,b,e\}$ to $\{a\}$ according to his preferences in $R^{12}$. Consequently, we now infer that $f(R^{1})\neq\{a,b,e\}$.
\smallskip

\noindent
{\medmuskip=0mu\relax
	\thickmuskip=1mu\relax
    \begin{tabular}{L{0.08\profilewidth} L{0.4\profilewidth} L{0.4\profilewidth}}
	$R^{10}$: & $\{1,2,3\}$: $b,c,a,e,d$ & $\{4,5,6\}$: $a,e,d,b,c$ \\
 & $7:$ $e,d,b,c,a$ & $8$: $e,d,c,a,b$
\end{tabular}}\smallskip

\noindent
{\medmuskip=0mu\relax
	\thickmuskip=1mu\relax
    \begin{tabular}{L{0.08\profilewidth} L{0.4\profilewidth} L{0.4\profilewidth}}
	$R^{11}$: & $\{1,2,3\}$: $b,c,a,e,d$ & $\{4,5,6\}$: $a,e,d,b,c$ \\
 & $7:$ $e,d,a,b,c$ & $8$: $e,d,c,a,b$
\end{tabular}}\smallskip

{\noindent\medmuskip=0mu\relax
	\thickmuskip=1mu\relax
\begin{tabular}{L{0.07\profilewidth} L{0.36\profilewidth} L{0.4\profilewidth}}
	$R^6$: & $1$: $b,e,a,c,d$ & $\{2,3\}$: $b,c,a,e,d$ 
 \end{tabular}
 \begin{tabular}{L{0.07\profilewidth} L{0.36\profilewidth} L{0.25\profilewidth}  L{0.25\profilewidth}}
      & $\{4,5,6\}$: $e,a,d,b,c$ & $7:$ $e,d,a,b,c$ & $8$: $e,d,c,a,b$
 \end{tabular}}\smallskip

This completes the proof of this proposition.
\end{proof}

\subsection{Proof of \Cref{thm:impSDS}}

   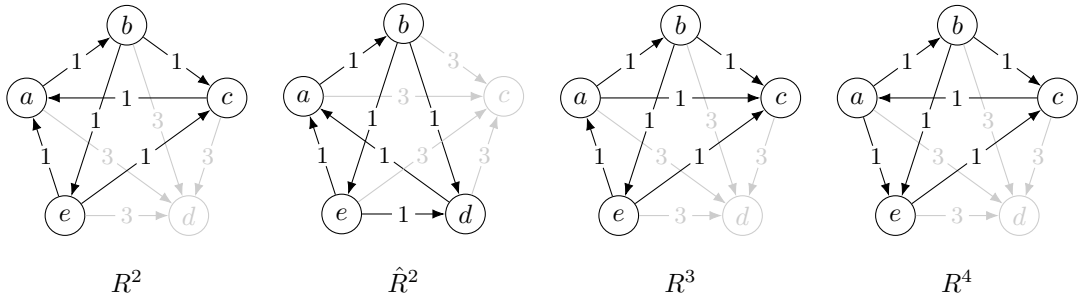
\begin{figure*}[tb]
    \newcommand\dist{1.4}
        \centering
        \begin{tikzpicture}
        \tikzstyle{mynode}=[fill=white,circle,draw,minimum size=1.5em,inner sep=0pt]
	   \tikzstyle{mylabel}=[fill=white,circle,inner sep=0.5pt]
	  \node[mynode] (a) at (162:\dist) {$a$};
	  \node[mynode] (b) at (90: \dist) {$b$};
	  \node[mynode] (c) at (18: \dist) {$c$};
	  \node[mynode, draw=black!20] (d) at (306:\dist) {\color{black!20}$d$};
	  \node[mynode] (e) at (234:\dist) {$e$};
   \node at (0, -2) {$R^2$};
	  	  
	  \draw[-Latex] (a) edge node[mylabel] {\small $1$} (b);
	  \draw[-Latex,black!20] (a) edge node[mylabel] {\small\color{black!20} $3$} (d);
	  \draw[-Latex] (b) edge node[mylabel] {\small $1$} (c);
	  \draw[-Latex,black!20] (b) edge node[mylabel] {\small\color{black!20} $3$} (d);
	  \draw[-Latex] (b) edge node[mylabel] {\small $1$} (e);
	  \draw[-Latex,black!20] (c) edge node[mylabel] {\small\color{black!20} $3$} (d);
      \draw[-Latex] (c) edge node[mylabel] {\small $1$} (a);
	  \draw[-Latex] (e) edge node[mylabel] {\small $1$} (a);
	  \draw[-Latex] (e) edge node[mylabel] {\small $1$} (c);
	  \draw[-Latex,black!20] (e) edge node[mylabel] {\small\color{black!20} $3$} (d);
        \end{tikzpicture}
        \hspace{0.3cm}
        \begin{tikzpicture}
        \tikzstyle{mynode}=[fill=white,circle,draw,minimum size=1.5em,inner sep=0pt]
	   \tikzstyle{mylabel}=[fill=white,circle,inner sep=0.5pt]
	  \node[mynode] (a) at (162:\dist) {$a$};
	  \node[mynode] (b) at (90: \dist) {$b$};
	  \node[mynode, draw=black!20] (c) at (18: \dist) {\color{black!20}$c$};
	  \node[mynode] (d) at (306:\dist) {$d$};
	  \node[mynode] (e) at (234:\dist) {$e$};
      \node at (0, -2) {$\hat R^2$};
	  	  
	  \draw[-Latex] (a) edge node[mylabel] {\small $1$} (b);
	  \draw[-Latex,black!20] (a) edge node[mylabel] {\small\color{black!20} $3$} (c);
	  \draw[-Latex] (b) edge node[mylabel] {\small $1$} (d);
	  \draw[-Latex,black!20] (b) edge node[mylabel] {\small\color{black!20} $3$} (c);
	  \draw[-Latex] (b) edge node[mylabel] {\small $1$} (e);
	  \draw[-Latex,black!20] (d) edge node[mylabel] {\small\color{black!20} $3$} (c);
      \draw[-Latex] (d) edge node[mylabel] {\small $1$} (a);
	  \draw[-Latex] (e) edge node[mylabel] {\small $1$} (a);
	  \draw[-Latex] (e) edge node[mylabel] {\small $1$} (d);
	  \draw[-Latex,black!20] (e) edge node[mylabel] {\small\color{black!20} $3$} (c);
        \end{tikzpicture}
        \hspace{0.3cm}
        \begin{tikzpicture}
        \tikzstyle{mynode}=[fill=white,circle,draw,minimum size=1.5em,inner sep=0pt]
	   \tikzstyle{mylabel}=[fill=white,circle,inner sep=0.5pt]
	  \node[mynode] (a) at (162:\dist) {$a$};
	  \node[mynode] (b) at (90: \dist) {$b$};
	  \node[mynode] (c) at (18: \dist) {$c$};
	  \node[mynode, draw=black!20] (d) at (306:\dist) {\color{black!20}$d$};
	  \node[mynode] (e) at (234:\dist) {$e$};
   \node at (0, -2) {$R^3$};
	  	  
	  \draw[-Latex] (a) edge node[mylabel] {\small $1$} (b);
	  \draw[-Latex,black!20] (a) edge node[mylabel] {\small\color{black!20} $3$} (d);
	  \draw[-Latex] (b) edge node[mylabel] {\small $1$} (c);
	  \draw[-Latex,black!20] (b) edge node[mylabel] {\small\color{black!20} $3$} (d);
	  \draw[-Latex] (b) edge node[mylabel] {\small $1$} (e);
	  \draw[-Latex,black!20] (c) edge node[mylabel] {\small\color{black!20} $3$} (d);
      \draw[-Latex] (a) edge node[mylabel] {\small $1$} (c);
	  \draw[-Latex] (e) edge node[mylabel] {\small $1$} (a);
	  \draw[-Latex] (e) edge node[mylabel] {\small $1$} (c);
	  \draw[-Latex,black!20] (e) edge node[mylabel] {\small\color{black!20} $3$} (d);
        \end{tikzpicture}
        \hspace{0.3cm}
        \begin{tikzpicture}
        \tikzstyle{mynode}=[fill=white,circle,draw,minimum size=1.5em,inner sep=0pt]
	   \tikzstyle{mylabel}=[fill=white,circle,inner sep=0.5pt]
	  \node[mynode] (a) at (162:\dist) {$a$};
	  \node[mynode] (b) at (90: \dist) {$b$};
	  \node[mynode] (c) at (18: \dist) {$c$};
	  \node[mynode, draw=black!20] (d) at (306:\dist) {\color{black!20}$d$};
	  \node[mynode] (e) at (234:\dist) {$e$};
   \node at (0, -2) {$R^4$};
	  	  
	  \draw[-Latex] (a) edge node[mylabel] {\small $1$} (b);
	  \draw[-Latex,black!20] (a) edge node[mylabel] {\small\color{black!20} $3$} (d);
	  \draw[-Latex] (b) edge node[mylabel] {\small $1$} (c);
	  \draw[-Latex,black!20] (b) edge node[mylabel] {\small\color{black!20} $3$} (d);
	  \draw[-Latex] (b) edge node[mylabel] {\small $1$} (e);
	  \draw[-Latex,black!20] (c) edge node[mylabel] {\small\color{black!20} $3$} (d);
      \draw[-Latex] (c) edge node[mylabel] {\small $1$} (a);
	  \draw[-Latex] (a) edge node[mylabel] {\small $1$} (e);
	  \draw[-Latex] (e) edge node[mylabel] {\small $1$} (c);
	  \draw[-Latex,black!20] (e) edge node[mylabel] {\small\color{black!20} $3$} (d);
        \end{tikzpicture}
        \caption{Weighted majority relations for the profiles $R^2$, $\hat R^2$, $R^3$, and $R^4$ in the proof of \Cref{thm:impSDS}. An arrow from $x$ to $y$ with weight $w$ means that $w$ more voters prefer $x$ to $y$ than $y$ to $x$. Pareto-dominated alternatives and their corresponding edges are colored in gray to improve readability.}
        \label{fig:maj1}
    \end{figure*} 

We next turn to the proof of \Cref{thm:impSDS}. To this end, we recall that we consider for this result the domain $\mathcal{L}^*=\bigcup \{\mathcal{L}^N\colon N\subseteq\mathbb{N} \text{ is finite and non-empty}\}$ containing all strict preference profiles, regardless of the number of voters. As first step to prove \Cref{thm:impSDS}, we show that every pairwise and weakly strategyproof SDS on $\mathcal{L}^*$ satisfies a property known as set-monotonicity \citep{Bran11c}: if a voter weakens an alternative that is assigned probability $0$, then the outcome does not change. To formally state this result, we recall that $R^{i:yx}$ is the profile derived from $R$ by only reinforcing $y$ against $x$ in the preference relation of voter $i$. 

\begin{lemma}\label{lem:SMON}
    Let $f$ denote a weakly strategyproof and pairwise SDS on $\mathcal{L}^*$. It holds that $f(R)=f(R^{i:yx})$ for all profiles $R$ with $f(R,x)=0$. 
\end{lemma}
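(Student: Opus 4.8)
The plan is to exploit pairwiseness to transfer the manipulation from voter $i$ to an auxiliary voter whose entire preference relation we may design, and then to apply weak strategyproofness in both directions. Write $p=f(R)$ and $q=f(R^{i:yx})$, and recall $p(x)=0$. Since $f$ is pairwise, it depends only on the weighted majority margins, and adding two voters reporting mutually reversed preferences leaves all margins---and hence the outcome---unchanged. I would therefore add to $R$ a pair of voters $a,b$ reporting a strict preference $P$ and its reverse $\bar P$, where $P$ ranks $x$ first, $y$ second, and the remaining alternatives in an order to be fixed later; by pairwiseness $f(R\cup\{a,b\})=f(R)=p$, so $x$ still receives probability $0$. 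Letting $a$ swap the adjacent pair $x,y$ reduces exactly the margin of $x$ over $y$ by two and leaves every other margin untouched, which is precisely the margin change induced by $R\to R^{i:yx}$; hence pairwiseness yields $f(R\cup\{a',b\})=q$, where $a'$ reports $P$ with $x$ and $y$ swapped.

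Next I apply weak strategyproofness to the swapping voter in both directions: with the preference of $a$ (where $x\succ y$) the deviation to $a'$ is not profitable, giving $q\not\succ_a p$; with the preference of $a'$ (where $y\succ x$) the reverse deviation is not profitable, giving $p\not\succ_{a'}q$. Because $x,y$ are top-ranked and adjacent in $P$, the stochastic-dominance relations of $a$ and $a'$ share all upper contour sets except that $a$ uses $\{x\}$ while $a'$ uses $\{y\}$; every remaining (``common'') contour set contains $\{x,y\}$ and these form the increasing chain $\{x,y\}\subset\{x,y,v_1\}\subset\cdots\subset A$, where $v_1,v_2,\dots$ is the chosen order of the lower alternatives. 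Translating the two non-manipulation conditions through the contour-set characterization of $\succsim_a$, and using $p(x)=0$, I would reduce them to a dichotomy: either $p$ and $q$ agree on every common contour set, or $p$ and $q$ are incomparable along the chain. In the first case the contour $\{x\}$ already forces $q(x)\le p(x)=0$, and agreement on $\{x,y\}$ then gives $q(y)=p(y)$, so $p=q$.

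The crux---and the main obstacle---is to rule out the incomparability loophole, in which the deviation leads to an incomparable lottery and weak strategyproofness says nothing. Here I would use that $p$ and $q$ are probability measures, so the differences $q(\{x,y\}\cup B)-p(\{x,y\}\cup B)=\bigl(q(\{x,y\})-p(\{x,y\})\bigr)+\sum_{v\in B}\bigl(q(v)-p(v)\bigr)$ are partial sums of the increments $q(v)-p(v)$. Consequently I can choose the order of the lower alternatives---exactly the design freedom granted by pairwiseness---so that these partial sums never change sign: if $q(\{x,y\})\ge p(\{x,y\})$ I place the alternatives with $q(v)\ge p(v)$ first, and otherwise the remaining ones first. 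For this order $p$ and $q$ are comparable on every common contour set, so the incomparable case cannot occur; weak strategyproofness then forces agreement along the whole chain, and the $p(x)=0$ argument upgrades this to $f(R)=f(R^{i:yx})$. The delicate points to verify are that the canceling pair keeps $x$ at probability $0$, that the helper's swap reproduces exactly the weighted majority relation of $R^{i:yx}$, and the sign-control of the partial sums; the rest is bookkeeping with upper contour sets.
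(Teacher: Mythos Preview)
Your proof is correct and follows essentially the same approach as the paper: add a cancelling pair of voters via pairwiseness, design one helper's preference using the partition of alternatives by the sign of $q(v)-p(v)$, and invoke weak strategyproofness to force $p=q$. The paper's version is marginally slicker in that it places $x,y$ \emph{between} $U^{+}=\{z:q(z)\ge p(z)\}$ and $U^{-}=\{z:q(z)<p(z)\}$ rather than at the top, which makes $q\succ_{i^*}p$ hold outright and avoids your case split and second application of weak strategyproofness.
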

\begin{proof}
    Let $f$ denote a weakly strategyproof and pairwise SDS and suppose for contradiction that there is a profile $R$ such that $f(R)\neq f(R^{i:yx})$ even though $f(R,x)=0$. In this case, we define $U^+=\{z\in A\setminus \{x,y\}\colon f(R^{i:yx},z)\geq f(R,z)\}$ and $U^-=\{z\in A\setminus \{x,y\}\colon f(R^{i:yx},z)<f(R,z)\}$. Now, consider the profile $R'$ derived from $R$ by adding two voters $i^*$, $j^*$ whose preference relations satisfy that \emph{(i)} $u\succ_i^* x\succ_i^* y\succ_i^* v$ for all $u\in U^+$, $v\in U^-$ and \emph{(ii)} $u\succ_i^* v$ if and only if $v\succ_j^* u$ for all $u,v\in A$. Since the preference relations of voter $i^*$ and $j^*$ are completely inverse, pairwiseness requires that $f(R')=f(R)$. Next, let $\hat R$ denote the profile derived from $R'$ by reinforcing $y$ against $x$ in the preference relation of voter $i^*$. It is easy to see that $|\{i\in N_{\hat R}\colon u\succ_i v\}|-|\{i\in N_{\hat R}\colon v\succ_i u\}|=|\{i\in N_{R^{i:yx}}\colon u\succ_i v\}|-|\{i\in N_{R^{i:yx}}\colon v\succ_i u\}|$ for all $u,v\in A$, so pairwiseness implies that $f(\hat R)=f(R^{i:yx})$. Finally, we claim that voter $i^*$ prefers $f(\hat R)$ to $f(R')$, which means that he can manipulate by deviating from $\hat R$ to $R'$. 
    To see this, we note for every $z\in U^+\cup \{x\}$ that $f(\hat R, U(\succsim_i^*, z))\geq f(R', U(\succsim_i^*, z))$ as the alternatives in $U^+$ only gain probability and $x$ cannot lose probability as $f(R',x)=f(R,x)=0$. 
    On the other hand, if $z\in U^-\cup \{y\}$, we define $L=\{w\in A\colon z\succ_{i^*}, w\}$ and observe that $f(\hat R, U(\succsim_i^*,z))=1-f(\hat R, L)\geq 1-f(R', L)=f(R', U(\succsim_i^*,z))$ as $L\subseteq U^-$. Finally, since $f(\hat R)\neq f(R')$, one of these inequalities must be strict, which proves that voter $i^*$ can indeed manipulate by deviating from $R'$ to $\hat R$. Hence, our initial assumption is wrong and $f(R)= f(R^{i:yx})$.
\end{proof}

We will now prove \Cref{thm:impSDS}.

\impSDS*
\begin{proof}
    In this proof, we will focus on the case that $m=5$ and $n=3$. To generalize the result to more alternatives, we can simply append the extra alternatives at the bottom of the preference relations of all voters as \emph{ex post} efficiency then implies that these alternatives are chosen with probability $0$ and they hence do not affect our results. To add more voters, we can just add voters with completely inverse preferences. These voters do not change the majority margins, so pairwiseness requires that the outcome is not allowed to change.

    Now, assume for contradiction that there is a neutral and pairwise SDS $f$ on $\mathcal{L}^*$ for $m=5$ alternatives that satisfies strategyproofness and \emph{ex post} efficiency. We will 
     consider the following six profiles. \smallskip

    \setlength\tabcolsep{3 pt}
    \noindent\begin{profile}{L{0.07\profilewidth} L{0.27\profilewidth} L{0.27\profilewidth} L{0.27\profilewidth}}
        $R^1$: & $1$: $a,b,e,d,c$ & $2$: $b,c,e,d,a$ & $3$: $e,d,c,a,b$\\
        $R^2$: & $1$: $a,b,e,c,d$ & $2$: $b,c,e,a,d$ & $3$: $e,c,a,b,d$\\
        $\hat R^1$: & $1$: $a,b,e,d,c$ & $2$: $b,c,e,d,a$ & $3$: $d,e,a,b,c$\\
        $\hat R^2$: & $1$: $a,b,e,d,c$ & $2$: $b,e,d,a,c$ & $3$: $d,e,a,b,c$\\
        $R^3$: & $1$: $a,b,e,c,d$ & $2$: $b,c,e,a,d$ & $3$: $e,a,c,b,d$\\
        $R^4$: & $1$: $a,b,e,c,d$ & $2$: $b,c,a,e,d$ & $3$: $e,c,a,b,d$
    \end{profile}

    We first note that $d$ is Pareto-dominated by $e$ in $R^1$ and $R^2$, and $c$ is Pareto-dominated by $b$ in $\hat R^1$ and $\hat R^2$. Hence, \emph{ex post} efficiency shows that $f(R^1,d)=f(R^2,d)=0$ and $f(\hat R^1,c)=f(\hat R^2,c)=0$. Moreover, \Cref{lem:SMON} implies that $f(R^1)=f(R^2)$ and $f(\hat R^1)=f(\hat R^2)$. Next, we consider the majority margins of $R^2$ and $\hat R^2$, which are depicted in \Cref{fig:maj1}. In particular, we observe that $|\{i\in N_{R^2}\colon x\succ_i y\}|-|\{i\in N_{R^2}\colon y\succ_i x\}|=|\{i\in N_{\hat R^2}\colon \tau(x)\succ_i \tau(y)\}|-|\{i\in N_{\hat R^2}\colon \tau(y)\succ_i \tau(x)\}|$ for all $x,y\in A$, where $\tau$ is the permutation defined by $\tau(a)=a$, $\tau(b)=b$, $\tau(c)=d$, $\tau(d)=c$, and $\tau(e)=e$. Hence, neutrality and pairwiseness imply that $f(R^2,x)=f(\hat R^2,x)$ for $x\in \{a,b,e\}$, $f(R^2,c)=f(\hat R^2,d)$, and $f(R^2,d)=f(\hat R^2,c)=0$. Combined with our previous observation, we now infer that $f(R^1,x)=f(\hat R^1,x)$ for $x\in \{a,b,c\}$ and $f(R^1,c)=f(\hat R^1,d)$. Hence, if $f(R^1,c)>0$, voter $3$ can manipulate by deviating from $R^1$ to $R^2$ as he prefers $d$ to $c$. 

    We thus suppose from now on that $f(R^1,c)=f(R^2,c)=0$ and consider the profile $R^3$ which is derived from $R^2$ by letting voter $3$ swap $a$ and $c$. Since $f(R^2,c)=0$, \Cref{lem:SMON} implies that $f(R^2)=f(R^3)$, so $f(R^3,c)=f(R^3,d)=0$. Next, $a$, $b$, and $e$ are symmetric in the weighted majority relation of $R^3$ (see \Cref{fig:maj1}), so neutrality and pairwiseness require that $f(R^3,a)=f(R^3,b)=f(R^3,e)$. Thus, we conclude that $f(R^2,x)=f(R^3,x)=\frac{1}{3}$ for all $x\in \{a,b,e\}$. 
    
    As last point, consider the profile $R^4$ that arises from $R^2$ by letting voter $2$ swap $a$ and $e$. The corresponding weighted majority relation is depicted in $R^4$ and it can be checked that this relation can be derived from the one of $R^2$ by permuting the alternatives according to the permutation $\tau$ defined by $\tau(a)=b$, $\tau(b)=e$, $\tau(c)=a$, $\tau(d)=d$, and $\tau(e)=c$. By neutrality from $R^2$ to $R^4$, it hence follows that $f(R^4,a)=f(R^4,b)=f(R^4,c)=\frac{1}{3}$. However, this means that voter $2$ can manipulate by deviating from $R^2$ to $R^4$ as he prefers $c$ to $e$. This contradicts the properties of $f$, so the assumption that an SDS satisfies all axioms of this theorem is wrong. 
\end{proof}

\subsection{Proof of \Cref{thm:SDimpossibity}}

As the fourth point, we present here our simplified proof of \Cref{thm:SDimpossibity}. In particular, our proof only reasons about 13 profiles (note here also that $R^1$ can be turned into $R^7$ and $R^8$ by permuting alternatives, and $R^2$ can be turned into $R^{13}$ by permuting alternatives and voters, so our proofs only needs 10 ``canonical'' profiles). By contrast, the computer-generated proof by \citet{BBEG16a} needs 47 canonical profiles and relates them in an convoluted way. 

    \begin{figure*}[t]
    \centering
    \begin{profile}{ll@{\hskip 2em}l@{\hskip 2em}l@{\hskip 2em}l}
        $R^1$: & $1\colon \{b,c\}, \{a,d\}$ & $2\colon \{a,d\}, \{b,c\}$ & $3\colon \{a,b\}, c,d$ & $4\colon \{c,d\},a,b$\\
        $R^2$: & $1\colon \{b,c\}, \{a,d\}$ & $2\colon \{a,d\}, \{b,c\}$ & $3\colon \{a,b\}, c,d$ & \textcolor{red}{$4\colon c,\{a,d\},b$}\\
        $R^3$: & $1\colon \{b,c\}, \{a,d\}$ & $2\colon \{a,d\}, \{b,c\}$ & \textcolor{red}{$3\colon b,a, \{c,d\}$} & $4\colon c,\{a,d\},b$\\
        $R^4$: & $1\colon \{b,c\}, \{a,d\}$ & $2\colon \{a,d\}, \{b,c\}$ & $3\colon b,a, \{c,d\}$ & \textcolor{red}{$4\colon c,d,\{a,b\}$}\\
        $R^5$: & \textcolor{red}{$1\colon \{b,c\}, d,a$} & $2\colon \{a,d\}, \{b,c\}$ & $3\colon b,a, \{c,d\}$ & $4\colon c,d,\{a,b\}$\\
        $R^6$: & $1\colon \{b,c\}, d,a$ & \textcolor{red}{$2\colon d, \{a,b\}, c$} & $3\colon b,a, \{c,d\}$ & $4\colon c,d,\{a,b\}$\\\hline
        $R^7$: & $1\colon \{a,b\}, \{c,d\}$ & $2\colon \{c,d\}, \{a,b\}$ & $3\colon \{a,c\}, b,d$ & $4\colon \{b,d\},c,a$\\
        $R^8$: & $1\colon \{a,b\}, \{c,d\}$ & $2\colon \{c,d\}, \{a,b\}$ & $3\colon \{a,d\}, b,c$ & $4\colon \{b,c\},d,a$\\
        $R^9$: & $1\colon \{a,b\}, \{c,d\}$ & $2\colon \{c,d\}, \{a,b\}$ &\textcolor{red}{$3\colon \{b,c\}, d,a$} & \textcolor{red}{$4\colon \{b,d\}, c,a$}\\
        $R^{10}$: & $1\colon \{a,b\}, \{c,d\}$ & \textcolor{red}{$2\colon c,d,\{a,b\}$} & $3\colon \{b,c\}, d,a$ & $4\colon \{b,d\}, c,a$\\
        $R^{11}$: & \textcolor{red}{$1\colon b,a, \{c,d\}$} & $2\colon c,d,\{a,b\}$ & $3\colon \{b,c\}, d,a$ & $4\colon \{b,d\}, c,a$\\\hline
        $R^{12}$: & $1\colon \{b,c\}, d,a$ & $2\colon d, \{a,b\}, c$ & \textcolor{red}{$3\colon \{a,b\}, \{c,d\}$} & $4\colon c,d,\{a,b\}$\\
        $R^{13}$: & $1\colon \{b,c\}, d,a$ & $2\colon d, \{a,b\}, c$ & $3\colon \{a,b\}, \{c,d\}$ & \textcolor{red}{$4\colon \{c,d\},\{a,b\}$}
    \end{profile}
    \caption{Profiles used in the proof of \Cref{thm:SDimpossibity}. Preference relations highlighted in red indicate manipulations and horizontal lines indicate the three steps of the proof.}
    \label{fig:profilesSDimpossibility}
    \end{figure*}

\setcounter{theorem}{4}
\begin{theorem}[\citet{BBEG16a}]
    Assume $n\geq 4$ and $m\geq 4$. There is no anonymous and neutral SDS on $\mathcal{R}^N$ that satisfies \emph{ex ante} efficiency and weak strategyproofness. 
\end{theorem}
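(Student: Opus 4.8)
The plan is to reduce to the base case $n=4$, $m=4$, determine $f$ at a maximally symmetric ``anchor'' profile using \emph{ex ante} efficiency together with the symmetry axioms, and then propagate this value along the two chains of profiles in \Cref{fig:profilesSDimpossibility} via weak strategyproofness until two profiles differing in a single voter force a profitable deviation. First I would justify the reduction. Extra alternatives can be appended at the bottom of every ranking; since \emph{ex ante} efficiency implies \emph{ex post} efficiency, these Pareto-dominated alternatives get probability $0$ and do not interfere, so we may take $m=4$. To reduce a larger electorate to four voters I would add completely indifferent voters (the relation with all of $A$ tied lies in $\mathcal{R}$): such voters are indifferent between all lotteries, hence they neither create new \emph{ex ante} dominations nor can manipulate, so restricting $f$ to profiles in which they report full indifference yields a four-voter SDS that inherits anonymity, neutrality, \emph{ex ante} efficiency, and weak strategyproofness.

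The heart of the argument is to pin down $f$ at the symmetric profile $R^1$. I would first observe that $R^1$ is invariant under simultaneously relabelling alternatives by $\tau=(a\,c)(b\,d)$ and voters by $(1\,2)(3\,4)$; anonymity and neutrality then give $f(R^1,a)=f(R^1,c)$ and $f(R^1,b)=f(R^1,d)$, so that $f(R^1)=(\gamma,\delta,\gamma,\delta)$ on $(a,b,c,d)$ with $\gamma+\delta=\tfrac12$. Writing out the upper contour sets $U(\succsim_i,x)$ of $R^1$, a short computation shows that every lottery $q$ stochastically dominating such a $p$ must again have the form $(q_a,q_b,q_a,q_b)$ and satisfy $q_a\ge\gamma$, with voters $3$ and $4$ strictly better off as soon as $q_a>\gamma$. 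Hence \emph{ex ante} efficiency forces $\gamma$ to its maximum $\tfrac12$, giving $f(R^1)=\tfrac12\,a+\tfrac12\,c$. Because $R^1$ is turned into $R^7$ and $R^8$ by relabelling alternatives, neutrality immediately yields $f(R^7)=\tfrac12\,b+\tfrac12\,c$ and $f(R^8)=\tfrac12\,b+\tfrac12\,d$. This ``symmetry forces equal shares on an orbit, efficiency pushes them to the boundary'' device is the key step, and I would reuse it at the occasional symmetric profile inside the chains.

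With the anchors fixed I would walk along the two chains $R^1\to R^2\to\dots\to R^6$ and $R^7,R^8\to R^9\to R^{10}\to R^{11}$, each link being a single-voter deviation (the entries marked in red). At every link, weak strategyproofness supplies the two constraints $f(R^{j+1})\not\succ_i f(R^j)$ and $f(R^j)\not\succ_i f(R^{j+1})$ for the deviating voter $i$; combining these ``no strict gain'' conditions with the support restrictions from \emph{ex ante} efficiency (and, where available, the symmetry argument above) determines $f$ at each new profile, and in particular the values along both chains needed for the final block. I would then exploit that $R^{12}$ arises from $R^6$ by a single deviation of voter $3$, that $R^{13}$ arises from $R^{12}$ by a single deviation of voter $4$, and that $R^{13}$ is a relabelling of the already-determined profile $R^2$. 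Feeding in the computed values pins down $f(R^{12})$ and $f(R^{13})$ and reveals that the voter separating $R^{12}$ from $R^{13}$ strictly prefers the outcome at one of them, contradicting weak strategyproofness. The two chains are both needed because, applied along a single chain, weak strategyproofness only yields one-sided bounds at the profiles containing ties, so the second chain supplies the complementary information that removes the remaining degrees of freedom.

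The delicate part is not any single profile but the bookkeeping forced by the \emph{incompleteness} of the stochastic-dominance relation: weak strategyproofness forbids only \emph{strict} improvements, which is a weak constraint, so at each profile one must simultaneously exploit all upper-contour-set inequalities, the exact \emph{ex ante}-efficient support, and whatever symmetry is present in order to determine the lottery uniquely. Verifying \emph{ex ante} efficiency itself---checking that a candidate lottery is dominated by an explicit $q$, or is undominated against every voter's upper contour sets---is where the computation concentrates, and the decisive point is to arrange the two chains so that at the very end a single honest-versus-manipulated comparison is genuinely a \emph{strict} stochastic-dominance improvement rather than merely an incomparability.
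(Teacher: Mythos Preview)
Your overall plan---reduce to $m=n=4$, anchor at $R^1$ via symmetry plus \emph{ex ante} efficiency, propagate along the two chains by weak strategyproofness, and close with a contradiction in the $R^{12}/R^{13}$ block---matches the paper's architecture, and your anchor computation $f(R^1)=\tfrac12 a+\tfrac12 c$ (hence $f(R^7)=\tfrac12 b+\tfrac12 c$, $f(R^8)=\tfrac12 b+\tfrac12 d$) is exactly right. But two concrete pieces are misplaced, and without them the argument does not close.

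First, the two chains do \emph{not} merge via the relabelling $R^{13}\leftrightarrow R^2$. The first chain only yields inequalities at its intermediate profiles (e.g.\ $f(R^2,c)\ge\tfrac12$, $f(R^2,d)=0$, and at the end merely $f(R^6,b)+f(R^6,c)=1$); $f(R^2)$ is \emph{not} pinned down, so the relabelling cannot determine $f(R^{13})$ either. The actual merge is a single-voter deviation between $R^{11}$ and $R^6$: up to anonymity, $R^6$ arises from $R^{11}$ by replacing the preference $\{b,d\},c,a$ with $d,\{a,b\},c$. Since the second chain gives $f(R^{11},b)=1$ and the first gives $f(R^6,\{b,c\})=1$, weak strategyproofness across this link (the deviator prefers $b$ to $c$) forces $f(R^6,b)=1$. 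Without this link your second chain's endpoint $f(R^{11},b)=1$ is never used, and the remaining one-sided bounds you acknowledge in your last paragraph are never tightened.

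Second, the contradiction is not at the $R^{12}\!\to\! R^{13}$ link. Once $f(R^6,b)=1$, strategyproofness (using the Pareto dominance of $b$ over $a$ in $R^{12}$ and $R^{13}$) propagates to $f(R^{12},b)=f(R^{13},b)=1$, so voter~$4$ faces identical outcomes and there is nothing to exploit there. The contradiction is at a further deviation: in $R^{13}$, voter~$2$ (true preference $d,\{a,b\},c$) can misreport $\{a,d\},b,c$, which up to anonymity yields $R^8$; since $f(R^8)=\tfrac12 b+\tfrac12 d$ and voter~$2$ strictly prefers $d$ to $b$, this is a genuine stochastic-dominance improvement over the point mass on $b$.
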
\begin{proof}
    We focus on the case that $m=n=4$; to extend the result to larger numbers of voters or alternatives, we can use standard inductive arguments that add completely indifferent voters and universally bottom-ranked alternatives. Hence, suppose for contradiction that there is an anonymous and neutral SDS that satisfies weak strategyproofness and \emph{ex ante} efficiency when $m=n=4$. To derive a contradiction, we consider the preference profiles in \Cref{fig:profilesSDimpossibility}. Moreover, we will proceed in three steps and first show that $f(R^6,b)+f(R^6,c)=1$, then that $f(R^6,b)=1$, and finally infer the contradiction.\medskip

    \textbf{Step 1:} We will first show that $f(R^6,b)+f(R^6,c)=1$. To this end, we first consider the profile $R^1$, where $a$ is symmetric to $c$ and $b$ is symmetric to $d$. Hence, anonymity and neutrality require that $f(R^1,a)=f(R^1,c)$ and $f(R^1, b)=f(R^1,d)$. Moreover, every lottery $p$ with $p(a)=p(c)$ and $p(b)=p(d)>0$ is \emph{ex ante} dominated by the lottery $q$ with $q(a)=q(c)=\frac{1}{2}$. Consequently, $f(R^1,a)=f(R^1,c)=\frac{1}{2}$. 
    
    Next, let $R^2$ denote the profile derived from $R^1$ by replacing the preference relation of voter $4$ with $c,\{a,d\},b$. It holds that $f(R^2,c)\geq \frac{1}{2}$ as otherwise, voter $4$ can manipulate by deviating from $R^2$ to $R^1$, so $f(R^2,a)+f(R^2,b)\leq \frac{1}{2}$. 
    Moreover, $d$ is Pareto-dominated by $a$ in $R^2$, so $f(R^2,d)=0$. 
    
    The profile $R^3$ is derived from $R^2$ by changing the preference relation of voter $3$ to $b,a,\{c,d\}$. Since $a$ still Pareto-dominates $d$, it follows that $f(R^3,d)=0$. In turn, weak strategyproofness shows that $f(R^3,a)+f(R^3,b)\leq \frac{1}{2}$ as voter $3$ can otherwise manipulate by deviating from $R^2$ to $R^3$. This implies that $f(R^3,c)\geq \frac{1}{2}$.
    
    As the fourth step, we analyze the profile $R^4$ which arises from $R^3$ by assigning voter $4$ the preference relation $c,d,\{a,b\}$. First, we note that $b$ is symmetric to $c$ and $a$ is symmetric to $d$ in $R^4$, so $f(R^4,b)=f(R^4,c)$ and $f(R^4,a)=f(R^4,d)$. Next, if $f(R^4,c)<\frac{1}{2}$, then voter $4$ can manipulate by deviating from $R^4$ to $R^3$ since $f(R^3,c)> f(R^4,c)$ and $f(R^3,c)+f(R^3,d)\geq\frac{1}{2}=f(R^4,c)+f(R^4,d)$ (as $f(R^4,b)=f(R^4,c)$ and $f(R^4,a)=f(R^4,d)$). By our symmetry conditions, it follows that $f(R^4,b)=f(R^4,c)=\frac{1}{2}$.
    
    The profile $R^5$ arises from $R^4$ by letting voter $1$ change his preference relation to $\{b,c\},d,a$. Based on weak strategyproofness to $R^4$, it is easy to infer that $f(R^5,b)+f(R^5,c)=1$ as voter $1$ could otherwise manipulate to $R^4$. 
    
    Finally, the profile $R^6$ is derived from $R^5$ by letting voter $2$ manipulate to $d, \{a,b\}, c$. Since $f(R^5,b)+f(R^5,c)=1$ and $b$ and $c$ are voter $2$'s least preferred alternatives in $R^5$, weak strategyproofness requires that $f(R^6,b)+f(R^6,c)=1$.\medskip

    \textbf{Step 2:} We will next show that $f(R^6,b)=1$. To this end, consider the profiles $R^7$ and $R^8$. It can be checked that these profiles are symmetric to $R^1$: $R^7$ arises from $R^1$ by mapping $a$ to $c$, $b$ to $a$, $c$ to $b$, and $d$ to $d$; $R^8$ arises from $R^1$ by mapping $a$ to $d$, $b$ to $a$, $c$ to $b$, and $d$ to $c$. Hence, symmetric arguments as for $R^1$ show that $f(R^7,b)=f(R^7,c)=f(R^8,b)=f(R^8,d)=\frac{1}{2}$.
    
    The profile $R^9$ arises from $R^7$ when replacing the preference relation of voter $3$ with $\{b,c\}, d,a$. Since $f(R^7,b)+f(R^7,c)=1$, strategyproofness hence requires that $f(R^9, b)+f(R^9,c)=1$, too, because otherwise voter $3$ can manipulate by deviating from $R^9$ to $R^7$. Moreover, we can derive the profile $R^9$ from $R^8$ by assigning the third voter in this profile the preference relation $\{b,d\}, c,a$ (and swapping the third and fourth voter). Since $f(R^8,b)+f(R^8,d)=1$, we infer that $f(R^9, b)+f(R^9,d)=1$ because of strategyproofness. Combining these two equations implies that $f(R^9,b)=1$. 
    
    The profile $R^{10}$ is derived from $R^9$ by letting voter $2$ change his preference relation to $c,d,\{a,b\}$. First, we note that $f(R^9,a)=0$ as $b$ Pareto-dominates $a$. As a consequence, strategyproofness requires that $f(R^{10},b)=1$ as every other outcome constitutes a manipulation for voter $2$.
    
    Next, the profile $R^{11}$ follows from $R^{10}$ by assigning voter $1$ the preference relation $b,a,\{c,d\}$. It is easy to verify that $f(R^{11},b)=1$; otherwise, voter $1$ can manipulate by deviating from $R^{11}$ to $R^{10}$. 
    
    Finally, the profile $R^6$ arises now from $R^{11}$ by assigning voter $4$ the preference relation $d, \{a,b\},c$ (and reordering the voters). By Step 1, we know that $f(R^6, b)+f(R^6,c)=1$. Hence, if $f(R^6,c)>0$, voter $4$ could manipulate by deviating from $R^6$ to $R^{11}$ as he prefers $b$ to $c$, so we conclude now that $f(R^6,b)=1$.\medskip

    \textbf{Step 3:} Finally, we will derive a contradiction. To this end, we consider the profile $R^{12}$ which is derived from $R^6$ by assigning voter $3$ the preference relation $\{b,a\}, \{c,d\}$. Alternative $b$ Pareto-dominates $a$ in $R^{12}$, so $f(R^{12}, a)=0$. In turn, weak strategyproofness shows that $f(R^{12},b)=1$ as voter $3$ can manipulate by deviating back to $R^6$ otherwise. 
    
    Finally, the profile $R^{13}$ arises from $R^{12}$ by letting voter $4$ change his preference relation to $\{c,d\}, \{a,b\}$. Since $b$ still Pareto-dominates $a$, we can infer analogously to the last step that $f(R^{13},b)=1$. However, voter $2$ can now manipulate to the profile $R^8$ by reporting $\{a,d\},b,c$ (and reordering the voters). Since $f(R^8,b)=f(R^8,d)=\frac{1}{2}$ and voter $2$ prefers $d$ to $b$, this constitutes a manipulation, which contradicts our assumptions on $f$.
\end{proof}

\subsection{Proof of \Cref{thm:bidicatorship}}\label{app:bidictatorship}

Finally, we will discuss the proof of \Cref{thm:bidicatorship} in detail. To this end, we will interpret even-chance SDSs again as set-valued voting rules; see \Cref{subsec:topsonly} for details. 

For the proof of \Cref{thm:bidicatorship}, we need additional terminology. Following \citet{BBL21b}, we say that a group of voters $G\subseteq N$ is \emph{decisive} for an SDS $f$ if $f(R)\subseteq T_i(R)$ for all voters $i\in G$ and all preference profiles $R$ such that ${\succsim_i}={\succsim_j}$ for all $i,j\in G$. Similarly, a group of voters $G$ is \emph{nominating} for an SDS $f$ if $f(R)\cap T_i(R)\neq\emptyset$ for all voters $i\in G$ and profiles $R$ such that ${\succsim_i}={\succsim_j}$ for all $i,j\in G$. Finally, a voter $i$ is a \emph{weak dictator} for $f$ if $\{i\}$ is a nominating group for $f$. Less formally, a group of voters is decisive if it can enforce that a subset of their top-ranked alternatives is chosen when all voters in the group report the same preference relation, and it is nominating if it can enforce that at least one of their top-ranked alternatives is chosen. Based on this notation, \citet{BBL21b} have shown the following result.\footnote{In fact, \citet{BBL21b} show an even stronger result as they use a strategyproofness notion that is weaker than our weak strategyproofness.}

\begin{lemma}[\citet{BBL21b}]\label{lem:decisiveVSnominating}
    Assume that $m\geq 3$ and $n\geq 2$, and let $f$ denote a weakly strategyproof and \emph{ex post} efficient even-chance SDS. A group of voters $G$ with $\emptyset\subsetneq G\subsetneq N$ is decisive for $f$ if and only if $N\setminus G$ is not nominating for $f$. 
\end{lemma}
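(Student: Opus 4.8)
The plan is to prove the two implications of the biconditional separately, relying throughout on a monotonicity observation: both ``being decisive'' and ``being nominating'' are upward closed families of coalitions. Indeed, if $G$ is decisive and $G\subseteq G'$, then any profile in which $G'$ is unanimous is in particular a profile in which $G$ is unanimous with the same top set, so $f(R)\subseteq T_i(R)$; thus $G'$ is decisive, and symmetrically every superset of a nominating group is nominating. I will also use the set-valued reading of even-chance SDSs and the induced preference $X\succsim_i Y\iff \frac{|X\cap U(\succsim_i,x)|}{|X|}\ge\frac{|Y\cap U(\succsim_i,x)|}{|Y|}$ for all $x\in A$. The only two consequences of this extension that I need are that a voter with a unique most-preferred alternative $a$ strictly prefers the choice set $\{a\}$ to every other choice set, and, more generally, that a voter strictly prefers a choice set that concentrates more weight on weakly-better alternatives; these are the sources of all manipulation arguments below.

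For the direction $G$ \emph{decisive} $\Rightarrow$ $N\setminus G$ \emph{not nominating}, I argue by contradiction. Suppose $G$ is decisive and $H:=N\setminus G$ is nominating. Since $m\geq 3$ and $\emptyset\subsetneq G\subsetneq N$, I can form a profile $W$ in which every voter of $G$ reports a common relation with unique top $a$ and every voter of $H$ reports a common relation with unique top $b$, where $a\neq b$. Decisiveness of $G$ then forces $f(W)\subseteq\{a\}$ (and hence $f(W)=\{a\}$, as $f$ is even-chance and returns a non-empty set), while the nominating property of $H$ forces $b\in f(W)$. As $a\neq b$ these statements are incompatible, so complementary groups can never be simultaneously decisive and nominating, which is exactly this implication.

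For the reverse direction $N\setminus G$ \emph{not nominating} $\Rightarrow$ $G$ \emph{decisive}, I prove the contrapositive: $G$ \emph{not decisive} $\Rightarrow$ $H$ \emph{nominating}. The hypothesis gives a witness profile $Q$ in which all of $G$ report a common relation with top set $T$ and yet some alternative $z\in f(Q)\setminus T$ is selected. I must show that for \emph{every} profile $P$ in which all of $H$ report a common relation with top set $S$ one has $f(P)\cap S\neq\emptyset$. The strategy is a pivotal swap argument: starting from $Q$, I transform the electorate one voter at a time toward the target $H$-unanimous configuration underlying $P$, invoking weak strategyproofness at each single-voter deviation (the deviating voter may not reach a strictly SD-preferred choice set) together with \emph{ex post} efficiency to discard Pareto-dominated alternatives. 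The aim is to carry along, as an invariant, the fact that the chosen set ``avoids the top of $G$ while meeting the top of $H$,'' so that once a two-bloc profile is reached in which $G$ and $H$ are each unanimous on opposed orders the outcome is pinned down tightly enough to contradict the assumed failure of $H$ to nominate; the upward-closure observation lets me reduce to extremal (minimal) coalition configurations during this sweep.

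The main obstacle is precisely this cross-profile propagation: the two hypotheses furnish witnesses living on unrelated profiles, and weak strategyproofness supplies only the one-directional, non-strict constraint $f(R')\not\succ_i f(R)$ rather than an equality of outcomes. Consequently the argument cannot follow the exact chosen set but must chain these ``$\not\succ$'' relations, using \emph{ex post} efficiency to collapse the otherwise ambiguous intermediate cases that the weak extension leaves open. The bookkeeping is heaviest in the genuinely weak-preference setting, where top sets $T_i(R)$ may fail to be singletons and ties force the delicate stochastic-dominance comparison above; here the conditions $m\geq 3$ and $n\geq 2$ are what guarantee enough alternatives to separate the two blocs and enough voters for a non-trivial complementary coalition.
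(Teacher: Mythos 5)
The ``only if'' direction you give is correct and complete: a single profile $W$ in which $G$ is unanimous with unique top $a$ and $N\setminus G$ is unanimous with unique top $b\neq a$ makes decisiveness of $G$ (forcing $f(W)=\{a\}$) and the nominating property of $N\setminus G$ (forcing $b\in f(W)$) directly contradictory. But this is the trivial half of the lemma, and it uses neither weak strategyproofness nor \emph{ex post} efficiency. The entire content of the statement lies in the converse: a single witness profile $Q$ with $f(Q)\not\subseteq T$ (where $T$ is the common top set of $G$) must be leveraged into the universally quantified conclusion that $f(P)\cap S\neq\emptyset$ for \emph{every} profile $P$ in which $N\setminus G$ is unanimous with top set $S$. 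For this direction you offer only a plan --- ``transform the electorate one voter at a time \dots carry along as an invariant \dots the outcome is pinned down tightly enough'' --- and you yourself flag the cross-profile propagation as ``the main obstacle'' without resolving it. That is exactly where the proof has to do its work, so as written the proposal contains a genuine gap rather than a proof.

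Concretely, two things break in the sketch. First, your intended invariant (``the chosen set avoids the top of $G$ while meeting the top of $H$'') is not even available at the starting profile $Q$: failure of decisiveness only gives one chosen alternative outside $T$, not $f(Q)\cap T=\emptyset$, and the voters of $H$ are not unanimous in $Q$, so ``meeting the top of $H$'' has no content there; the invariant is never established at any profile of the sweep. Second, weak strategyproofness yields only the non-strict exclusion $f(R')\not\succ_i f(R)$ per unilateral deviation, which leaves many admissible outcomes at each step, and \emph{ex post} efficiency removes only Pareto-dominated alternatives --- in the mixed profiles arising mid-sweep, where $G$ and $H$ disagree, there will typically be few or none of those, so it cannot ``collapse'' the ambiguity as claimed. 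Turning this plan into an argument requires the extended case analysis of \citet{BBL21b}, which is precisely why the paper imports this lemma from that work (with a footnote that their result holds even for a weaker strategyproofness notion) rather than reproving it; nothing in your proposal substitutes for that analysis.
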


\Cref{lem:decisiveVSnominating} is of interest to us as the notion of bidictatorial SDSs is closely related to dictating groups of voters. We thus analyze the structure of the decisive groups in the next lemma.

\begin{lemma}\label{lem:contraction}
    Assume that $m\geq 3$ and $n\geq 2$, and let $f$ denote a weakly strategyproof and \emph{ex post} efficient even-chance SDS. Moreover, suppose there are two decisive groups $I$, $J$ for $f$ such that $1<|I|=|J|<n$ and $|I\setminus J|=|J\setminus I|=1$. The group $I\cap J$ is also decisive for $f$. 
\end{lemma}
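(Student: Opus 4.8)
The plan is to prove that $K:=I\cap J$ is decisive through the duality in \Cref{lem:decisiveVSnominating}. Write $I=K\cup\{i\}$ and $J=K\cup\{j\}$ with $i\neq j$, $i\notin J$, and $j\notin I$; since $1<|I|=|J|<n$ we have $1\le |K|=|I|-1<n$, so $\emptyset\subsetneq K\subsetneq N$ (and likewise for $I$ and $J$), and the duality is available for all three groups. Hence it suffices to show that $N\setminus K$ is \emph{not} nominating, and along the way I may freely use that $N\setminus I=(N\setminus K)\setminus\{i\}$ and $N\setminus J=(N\setminus K)\setminus\{j\}$ are already not nominating, since $I$ and $J$ are decisive. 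Concretely, I would argue by contradiction: assuming $N\setminus K$ is nominating, I construct a profile $R$ in which all voters of $N\setminus K$ report a common preference $\succsim$ with top set $T$, while the voters of $K$ report a common preference $\succsim^{*}$ whose top set $T^{*}$ is disjoint from $T$ and which ranks $T$ as unfavorably as possible. The nomination hypothesis then forces some $t\in T$ into $f(R)$.

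The engine of the argument is the interplay between the two decisive supergroups and weak strategyproofness. Starting from $R$, letting voter $i$ switch to $\succsim^{*}$ makes $I=K\cup\{i\}$ unanimous, so decisiveness of $I$ forces the outcome into $T^{*}$; symmetrically for $j$ and $J$. The structural fact I would lean on is that, for the even-chance set order induced by $\succsim$, every set contained in the top class $T$ is a maximal element, i.e.\ weakly preferred to every other set (each ratio $\frac{|X\cap U(\succsim,x)|}{|X|}$ equals $1$), while a set disjoint from $T$ sitting in the second-best class behaves extremally in the opposite way. This makes each single-voter step yield a determinate, comparable outcome, so weak strategyproofness can be read off directly from whether the forced outcome acquires or loses a top-ranked alternative. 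I would then chain such steps---moving $i$, $j$, and (one at a time) the voters of $K$---to connect the configuration pinned by $I$-decisiveness to the one forced by the nomination hypothesis, until two consecutive profiles differing in a single voter exhibit a strictly profitable deviation.

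The main obstacle is what one might call the \emph{sincere-direction trap}: the most natural single-voter steps, in which $i$ or $j$ adopts the preference of the remaining voters, move that voter's outcome in the same direction as their report, so weak strategyproofness is satisfied vacuously and yields no information. The entire difficulty is to engineer a step in which a \emph{forced} outcome moves \emph{against} the true preference of the pivotal voter. I expect this to require (i) choosing $\succsim^{*}$ and the relative order of $T$ and $T^{*}$ so that some voter in $K$ strictly ranks a $T^{*}$-alternative above the $T$-alternative guaranteed by the nomination hypothesis, and (ii) invoking \emph{ex post} efficiency to eliminate the intermediate alternatives, so that the stochastic-dominance comparison between the two forced outcomes becomes \emph{strict} rather than a mere incomparability. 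A secondary complication is the case $|K|\ge 2$, where no single deviation of a $K$-voter can make $I$ or $J$ unanimous; here the contraction must be propagated by moving the voters of $K$ successively, maintaining at each stage that the partial group remains decisive. Once a strict manipulation is exhibited, it contradicts weak strategyproofness, so $N\setminus K$ is not nominating and, by \Cref{lem:decisiveVSnominating}, $K=I\cap J$ is decisive.
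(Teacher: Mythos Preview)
Your framework is the same as the paper's: use the duality of \Cref{lem:decisiveVSnominating} and exhibit a profile where all voters in $N\setminus K$ share a preference yet their top is not chosen. You also correctly diagnose the main obstacle, the ``sincere-direction trap.'' But the proposal stops precisely where the real work begins: you never say how to escape that trap, only that you ``expect'' certain choices to help. The paper's proof shows this is not a matter of a single clever step; it requires an intricate construction with roughly a dozen auxiliary profiles and two substantial sub-claims.

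Concretely, the paper works with $m=3$, puts $H:=N\setminus(I\cup J)$, and sets up six profiles $R^1,\dots,R^6$ in which the $K$-voters report $c,\{a,b\}$ while $i$, $j$, and $H$ vary. The two hard facts $f(R^1)=\{a,c\}$ and $f(R^4)=\{b,c\}$ each require their own chain of profiles (using decisiveness of $I$ and of $J$ separately, plus \emph{ex post} efficiency). Only once both are in hand can one pin $f(R^5)=\{c\}$ by a three-way squeeze: weak strategyproofness from $R^3$ rules out $\{a,b,c\}$ and forces $c\in f(R^5)$; if $f(R^5)=\{a,c\}$ then $i$ (with preference $b,a,c$) manipulates toward $R^4$; if $f(R^5)=\{b,c\}$ then $j$ (with preference $a,b,c$) manipulates toward $R^1$. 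One more step gives $f(R^6)=\{c\}$ with all of $N\setminus K$ reporting $a,b,c$. The trap is escaped not by a single against-the-grain deviation, but by playing two independently pinned outcomes ($R^1$ via $J$-decisiveness, $R^4$ via $I$-decisiveness) off against each other at a third profile.

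Two further points. First, your remark that for $|K|\ge 2$ one must ``maintain at each stage that the partial group remains decisive'' is either circular (that is what you are proving) or imprecise; the paper instead moves the $K$-voters one at a time and tracks only the \emph{outcome} via weak strategyproofness, never invoking decisiveness of any proper subgroup of $I$ or $J$. Second, your plan to choose $T$ and $T^{*}$ disjoint and rely on extremal stochastic-dominance comparisons is too coarse: the decisive step in the paper hinges on a profile ($R^5$) where $i$'s preference is $b,a,c$, which is neither the $K$-preference nor the $N\setminus K$-preference, and the comparison that bites is between $\{a,c\}$ and $\{b,c\}$, not between a top-set and its complement.
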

\begin{proof}
Let $f$ denote an weakly strategyproof and \emph{ex post} efficient even-chance SDS and let $I$ and $J$ denote two decisive groups as given by the lemma. We moreover define $i$ as the voter in $I\setminus J$, $j$ as the voter in $J\setminus I$ and $H$ as $N\setminus (I\cup J)$. In the subsequent proof, we will focus on the case that $m=3$ as it is straightforward to generalize the lemma to larger values of $m$ by adding dummy alternatives that are universally bottom-ranked. These dummy alternatives will be Pareto-dominated and therefore do not affect our analysis. 

We will prove the lemma in three steps and start with the central part. To this end, consider the following six profiles.\smallskip

\setlength{\tabcolsep}{3pt}
\noindent\begin{profile}{L{0.07\profilewidth} L{0.2\profilewidth} L{0.25\profilewidth} L{0.17\profilewidth} L{0.25\profilewidth}}
$R^1$: & $i$: $b,a,c$ & $I\cap J$: $c, \{a,b\}$ & $j$: $a,c,b$ & $H$: $a,b,c$\\
$R^2$: & $i$: $\{a,b\},c$ & $I\cap J$: $c, \{a,b\}$ & $j$: $a,c,b$ & $H$: $a,b,c$\\
$R^3$: & $i$: $\{a,b\},c$ & $I\cap J$: $c, \{a,b\}$ & $j$: $a,b,c$ & $H$: $a,b,c$\\
$R^4$: & $i$: $b,c,a$ & $I\cap J$: $c, \{a,b\}$ & $j$: $a,b,c$ & $H$: $a,b,c$\\
$R^5$: & $i$: $b,a,c$ & $I\cap J$: $c, \{a,b\}$ & $j$: $a,b,c$ & $H$: $a,b,c$\\
$R^6$: & $i$: $a,b,c$ & $I\cap J$: $c, \{a,b\}$ & $j$: $a,b,c$ & $H$: $a,b,c$
\end{profile}

For the start of our analysis, we will assume that $f(R^1)=\{a,c\}$ and $f(R^4)=\{b,c\}$; we will show later why this holds. Now, consider the profile $R^2$ which is derived from $R^1$ by letting voter $i$ manipulate. All voters prefer $a$ to $b$ in $R^2$, so $b$ is Pareto-dominated. Consequently, $f(R^2)\subseteq \{a,c\}$ due to \emph{ex post} efficiency. If $f(R^2)=\{a\}$, voter $i$ can manipulate by deviating from $R^1$ to $R^2$. Conversely, if $f(R^2)=\{c\}$, voter $i$ can manipulate by deviating from $R^2$ to $R^1$. Hence, the only valid outcome for $f(R^2)=\{a,c\}$. Next, the profile $R^3$ is derived from $R^2$ by letting voter $j$ swap $b$ and $c$. Since $a$ still Pareto-dominates $b$, we have $f(R^3)\subseteq \{a,c\}$. Moreover, a similar analysis as for $R^2$ shows that $f(R^3)=\{a,c\}$.

Next, we will determine the outcome for $R^5$. To this end, we first note that voter $i$ can deviate from $R^3$ to $R^5$ by breaking the tie between $a$ and $b$. Since $f(R^3)=\{a,c\}$, strategyproofness from $R^3$ to $R^5$ requires that $c\in f(R^5)$ and $f(R^5)\neq \{a,b,c\}$. This leaves us with three possible outcomes: $f(R^5)=\{a,c\}$, $f(R^5)=\{b,c\}$, or $f(R^5)=\{c\}$. Now, if $f(R^5)=\{b,c\}$, then voter $j$ can manipulate to $R^1$ as $f(R^1)=\{a,c\}$ and voter $j$ prefers $a$ to $b$. On the other hand, if $f(R^5)=\{a,c\}$, then voter $i$ can manipulate by deviating to $R^4$ as $f(R^4)=\{b,c\}$ by assumption. Hence, $f(R^5)=\{c\}$.

Finally, consider the profile $R^6$ that arises from $R^5$ by letting voter $i$ swap $a$ and $b$. Since $f(R^5)=\{c\}$, $f(R^6)=\{c\}$ as any other outcome constitutes a manipulation for voter $i$. Finally, in $R^6$, all voters in $N\setminus (I\cap J)$ have the same preference relation but $a\not\in f(R^5)$. This shows that the group $N\setminus (I\cap J)$ is not nominating, so $I\cap J$ is decisive by \Cref{lem:decisiveVSnominating}. 

We next prove our claims for $R^1$ and $R^4$. \medskip

\textbf{Claim 1: $f(R^1)=\{a,c\}$}

Tor prove this claim, we first consider the profile $\hat R^1$ defined below.\smallskip

\noindent\begin{profile}{L{0.07\profilewidth} L{0.17\profilewidth} L{0.25\profilewidth} L{0.17\profilewidth} L{0.25\profilewidth}}
$\hat R^1$: & $i$: $a,b,c$ & $I\cap J$: $\{b,c\}, a$ & $j$: $c,a,b$ & $H$: $a,b,c$
\end{profile}

We will show that $f(\hat R^1)=\{c\}$ or $f(\hat R^1)=\{b,c\}$. To this end, we first note that if $f(\hat R^1)=\{b\}$, then voter $j$ can deviate to the preference relation $a,b,c$ and strategyproofness requires for the resulting profile $\hat R^{1,1}$ that $f(\hat R^{1,1})=\{b\}$. However, all voters in $N\setminus (I\cap J)$ report $a,b,c$ in $\hat R^{1,1}$ and $a\not\in f(R^2)$, so the set $N\setminus (I\cap J)$ is not nominating. By \Cref{lem:decisiveVSnominating}, it then follows that the group $I\cap J$ is decisive and the lemma would follow at this point. Hence, we suppose that $f(\hat R^1)\neq \{b\}$. Moreover, suppose for contradiction that $a\in f(\hat R^1)$. Then, the voters $k\in I\cap J$ can one after another deviate to the preference relation $c,a,b$. By strategyproofness, it follows for every step that, if $a$ is chosen before the manipulation, it must be chosen after the manipulation; otherwise, we would choose a subset of the voter's most preferred alternative after the manipulation but not before, which contradicts weak strategyproofness. Since $a\in f(\hat R^1)$, this means for the profile $\hat R^{1,2}$, where all voters $k\in I\cap J$ report $c,a,b$, that $a\in f(\hat R^{1,2})$. However, in this profile, all voters $k\in J$ report $c,a,b$, so the decisiveness of $J$ implies that $f(\hat R^{1,2})=\{c\}$. This is a contradiction, so $a\not\in f(\hat R^1)$ and $f(\hat R^1)=\{c\}$ or $f(\hat R^1)=\{b,c\}$.

Next, let $\hat R^2$ and $\hat R^3$ denote the two subsequent profiles.\smallskip

\noindent\begin{profile}{L{0.07\profilewidth} L{0.17\profilewidth} L{0.25\profilewidth} L{0.17\profilewidth} L{0.25\profilewidth}}
$\hat R^2$: & $i$: $b,a,c$ & $I\cap J$: $\{b,c\}, a$ & $j$: $c,a,b$ & $H$: $a,b,c$\\
$\hat R^3$: & $i$: $b,a,c$ & $I\cap J$: $\{b,c\}, a$ & $j$: $a,c,b$ & $H$: $a,b,c$
\end{profile}

First, it follows that $a\not\in f(\hat R^2)$ and $a\not\in f(\hat R^3)$. Otherwise, we can let the voters in $I\cap J$ one after another deviate to $b,a,c$. For each step, strategyproofness implies that $a$ stays chosen, but all voters in $I$ report $b,a,c$ in the final profile. Hence the decisiveness of the voters in $I$ requires that $b$ is uniquely chosen, which contradicts that $a\in f(\hat R^2)$ (resp. $a\in f(\hat R^3)$), so $f(\hat R^2)\subseteq \{b,c\}$ and $f(\hat R^3)\subseteq \{b,c\}$. Moreover, $f(\hat R^2)\neq \{b\}$ as voter $i$ can otherwise manipulate by deviating from $\hat R^1$ to $\hat R^2$. This means that $f(\hat R^2)=\{b,c\}$ or $f(\hat R^2)=\{c\}$. From this, we infer that $f(\hat R^3)\neq \{b\}$ as otherwise voter $j$ can manipulate by deviating from $\hat R^3$ to $\hat R^2$. Hence, $f(\hat R^3)=\{c\}$ or $f(\hat R^3)=\{b,c\}$.

Next, we suppose that $I\cap J=\{i_1, \dots, i_\ell\}$ and consider the following profiles $\hat R^{3,k}$ for $k\in \{0,\dots, \ell\}$. In particular, $\hat R^{3,k}$ arises from $\hat R^{3,{k-1}}$ by assigning voter $i_k$ the preference relation $c, \{a,b\}$.\smallskip

\noindent
{\medmuskip=0mu\relax
	\thickmuskip=1mu\relax
\begin{tabular}{L{0.10\profilewidth} L{0.3\profilewidth} L{0.3\profilewidth} L{0.3\textwidth}}
	$\hat R^{3,k}$: & $i$: $b,a,c$ & $j$: $a,c,b$ & $H$: $a,b,c$
\end{tabular}
\begin{tabular}{L{0.10\profilewidth} L{0.4\profilewidth} L{0.45\profilewidth}}
	& $\{i_1,\dots, i_k\}$: $c, \{a,b\}$ &  $\{i_{k+1}, \dots, i_\ell\}$: $\{b,c\},a$
\end{tabular}}\smallskip

Now, if $f(\hat R^{3, k-1})=\{c\}$, then weak strategyproofness implies that $f(\hat R^{3, k})=\{c\}$. On the other hand, if $f(\hat R^{3, k-1})=\{b,c\}$, weak strategyproofness requires that $c\in f(\hat R^{3, k})$ and that $f(\hat R^{3, k})\neq \{a,b,c\}$ as voter $i_k$ can otherwise manipulate by deviating to $\hat R^{3, k-1}$. Hence, $f(\hat R^{3, k})=\{c\}$, $f(\hat R^{3, k})=\{b,c\}$, or $f(\hat R^{3, k})=\{a,c\}$ in this case. Finally, if $f(\hat R^{3, k-1})=\{a,c\}$, then $a\in f(\hat R^{3,k})$ as voter $i_k$ can otherwise manipulate by deviating from $\hat R^{3, k-1}$ to $\hat R^k$. Moreover, $c\in f(\hat R^{3,k})$ and $f(\hat R^{3,k})\neq \{a,b,c\}$ because voter $i_k$ can otherwise manipulate by deviating from $\hat R^{3,k}$ to $\hat R^{3,k-1}$. Consequently, $f(\hat R^{3,k})=\{a,c\}$ in this case. Since $f(\hat R^{3,0})=\{b,c\}$ or $f(\hat R^{3,0})=\{c\}$ (as $\hat R^{3,0}=\hat R^3$), this means for the profile $\hat R^4=\hat R^{3,\ell}$ that $f(\hat R^4)\in \{\{c\}, \{b,c\}, \{a,c\}\}$.\smallskip

\begin{profile}{L{0.07\profilewidth} L{0.17\profilewidth} L{0.25\profilewidth} L{0.17\profilewidth} L{0.25\profilewidth}}
    $\hat R^{4}$: & $i$: $b,a,c$ & $I\cap J$: $c, \{a,b\}$ &  $j$: $a,c,b$ & $H$: $a,b,c$
\end{profile}

Now, if $f(\hat R^4)=\{b,c\}$, then voter $j$ can manipulate by deviating to $c, \{a,b\}$. In the resulting profile $\hat R^{4,1}$, all voters in $J$ report $c,\{a,b\}$, so $c$ must be uniquely chosen due to the decisiveness of $J$. So, $f(\hat R^4)\neq \{b,c\}$. Next, if $f(\hat R^4)=\{c\}$, then voter $i$ can manipulate to $a,b,c$ and strategyproofness requires for the resulting profile $\hat R^{4,2}$ that $f(\hat R^{4,2})=\{c\}$ as $c$ is his least-preferred alternative in $\hat R^4$. Moreover, let $\hat R^{4,3}$ denote the preference profile derived from $\hat R^{4,2}$ by also assigning voter $j$ the preference relation $a,b,c$. All voters in $N\setminus (I\cap J)$ report $a,b,c$ in this profile, so $b$ is now Pareto-dominated. Moreover, since voter $j$ prefers $a$ to $c$, the outcomes $f(\hat R^{4,3})=\{a,c\}$ and $f(\hat R^{4,3})=\{a\}$ constitute manipulations for him. This implies that $f(\hat R^{4,3})=\{c\}$. However, this means that the group $N\setminus (I\cap J)$ is not nominating, so the group $I\cap J$ is decisive and the lemma follows again. Hence, we suppose that $f(\hat R^4)=\{a,c\}$. Finally, we note that $R^1=\hat R^4$, so this proves our first auxiliary claim.\medskip

\textbf{Claim 2: $f(R^4)=\{b,c\}$}

We note that the proof of this claim is very similar to the last one, so we will keep the explanations short. First, consider the profile $\bar R^1$ shown below.\smallskip 

\begin{profile}{L{0.07\profilewidth} L{0.17\profilewidth} L{0.25\profilewidth} L{0.17\profilewidth} L{0.25\profilewidth}}
$\bar R^1$: & $i$: $c,b,a$ & $I\cap J$: $\{a,c\}, b$ & $j$: $b,a,c$ & $H$: $b,a,c$
\end{profile}

We note that the profile $\bar R^1$ is symmetric to the profile $\hat R^1$ from the last claim: we only need to exchange voters $i$ and $j$ and alternatives $a$ and $b$ to transform $\bar R^1$ to $\hat R^1$. Consequently, we can use symmetric arguments to infer that $f(\bar R^1)\subseteq \{a,c\}$ and that $I\cap J$ is decisive if $f(\bar R^1)=\{a\}$. Hence, we focus on the case that $f(\bar R^1)\in \{\{c\}, \{a, c\}\}$. 

Next, let $H=\{j_1,\dots, j_{\ell'}\}$ and consider the profiles $\bar R^{1,k}$ which are defined by $\bar R^{1,0}=\bar R^1$ and $\bar R^{1,k}$ is derived from $\bar R^{1,k-1}$ by assigning voter $j_k$ the preference relation $a,b,c$.\smallskip  

\noindent
{\medmuskip=0mu\relax
	\thickmuskip=1mu\relax
\begin{tabular}{L{0.10\profilewidth} L{0.3\profilewidth} L{0.3\profilewidth} L{0.3\textwidth}}
	$\bar R^{1,k}$: & $i$: $c,b,a$ & $I\cap J$: $\{a,c\}, b$ & $j$: $b,a,c$
\end{tabular}
\begin{tabular}{L{0.10\profilewidth} L{0.4\profilewidth} L{0.45\profilewidth}}
	& $\{j_1,\dots, j_k\}$: $a,b,c$ & $\{j_{k+1},\dots, j_{\ell'}\}$: $b,a,c$ 
\end{tabular}}\smallskip

First, it holds for every profile $\bar R^{1,k}$ that $f(\bar R^{1,k})\subseteq \{a,c\}$. Otherwise, $b\in f(\bar R^{1,k})$ and the voters $I\cap J$ can one after another deviate to the preference relation $c,b,a$. For each step, strategyproofness requires that $b$ is chosen after the deviation if it was chosen before. Hence, even if all voters in $I$ report $c,b,a$, $b$ still must be chosen. However, this conflicts with the decisiveness of $I$, which postulates that $c$ is uniquely chosen if all voters in $I$ report $c,b,a$. This proves that $f(\bar R^{1,k})\subseteq \{a,c\}$. Next, we note that if $f(\bar R^{1, k-1})\neq \{a\}$, then $f(\bar R^{1,k})\neq \{a\}$. In more detail, by our previous insight, if $f(\bar R^{1,k-1})\neq \{a\}$, then $f(\bar R^{1,k-1})=\{c\}$ or $f(\bar R^{1,k-1})=\{a,c\}$. Since voter $j_k$ prefers $a$ to $c$ in $\hat R^{1,k-1}$, strategyproofness excludes that $f(\bar R^{1,k})= \{a\}$. By combining these insights and the fact that $f(\bar R^{1,0})=\{\{c\}, \{a,c\}\}$, it follows that $f(\bar R^{1,\ell'})\in \{\{c\}, \{a,c\}\}$.

Next, let $\bar R^2$ and $\bar R^3$ denote the following profiles.

\begin{profile}{L{0.07\profilewidth} L{0.17\profilewidth} L{0.25\profilewidth} L{0.17\profilewidth} L{0.25\profilewidth}}
$\bar R^2$: & $i$: $c,b,a$ & $I\cap J$: $\{a,c\}, b$ & $j$: $a,b,c$ & $H$: $a,b,c$\\
$\bar R^3$: & $i$: $b,c,a$ & $I\cap J$: $\{a,c\}, b$ & $j$: $a,b,c$ & $H$: $a,b,c$
\end{profile}

Using the decisiveness of $J$, it can be shown that $f(\bar R^2)\subseteq \{a,c\}$ and $f(\bar R^3)\subseteq \{a,c\}$. Moreover, strategyproofness from $\bar R^{1,\ell'}$ to $\bar R^2$ implies that $f(\bar R^2)\neq \{a\}$ as voter $j$ can otherwise manipulate by deviating from $\bar R^{1,\ell'}$ to $\bar R^2$. Hence, $f(\bar R^2)=\{c\}$ or $f(\bar R^2)=\{a,c\}$. In turn, this implies that $f(\bar R^3)\neq \{a\}$, too; otherwise, voter $i$ can manipulate by deviating from $\bar R^3$ to $\bar R^2$. This shows that $f(\bar R^3)\in \{\{c\}, \{a,c\}\}$. 

For the next step, we recall that $\ell=|I\cap J|$ and $I\cap J=\{i_1,\dots, i_\ell\}$. Moreover, let $\bar R^{3,k}$ denote profiles such that $\bar R^{3,0}=\bar R^3$ and $\bar R^{3,k}$ arises from $\bar R^{3,k-1}$ by assigning voter $i_k$ the preference relation $c, \{a,b\}$.\smallskip

\noindent
{\medmuskip=0mu\relax
	\thickmuskip=1mu\relax
\begin{tabular}{L{0.10\profilewidth} L{0.3\profilewidth} L{0.3\profilewidth} L{0.3\textwidth}}
	$\bar R^{3,k}$: & $i$: $b,c,a$ & $j$: $a,b,c$ & $H$: $a,b,c$
\end{tabular}
\begin{tabular}{L{0.10\profilewidth} L{0.4\profilewidth} L{0.45\profilewidth}}
	& $\{i_1,\dots, i_k\}$: $c, \{a,b\}$ &  $\{i_{k+1}, \dots, i_\ell\}$: $\{a,c\},b$
\end{tabular}}\smallskip

We investigate the relationship between $f(\bar R^{3,k-1})$ and $f(\bar R^{3,k})$. First, if $f(\bar R^{3,k-1})=\{c\}$, then $f(\bar R^{3,k})=\{c\}$, too, as voter $i_k$ can otherwise manipulate by deviating back to $f(\bar R^{3,k-1})$. Second, if $f(\bar R^{3,k-1})=\{a,c\}$, then $c\in f(\bar R^{3,k})$ and $f(\bar R^{3,k})\neq \{a,b,c\}$ as voter $i_k$ can otherwise manipulate back to $\bar R^{3,k-1}$. This proves that $f(\bar R^{3,k})\in \{\{a,c\}, \{b,c\}, \{c\}\}$ if $f(\bar R^{3,k-1})=\{a,c\}$. Finally, if $f(\bar R^{3,k-1})=\{b,c\}$, we can again conclude that $c\in f(\bar R^{3,k})$ and $f(\bar R^{3,k})\neq\{a,b,c\}$ by weak strategyproofness from $\bar R^{3,k}$ to $\bar R^{3,k-1}$. Moreover, weak strategyproofness in the other direction implies that $f(\bar R^{3,k})\not\subseteq \{b,c\}$. Hence, in this case, only $f(R^{3,k})=\{a,c\}$ is possible. In summary, these insights combined with the fact that $f(\bar R^{3,0})\in \{\{c\}, \{b,c\}\}$ imply for the profile $\bar R^4=\bar R^{3,\ell}$ that $f(\bar R^{4})\in \{\{c\}, \{b,c\}, \{a,c\}\}$.\smallskip

\begin{profile}{L{0.07\profilewidth} L{0.17\profilewidth} L{0.25\profilewidth} L{0.17\profilewidth} L{0.25\profilewidth}}
    $\bar R^{4}$: & $i$: $b,c,a$ & $I\cap J$: $c, \{a,b\}$ &  $j$: $a,b,c$ & $H$: $a,b,c$
\end{profile}

Now, if $f(\bar R^4)=\{a,c\}$, voter $i$ can manipulate by deviating to $c, \{a,b\}$. In the resulting profile, all voters of $I$ report $c, \{a,b\}$, so the decisiveness of $I$ implies that $\{c\}$ is chosen. Since voter $i$ prefers $\{c\}$ to $\{a,c\}$, this shows that $f(\bar R^4)\neq\{a,c\}$. Next, if $f(\bar R^4)=\{c\}$, then $I\cap J$ is decisive. This follows by letting the voter $H\cup \{j\}$ one after another deviate to $b,c,a$. Strategyproofness implies for every step that $c$ stays the unique winner. However, in the resulting profile all voters in $N\setminus (I\cap J)$ report $b,c,a$, but $b$ is not chosen. Consequently, the set $N\setminus (I\cap J)$ is not nominating, so $I\cap J$ must be decisive by \Cref{lem:decisiveVSnominating}. Hence, if $f(\bar R^4)=\{c\}$, the lemma follows. Finally, this means that $f(\bar R^4)=\{b,c\}$, which proves our claim as $\bar R^4=R^4$.
\end{proof}

Based on \Cref{lem:decisiveVSnominating,lem:contraction}, we next show that there are at least one and at most two weak dictators and that the set of weak dictators is decisive for $f$. 

\begin{lemma}\label{lem:oligarchy}
    Assume that $m\geq 3$ and $n\geq 2$, and let $f$ denote a weakly strategyproof and \emph{ex post} efficient even-chance SDS. Moreover, let $G$ denote the set of weak dictators of $f$. It holds that $1\leq |G|\leq 2$ and that $G$ is decisive for $f$. 
\end{lemma}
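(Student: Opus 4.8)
The plan is to translate everything into the language of decisive and nominating groups and combine the duality of \Cref{lem:decisiveVSnominating} with the contraction of \Cref{lem:contraction}, reserving the axioms of $f$ for the one step that genuinely needs them. First I would record the basic correspondence: for a single voter $v$, taking $H=N\setminus\{v\}$ in \Cref{lem:decisiveVSnominating} shows that $v$ is a weak dictator (i.e.\ $\{v\}$ is nominating) if and only if $N\setminus\{v\}$ is \emph{not} decisive; equivalently, $v\notin G$ iff $N\setminus\{v\}$ is decisive. I would also note a monotonicity property: if $v$ is a weak dictator, then every group $P\ni v$ is nominating, because the defining condition $f(R)\cap T_v(R)\neq\emptyset$ holds for \emph{all} profiles, in particular for those in which all members of $P$ report the same preference. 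Dualizing once more via \Cref{lem:decisiveVSnominating}, this yields the structural fact that every proper nonempty decisive group contains $G$.

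Next I would establish $|G|\ge 1$. If $G=\emptyset$, then $N\setminus\{v\}$ is decisive for every voter $v$. Starting from these groups of size $n-1$ and repeatedly applying \Cref{lem:contraction} to pairs that differ in exactly one voter, I would contract down to a decisive singleton $\{w\}$; the size conditions $1<|I|<n$ are met at every stage once $n\ge 3$ (and for $n=2$ the group $N\setminus\{v\}$ is already a singleton). A decisive singleton is a weak dictator, since $f(R)\subseteq T_w(R)$ together with $f(R)\neq\emptyset$ forces $f(R)\cap T_w(R)\neq\emptyset$; this contradicts $G=\emptyset$.

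To obtain that $G$ is decisive, I would write $G=\bigcap_{h\in N\setminus G}(N\setminus\{h\})$ and contract the decisive groups $N\setminus\{h\}$ (for the non--weak-dictators $h$) down to $G$ by iterated application of \Cref{lem:contraction}. All intermediate groups are of the form $N\setminus S$ with $S\subseteq N\setminus G$ and are decisive by induction, and the size hypotheses hold once $1\le|G|\le n-2$. The boundary cases are handled directly: if $|N\setminus G|=1$ then $G=N\setminus\{h\}$ is already decisive, and the case $G=N$ (possible only for $n=2$, or otherwise excluded by the upper bound below) follows at once from \emph{ex post} efficiency, since if all voters report a common top set $T$ then every alternative outside $T$ is Pareto-dominated.

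The hard part is the upper bound $|G|\le 2$, and the combinatorics of \Cref{lem:decisiveVSnominating,lem:contraction} are provably insufficient for it: since every proper decisive group contains $G$, no application of contraction can produce a decisive group smaller than $G$, so the bound cannot come from the lattice of decisive groups. I would therefore argue by contradiction from the axioms of $f$. Assuming three distinct weak dictators, I would construct a chain of profiles---in the spirit of the case analyses behind \Cref{thm:EvenchanceCondImp}---in which the three favorite alternatives are simultaneously forced into the even-chance choice set, and then exhibit a deviation by one weak dictator that leads, via \emph{ex post} efficiency and the stochastic-dominance extension on $\mathcal{R}^N$, to a choice set it strictly prefers, contradicting weak strategyproofness. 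Designing the profiles so that the manipulating voter can improve its outcome while the two other weak dictators keep their alternatives locked in is the delicate core of the argument, and is exactly what forbids a third weak dictator; combined with the structural fact that every decisive group contains $G$, this yields $1\le|G|\le 2$ and completes the lemma.
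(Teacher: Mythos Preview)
Your plan for the lower bound $|G|\ge 1$ and for the decisiveness of $G$ is exactly the paper's argument: apply \Cref{lem:decisiveVSnominating} to turn ``$v$ is not a weak dictator'' into ``$N\setminus\{v\}$ is decisive'', then contract via \Cref{lem:contraction}. Your structural observation that every proper decisive group contains $G$ is correct and nicely explains why contraction alone cannot yield the upper bound.

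The one place where your proposal remains a sketch is $|G|\le 2$, and here you overestimate the difficulty. No ``chain of profiles in the spirit of \Cref{thm:EvenchanceCondImp}'' is needed; two profiles suffice. With three weak dictators $i,j,k$ and $N^-=N\setminus\{i,j,k\}$, take
\[
R^1:\quad i\colon a,b,c\quad j\colon b,c,a\quad k\colon c,\{a,b\}\quad N^-\colon \{a,b,c\},
\]
\[
R^2:\quad i\colon a,b,c\quad j\colon \{a,b\},c\quad k\colon c,\{a,b\}\quad N^-\colon \{a,b,c\}.
\]
Weak dictatorship forces $a,b,c$ all into $f(R^1)$, so $f(R^1)=\{a,b,c\}$ by \emph{ex post} efficiency. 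In $R^2$, alternative $b$ is Pareto-dominated by $a$, and the weak dictators $i$ and $k$ still force $a$ and $c$ in, so $f(R^2)=\{a,c\}$. But voter $j$ in $R^2$ has $\{a,b\}\succ_j c$, hence prefers $\{a,b,c\}$ (two thirds on top) to $\{a,c\}$ (one half on top), and can manipulate from $R^2$ to $R^1$. That is the entire argument; the ``delicate core'' you anticipate collapses once you let the manipulator be a weak dictator who makes his own top alternative Pareto-dominated by another weak dictator's favorite.
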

\begin{proof}
    Let $f$ denote an even-chance SDS that satisfies all axioms of the lemma and let $G$ denote the set of weak dictators of $f$. We will split up the lemma in three separate claims.\medskip

    \textbf{Claim 1: $G\neq\emptyset$}
    
    Assume for contradiction that $G=\emptyset$, i.e., that there is no weak dictator for $f$. This means that no singleton set is nominating for $f$, so \Cref{lem:decisiveVSnominating} implies that every set $G\subseteq N$ with $|G|=n-1$ is decisive. Using \Cref{lem:contraction}, it therefore follows that every set of size $n-2$ is also decisive. Moreover, we can repeat this argument to infer that every set of size $1$ is decisive. However, this is impossible as two disjoints sets of voters cannot be simultaneously decisive. Hence, $G\neq\emptyset$.\medskip

    \textbf{Claim 2: $|G|\leq 2$}

    Suppose for contradiction that $|G|\geq 3$, let $i,j,k\in G$ denote three different weak dictators of $f$, and let $N^-=N\setminus\{i,j,k\}$ be the set of remaining voters. Furthermore, consider the following two preference profiles; as usual, all additional alternatives are bottom-ranked by all voters.

    \noindent
    \setlength{\tabcolsep}{3pt}
    \begin{profile}{L{0.07\profilewidth} L{0.15\profilewidth} L{0.2\profilewidth} L{0.2\profilewidth} C{0.3\profilewidth}}
    $R^{1}$: & $i$: $a,b,c$ & $j$: $b,c,a$ & $k$: $c, \{a,b\}$ & $N^-$: $\{a,b,c\}$\\
    $R^{2}$: & $i$: $a,b,c$ & $j$: $\{a,b\},c$ & $k$: $c, \{a,b\}$ & $N^-$: $\{a,b,c\}$
\end{profile}

    First, only $a$, $b$, and $c$ can be chosen in $R^1$ and $R^2$ due to \emph{ex post} efficiency. Next, since voters $i$, $j$, and $k$ are weak dictators for $f$, it follows that $f(R^1)=\{a,b,c\}$. Similarly, $f(R^2)=\{a,c\}$ as voter $i$ top-ranks $a$, voter $k$ top-ranks $c$, and $a$ Pareto-dominates $b$ in $R^2$. However, this means that voter $j$ can manipulate by deviating from $R^2$ to $R^1$ as he prefers $\{a,b,c\}$ to $\{a,c\}$. This is the desired contradiction, so $|G|\leq 2$.\medskip

    \textbf{Claim 3: $G$ is decisive for $f$}

    Finally, we show that $G$ is decisive for $f$. To this end, we note that the voters in $N\setminus G$ are no weak dictators for $f$, which means that the sets $\{i\}$ for $i\in N\setminus G$ are not nominating. By \Cref{lem:decisiveVSnominating}, it thus follows that the sets $N\setminus \{i\}$ are decisive for all $i\in N\setminus G$. Now, if $G=N\setminus \{i\}$ for some $i\in N$, this shows that $G$ is decisive. On the other hand, if $|G|\leq n-2$, \Cref{lem:decisiveVSnominating} shows that every set $H^1$ with $G\subseteq H^1$ and $|H^1|=n-1$ that $H^1$ is decisive. Now, using \Cref{lem:contraction}, it also follows for all sets $H^2$ with $G\subseteq H^2$ and $|H^2|=n-2$ that $H^2$ is decisive. In more detail, we can choose two arbitrary voters $i,j\in N\setminus H^2$ to infer that the sets $H^2\cup \{i\}$ and $H^2\cup \{j\}$ are decisive. \Cref{lem:contraction} then implies that $H^2$ is decisive. Moreover, it is easy to see that one can continue this type of reasoning to show that every set $H^k$ with $G\subseteq H^k$ is decisive. In particular, this shows that $G$ is decisive, too. 
\end{proof}

Finally, we are ready to prove \Cref{thm:bidicatorship}. In particular, we will show that only alternatives from the set of weak dictators can be chosen by a weakly strategyproof and \emph{ex post} efficient even-chance SDS.

\bidictatorship*
\begin{proof}
Let $f$ denote an even chance SDS that satisfies weak strategyproofness and \emph{ex post} efficiency, and let $G$ denote the set of weak dictators of $f$. By \Cref{lem:oligarchy}, the set $G$ has size $1$ or $2$ and is decisive. Now, if $G=\{i\}$ for some voter $i$, this means that $i$ is a dictator for $f$, so $f$ is dictatorial in this case. Hence, suppose that $G=\{i,j\}$ for two distinct voters $i,j$. 
We will show that $f(R)\subseteq T_i(R)\cup T_j(R)$ for all profiles $R$. For this, we consider multiple cases that depend on the relation between $T_i(R)$ and $T_j(R)$.\medskip

\textbf{Case 1: $T_j(R)\cap T_i(R)=\emptyset$}

For the first case, we consider an arbitrary profile $R$ such that $T_i(R)\cap T_j(R)=\emptyset$. In this case, we aim to show that $|f(R)\cap T_i(R)|\geq |f(R)\setminus T_i(R)|$. Because a symmetric argument also shows that $|f(R)\cap T_j(R)|\geq |f(R)\setminus T_j(R)|$, we can the deduce that $f(R)\subseteq T_i(R)\cup T_j(R)$. In more detail, if there is an alternative $x\in f(R)\setminus (T_i(R)\cup T_j(R))$, our inequalities implies that 
\begin{align*}
    |f(R)\cap T_i(R)|&\geq |f(R)\setminus T_i(R)|\\
    &>|f(R)\cap T_j(R)|\\
    &\geq |f(R)\setminus T_j(R)|\\
    &>|f(R)\cap T_i(R)|. 
\end{align*} 

This is a contradiction, so proving that $|f(R)\cap T_i(R)|\geq |f(R)\setminus T_i(R)|$ implies that $f(R)\subseteq T_i(R)\cup T_j(R)$. 

To prove that $|f(R)\cap T_i(R)|\geq |f(R)\setminus T_i(R)|$, we assume for contradiction that $|f(R)\cap T_i(R)|< |f(R)\setminus T_i(R)|$. In particular, we note that this inequality is true if and only if $\frac{|f(R)\setminus T_i(R)|}{|f(R)|}>\frac{1}{2}$. Next, we enumerate the voters $N\setminus \{i\}$ by $j_1,\dots, j_{n-1}$ (note that the weak dictator $j$ is among these voters). Moreover, we define the sequence of profiles $R^0,\dots, R^{n-1}$ by $R^0=R$ and $R^k$ differs from $R^{k-1}$ by assigning voter $j_k$ the preference relation $A\setminus T_i(R), T_i(R)$. By weak strategyproofness, it is easy to infer that $\frac{|f(R^k)\setminus T_i(R)|}{|f(R^k)|}\geq \frac{|f(R^{k-1})\setminus T_i(R)|}{|f(R^{k-1})|}$ because voter $j_k$ otherwise can manipulate by deviating from $R^k$ to $R^{k-1}$. Since $\frac{|f(R^0)\setminus T_i(R)|}{|f(R^0)|}>\frac{1}{2}$, it follows that $\frac{|f(R^{n-1})\setminus T_i(R)|}{|f(R^{n-1})|}>\frac{1}{2}$, too. Equivalently, this means that $|f(R^{n-1})\setminus T_i(R)|>|f(R^{n-1})\cap T_i(R)|$. 

For the last step, let $X$ denote the set of alternatives that voter $i$ ranks second. By \emph{ex post} efficiency, it follows that $f(R^{n-1})\subseteq X\cup T_i(R)$ because $x\succ_i y$ and $x\sim_k y$ for all $k\in N\setminus \{i\}$ and alternatives $x\in X$, $y\in A\setminus (X\cup T_i(R))$. Finally, let $a$ denote an arbitrary alternative in $T_i(R)$ and $b$ denote an arbitrary alternative in $X$, and consider the profile $R'$ derived from $R^{n-1}$ by assigning voter $i$ a preference relation where $a$ is his uniquely most preferred alternative and $b$ his uniquely second-most preferred alternative. By \emph{ex post} efficiency, it follows that $f(R')\subseteq \{a,b\}$ because $a$ Pareto-dominates every other alternative in $T_i(R)$ and $b$ Pareto-dominates every other alternative in $A\setminus T_i(R)$. Moreover, since voters $i$ and $j$ are weak dictators for $f$, we can conclude that $f(R')=\{a,b\}$. However, this means that voter $i$ can manipulate by deviating from $R^{n-1}$ to $R$ because $\frac{|f(R')\cap T_i(R)|}{|f(R')|}=\frac{1}{2}>\frac{|f(R^{n-1})\cap T_i(R)|}{|f(R^{n-1})|}$ and $\frac{|f(R')\cap (T_i(R)\cup X)|}{|f(R')|}=1=\frac{|f(R^{n-1})\cap (T_i(R)\cup X)|}{|f(R^{n-1})|}$. This contradicts the strategyproofness of $f$, so the initial assumption that $|f(R)\cap T_i(R)|< |f(R)\setminus T_i(R)|$ is wrong.\medskip

\textbf{Case 2: $T_i(R)\subseteq T_j(R)$ or $T_j(R)\subseteq T_i(R)$}

For the second case, consider an arbitrary profile $R$ such that $T_i(R)\subseteq T_j(R)$ (we note that the case that $T_j(R)\subseteq T_i(R)$ is symmetric). Moreover, assume for contradiction that there is an alternative $x\in f(R)\setminus T_j(R)$. In this case, let $R'$ denote the profile derived from $R$ by assigning voter $j$ the same preference relation of voter $i$. By the decisiveness of $G=\{i,j\}$ for $f$, it follows that $f(R')\subseteq T_i(R)\subseteq T_j(R)$. However, this means that voter $j$ can manipulate by deviating from $R$ to $R'$ because a subset of voter $j$' favorite alternatives is chosen for $R'$ but not for $R$. Hence, the initial assumption that $f(R)\not\subseteq T_j(R)$ is wrong, which proves for this case that $f$ is bidictatorial.\medskip

\textbf{Case 3: $T_i(R)\setminus T_j(R)\neq\emptyset$, $T_j(R)\setminus T_i(R)\neq\emptyset$, and $T_i(R)\cap T_j(R)\neq\emptyset$}

For the last case, we consider a profile $R$ such that $T_i(R)\setminus T_j(R)\neq\emptyset$, $T_j(R)\setminus T_i(R)\neq\emptyset$, and $T_i(R)\cap T_j(R)\neq\emptyset$. In this case, let $R'$ denote a profile such that (i) voter $i$ top-ranks the alternatives in $T_i(R)\cap T_j(R)$, (ii) voter $j$ reports the same preference relation as in $R$, and (iii) all voters $j\in N\setminus \{i,j\}$ bottom-rank $T_i(R)\cup T_j(R)$. We note for the profile $R'$ that $T_i(R')\subseteq T_j(R')$, so Case 2 requires that $f(R')\subseteq T_j(R)$. Moreover, the alternatives $x\in T_i(R)\cap T_j(R)$ Pareto-dominate the alternatives $y\in (T_i(R)\cup T_j(R))\setminus (T_i(R)\cap T_j(R))$, so $f(R')\subseteq T_i(R)\cap T_j(R)$. Next, let $\hat R$ denote the profile derived from $R'$ by assigning voter $i$ the same preference relation as in $R$. Strategyproofness requires that $f(\hat R)\subseteq T_i(R)$ as otherwise, voter $i$ can manipulate from $\hat R$ to $R'$. Finally, we can now let the voters $k\in N\setminus \{i,j\}$ one after another deviate to the preference relation that they report in $R$. Since all these voters bottom-rank $T_i(R)\cup T_j(R)$, weak strategyproofness requires for each step that a subset of $T_i(R)\cup T_j(R)$ is chosen after the manipulation if it was before the manipulation. Since this sequence results in the profile $R$, we derive that $f(R)\subseteq T_i(R)\cup T_j(R)$.
\end{proof}

\end{document}